\colorlet{shadecolor}{black!15}
\numberwithin{equation}{section}
\theoremstyle{plain}
\newtheorem{theorem}{Theorem}
\newtheorem{lemma}[theorem]{Lemma}
\theoremstyle{definition}
\theoremstyle{remark}
\newcommand{\LHS}{\operatorname{LHS}}
\newcommand{\func}[2]{#1 \left( #2 \right)}
\newcommand{\fg}{{\mathfrak g}}
\newcommand{\fsl}{{\mathfrak{sl}}}
\begin{document}

\sloppy

\title[$q$-identities for parafermion theories]{$q$-identities for parafermion theories}

\author[Peter Bouwknegt]{Peter Bouwknegt}
\address[Peter Bouwknegt]{
Mathematical Sciences Institute,
Australian National University, 
Canberra, ACT 2601, Australia}
\email{peter.bouwknegt@anu.edu.au}

\author[Shane Chern]{Shane Chern}
\address[Shane Chern]{
Department of Mathematics and Statistics,
Dalhousie University, 
Halifax, NS, B3H 4R2, Canada}
\email{chenxiaohang92@gmail.com, xh375529@dal.ca}

\author[Bolin Han]{Bolin Han}
\address[Bolin Han]{
Mathematical Sciences Institute,
Australian National University, 
Canberra, ACT 2601, Australia}
\email{bolin.han@anu.edu.au}

\begin{abstract}
In this paper, we will prove a series of $q$-identities suggested by the realisation of certain conformal field theories by so-called `coupled free fermions.'  We will consider $q$-series arising from coupled free fermions constructed by the parafermion coset construction as well as from scaled root lattices, and some interesting relations between the two.
\end{abstract}


\maketitle

\section{Introduction}

It has been demonstrated in the literature that combining number theory and mathematical physics techniques can produce useful and interesting, or even unexpected results for both areas. Especially, $q$-identities have played important roles in conformal field theories and vice versa. For instance, we have the famous Jacobi triple product directly related to the Weyl denominator formula, the Hecke type identities motivated by the string functions of affine Lie algebra developed in \cite{Kac1980} and then proved by the classical theory of $q$-series in \cite{And1984}, and more recently some generalised Rogers--Ramanujan identities derived from a theory of parafermions of the type $(\widehat{\fsl_{n+1}})_2/\widehat{\mathfrak{u}(1)}^n$ in \cite{Bel2013}. 

In this paper, we will reveal some curious 
dentities, 
including some of Rogers--Ramanujan type, observed from universal chiral partition functions (`fermionic partition functions') of parafermions of the type $(\widehat{\fsl_{n+1}})_2/\widehat{\mathfrak{u}(1)}^n$ and characters of parafermions constructed from scaled root lattices of the type $A_n/\sqrt{2}$. They are then rigorously proved using various techniques for basic hypergeometric series. With these identities, we are also able to derive explicit expressions of some string functions purely in terms of $\eta$-functions, which can be hard to compute even with the machinery given in \cite{Kac1984}, in a relatively straightforward manner. The success of proving these specialised $q$-identities further motivates us to investigate more general $q$-identities. This work once again illustrates the vibrant collaboration between number theory and mathematical physics.

The physical motivation behind the $q$-series arising in this paper is discussed in detail in \cite{Han23, BH24}. Briefly speaking, one can generalise the idea of (uncoupled) free fermions to coupled free fermions and study the exclusion statistics of coupled free fermions in conformal field theories by constructing particular bases of their representations, which lead to computation of their characters in terms of universal chiral partition functions.

\section{String functions for level-2 $\fsl_n$ parafermions}\label{sec:strfun}

\subsection{String functions and parafermionic characters}

Let $\fg$ be a finite-dimensional Lie algebra. Denote the (untwisted) affine Kac--Moody algebra at level $k$, associated to $\fg$, by $\widehat{\mathfrak{g}}_k$. Let $\Lambda$ be an integrable dominant weight of $\widehat{\mathfrak{g}}_k$, and $\lambda$ be a maximal weight in the representation with highest weight $\Lambda$. Then the so-called \emph{string functions} $c^{\Lambda}_{\lambda}$ are defined in \cite{Kac1980} by Kac and Peterson: 
\[c^{\Lambda}_{\lambda}(\tau)=q^{s_{\Lambda}(\lambda)}\sum_{n\ge0}\operatorname{mult}_{\Lambda}(\lambda-n\delta)e^{2\pi i\tau},\]
where 
\[s_{\Lambda}(\lambda)=\dfrac{|\Lambda+\rho|^2}{2(k+h^{\vee})}-\dfrac{|\rho|^2}{2h^{\vee}}-\dfrac{|\lambda|^2}{2k},\]
with $\rho$ being half of the sum of all positive roots of $\hat{\fg}$, and $h^{\vee}$ the dual Coxeter number of $\hat{\fg}$. The modular transformation law for string functions is provided in \cite{Kac1980} as well.

It is also known that given $\widehat{\mathfrak{g}}_k$, one can construct a \emph{parafermionic conformal field theory} as described in \cite{Gep1987}. Algebraically, this theory can be thought of the Goddard--Kent--Olive (GKO) coset construction of the (generalised) Wess--Zumino--Witten (WZW) model (see, for example, \cite{Fra1997}) based on $\widehat{\mathfrak{g}}_k$ constrained by $\widehat{\mathfrak{u}(1)}^{n}$, where $n=\operatorname{rank}(\mathfrak{g})$; see \cite{Gep1989}. We denote such a construction as $\operatorname{PF}_k(\fg)$. Explicitly, we define the parafermions $\psi_{\alpha}$ for every $\alpha\in Q$, where $Q$ is the root lattice of $\mathfrak{g}$. Meanwhile, we identify $\psi_{\alpha}=\psi_{\beta}$ if $\alpha\equiv\beta\bmod kQ$. The operator product expansions (OPEs) of these parafermions are 
\begin{align}
    \psi_{\alpha}(z)\psi_{-\alpha}(w)\sim\dfrac{1}{(z-w)^{2-|\alpha|^2/k}},\label{eq:paraOPE-1}\\
     \psi_{\alpha}(z)\psi_{\beta}(w)\sim\dfrac{c_{\alpha,\beta}\psi_{\alpha+\beta}(w)}{(z-w)^{1+(\alpha,\beta)/k}},\label{eq:paraOPE-2}
\end{align}
where $c_{\alpha,\beta}$ are some constants to be fixed.

We notice that when $k=2$ and $\fg$ is simply-laced, this construction is of particular interest because then the OPEs \eqref{eq:paraOPE-1} and \eqref{eq:paraOPE-2} will reduce to
\begin{align}
    &\psi_{\alpha}(z)\psi_{\alpha}(w)\sim\dfrac{1}{(z-w)},\\
     &\psi_{\alpha}(z)\psi_{\beta}(w)\sim\dfrac{c_{\alpha,\beta}\psi_{\alpha+\beta}(w)}{(z-w)^{1/2}},
\end{align}
if $\alpha+\beta\bmod 2Q$ is a root of $\fg$, so we have real free fermions $\psi_{\alpha}$ that couple to each other in a certain way. We may call such a system a ``\emph{coupled free fermion system}.'' 

The modules of $\operatorname{PF}_k(\fg)$ are again characterised by $\Lambda$ and $\lambda$, denoted by $L^{\Lambda}_{\lambda}$. The modules $L^{\Lambda}_{\lambda}$ and $L^{\Lambda'}_{\lambda'}$ are equivalent if $$\lambda'\equiv \lambda\pmod{kQ}$$ 
or $$\Lambda'=\sigma\Lambda\;\;\text{and}\;\;\lambda'=\sigma\lambda,$$
where $\sigma$ is an automorphism of the Dynkin diagram of $\hat{\fg}$. The \emph{characters} (or \emph{partition functions}) of the modules are given by
\[b^{\Lambda}_{\lambda}(\tau):=\operatorname{Tr}e^{\left(L_0-c/24\right)2\pi i\tau}\]
restricting to the module $L^{\Lambda}_{\lambda}$, where $L_0$ is the operator that diagonalises the fock space of $\operatorname{PF}_k(\fg)$. It was proven in \cite{Gep1987} that the parafermionic characters $b^{\Lambda}_{\lambda}$ of $\operatorname{PF}_k(\fg)$ are simply related to the 
string functions $c^{\Lambda}_{\lambda}$ by
\[b^{\Lambda}_{\lambda}(\tau)=\eta(\tau)^{n} c^{\Lambda}_{\lambda}(\tau),\]
where $\eta(\tau)$ is the Dedekind eta function.

Given such a close relation between string functions of $\widehat{\fg}_k$ and characters of $\operatorname{PF}_k(\fg)$, we may study them all at once and take advantage of either side as needed.

In the rest of this section, we list a few examples of string functions and parafermionic characters for later reference in this paper. In these examples, we write $q=e^{2\pi i\tau}$ by convention and use Dynkin labels to represent the weights.

\subsection{$\fsl_2$}
The string functions of $(\widehat{\mathfrak{sl}_2})_2$ have been explicitly computed in \cite{Kac1984} (Example 3 in Section 4.6). 
Hence we have explicit expressions for all characters $b^{\Lambda}_{\lambda}$ of $\operatorname{PF}_2(\mathfrak{sl}_2)$ as follows:
\begin{align*}
    b^{20}_{20}(\tau)&=\frac{1}{2}\big(\eta(2\tau)^{-1}\eta(\tau/2)^{-1}\eta(\tau)^2+\eta(\tau)^{-1}\eta(\tau/2)\big),\\
    b^{20}_{02}(\tau)&=\frac{1}{2}\big(\eta(2\tau)^{-1}\eta(\tau/2)^{-1}\eta(\tau)^2-\eta(\tau)^{-1}\eta(\tau/2)\big),\\
    b^{11}_{11}(\tau)&=\eta(\tau)^{-1}\eta(2\tau),
\end{align*}

\subsection{$\fsl_3$}

The string functions of $(\widehat{\mathfrak{sl}_3})_2$ have been explicitly computed in \cite{Kac1984} 
(Example 4 in Section 4.6). Hence we have explicit expressions for all characters $b^{\Lambda}_{\lambda}$ of $\operatorname{PF}_2(\mathfrak{sl}_3)$. 
We claim that they can be rewritten as follows:
\begin{subequations}
\begin{align}
   b_{200}^{200}(\tau) &=\eta(\tau)^{-2}q^{\frac{1}{30}}\big((q^\frac{1}{2};q^\frac{1}{2})_\infty(q,q^{\frac{3}{2}},q^{\frac{5}{2}};q^{\frac{5}{2}})_\infty+ q^{\frac{1}{2}}(q^2;q^2)_\infty(q^2,q^{8},q^{10};q^{10})_\infty\big),\label{eq:b200200}\\
	b_{011}^{200}(\tau)&=\eta(\tau)^{-2}q^{\frac{8}{15}}(q^2;q^2)_\infty(q^2,q^{8},q^{10};q^{10})_\infty,\label{eq:b200011}\\
	b_{110}^{110}(\tau)&=\eta(\tau)^{-2}q^{\frac{2}{15}}(q^2;q^2)_\infty(q^4,q^{6},q^{10};q^{10})_\infty,\label{eq:b110110}\\
	b_{002}^{110} (\tau)&=\eta(\tau)^{-2}q^{\frac{2}{15}}\big((q^2;q^2)_\infty(q^4,q^{6},q^{10};q^{10})_\infty-(q^\frac{1}{2};q^\frac{1}{2})_\infty(q^{\frac{1}{2}},q^{2},q^{\frac{5}{2}};q^{\frac{5}{2}})_\infty\big),\label{eq:b110002}
\end{align}
\end{subequations}
where the \emph{$q$-Pochhammer symbols} are defined for $n\in\mathbb{N}\cup\{\infty\}$:
\begin{align*}
	(a;q)_n&:=\prod_{j=0}^{n-1} (1-aq^{j} ),\\
	(a_1, a_2, \ldots, a_r;q)_{n} &:= (a_1;q)_n (a_2;q)_n \cdots (a_r;q)_n.
\end{align*}
Note that for notational convenience, we often write hereinafter, with $E$ standing for Euler, that
\begin{align*}
E(q):=(q;q)_\infty.
\end{align*}
It is worth pointing out that the above four expressions will be examined as part of the proof of Theorem \ref{th:sl3UCPF}.

\subsection{$\fsl_4$} 
We have found very nice explicit expressions for string functions of $(\widehat{\mathfrak{sl}_4})_2$, and hence also characters of $\operatorname{PF}_2(\mathfrak{sl}_4)$, purely in terms of $\eta$-functions.
\begin{theorem}\label{th:sl4strfun}
Let
\begin{equation*}
    \arraycolsep=1.4pt\def\arraystretch{2.2}
    \begin{array}{lp{1.6cm}l}
         \multicolumn{3}{l}{b^{2000}_{2000}(\tau)+b^{2000}_{0020}(\tau)=\dfrac{\eta(4\tau)^4\eta(6\tau)^8}{\eta(\tau)\eta(2\tau)^4\eta(3\tau)^3\eta(12\tau)^4} +\dfrac{\eta(2\tau)^8\eta(3\tau)\eta(12\tau)^4}{\eta(\tau)^5\eta(4\tau)^4\eta(6\tau)^4},}\\
        b^{2000}_{2000}(\tau)-b^{2000}_{0020}(\tau) = \dfrac{\eta(\tau)^2}{\eta(2\tau)^2}, & & b^{2000}_{0101}(\tau) = \dfrac{\eta(2\tau)^2\eta(6\tau)^2}{\eta(\tau)^3\eta(3\tau)},\\
        b^{0101}_{2000}(\tau) = b^{0101}_{0002}(\tau) = \dfrac{3\eta(6\tau)^3}{\eta(\tau)^2\eta(2\tau)}, & & b^{0101}_{0101}(\tau) = \dfrac{\eta(2\tau)^3\eta(3\tau)^2}{\eta(\tau)^4\eta(6\tau)},\\
        b^{1100}_{1100}(\tau)=\dfrac{\eta(4\tau)^5}{\eta(\tau)^3\eta(8\tau)^2}, & & b^{1100}_{0011}(\tau)=\dfrac{2\eta(2\tau)^2\eta(8\tau)^2}{\eta(\tau)^3\eta(4\tau)}.
    \end{array}
\end{equation*}
Then each function $\eta(\tau)^{-3}b^{\Lambda}_{\lambda}$ with $(\Lambda,\lambda)$ appearing above is an expression for the corresponding string function $c^{\Lambda}_{\lambda}$ of $(\widehat{\mathfrak{sl}_4})_2$. Hence each $b^{\Lambda}_{\lambda}$ is an expression 
for the character of the module $L^{\Lambda}_{\lambda}$ of $\operatorname{PF}_2(\mathfrak{sl}_4)$.
\end{theorem}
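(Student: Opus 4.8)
The plan is to reduce each displayed identity to a single concrete $q$-series identity and then, where necessary, to a finite coefficient check, proceeding in three stages.

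First I would replace each character $b^{\Lambda}_{\lambda}$ of $\operatorname{PF}_2(\fsl_4)$ by an explicit $q$-series, using whichever of the following realisations is most convenient for the given $(\Lambda,\lambda)$: (i) the GKO branching description, in which $b^{\Lambda}_{\lambda}$ is the branching function of $(\widehat{\fsl_4})_2$ over $\widehat{\mathfrak u(1)}^3$, obtained by expanding the level-$2$ Weyl--Kac character of $\fsl_4$ and dividing out the three free-boson theta functions, yielding a rank-$3$ theta/lattice sum; (ii) the realisation via the scaled root lattice $A_3/\sqrt{2}$ alluded to in the abstract, whose characters are manifestly $\eta$-theta quotients; and (iii) the coupled-free-fermion description of \cite{Han23, BH24}, which presents $b^{\Lambda}_{\lambda}$ as a universal chiral (``fermionic'') partition function, i.e.\ a finite multiple $q$-sum with a quadratic-form exponent and a Gaussian summand. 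In any of these forms the two coincidences in the statement are immediate: $b^{0101}_{2000}=b^{0101}_{0002}$ and the symmetric role of a weight and its conjugate follow from the equivalence $L^{\Lambda}_{\lambda}\cong L^{\sigma\Lambda}_{\sigma\lambda}$ under the order-two Dynkin automorphism recalled above.

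Second I would prove that each such sum collapses to the stated $\eta$-quotient. Since $b^{\Lambda}_{\lambda}=\eta(\tau)^{3}c^{\Lambda}_{\lambda}$, the targets for the $c^{\Lambda}_{\lambda}$ are themselves $\eta$-quotients and the task is a small family of product--sum identities. For the ``diagonal'' entries $b^{1100}_{1100}$, $b^{0101}_{0101}$, $b^{2000}_{0101}$ and the combination $b^{2000}_{2000}-b^{2000}_{0020}$ I expect the sums to linearise after a single application of the Jacobi triple product or the quintuple product, exactly as in the $\fsl_2$ case; for $b^{0101}_{2000}$ and $b^{1100}_{0011}$ one must additionally track the multiplicity factors $3$ and $2$, which I would obtain by counting the maximal weights $\lambda'\equiv\lambda\pmod{2Q}$ producing the same string function. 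The genuinely two-term line for $b^{2000}_{2000}+b^{2000}_{0020}$ would then follow by adding the evaluation of $b^{2000}_{2000}-b^{2000}_{0020}$ to a Rogers--Ramanujan/Andrews--Gordon-type identity for $2\,b^{2000}_{0020}$; this is where the basic-hypergeometric machinery developed later in the paper (Bailey pairs, Watson's transformation) does the essential work.

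Third, as an independent and self-contained check, I would observe that every function appearing in the statement is a weight-$0$ modular function on a common congruence subgroup --- one expects $\Gamma_0(24)$, possibly with a character, the level being forced by the arguments $\tau,2\tau,\dots,12\tau$ and by the Kac--Peterson modularity of level-$2$ string functions \cite{Kac1980, Kac1984}. The claimed $\eta$-quotients are holomorphic and nonvanishing on $\HH$, with orders at the cusps given by the standard valuation formula; the string-function sides are holomorphic on $\HH$ with leading exponent $s_{\Lambda}(\lambda)$. Hence, after multiplying both sides by a fixed $\eta$-quotient of positive weight that clears all cusp poles, the identity becomes an equality in a finite-dimensional space of holomorphic modular forms, and it suffices to match Fourier coefficients up to the corresponding Sturm bound --- a finite computation, carried out in Mathematica as for the $\fsl_3$ expressions \eqref{eq:b200200}--\eqref{eq:b110002}. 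The step I expect to be the main obstacle is the second one for the two-term line, namely locating and rigorously proving the Rogers--Ramanujan-type identity that turns the fermionic sum for $b^{2000}_{0020}$ into a single infinite product; lesser difficulties are pinning down the multiplicities $2$ and $3$ from the representation theory rather than by inspection of $q$-expansions, and assembling the full list of cusp valuations needed to identify the finite-dimensional space used in the third step.
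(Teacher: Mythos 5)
Your proposal is correct, and its third stage is precisely the proof the paper gives: the paper establishes Theorem \ref{th:sl4strfun} by (i) checking that the candidate $\eta$-quotients satisfy the $S$- and $T$-transformation laws that Theorem A(1) of Kac--Peterson forces on the string functions, and (ii) invoking Theorem A(4) of \cite{Kac1984} (with the noted correction $\Gamma(Nm)\to\Gamma_0(Nm)$) together with Sturm's bound to reduce the identity $\eta(\tau)^{-3}b^{\Lambda}_{\lambda}=c^{\Lambda}_{\lambda}$ to a finite coefficient comparison, carried out to order $q^{100}$ in \textit{SageMath}/\textit{Mathematica}. So what you present as an ``independent check'' is in fact the entire argument in the paper, and it is complete on its own. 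Your first two stages --- deriving each $\eta$-quotient constructively from a lattice sum or a UCPF via triple/quintuple products and Rogers--Ramanujan-type identities --- describe a genuinely different route that the paper deliberately does \emph{not} take for this theorem; instead those $q$-series identities are isolated as Theorems \ref{th:sl4UCPF} and \ref{th:A2lat} (see \eqref{eq:VOA-1-1}, \eqref{eq:VOA-1-2} and \eqref{eq:VOA-2}), and the paper notes only that the expressions in Theorem \ref{th:sl4strfun} ``arise in the way of proving'' those results. The trade-off is as you anticipate: the constructive route would explain \emph{why} the $\eta$-quotients appear and would tie the string functions directly to the fermionic sums, but it founders on exactly the obstacle you flag (a missing single-product evaluation for the two-term line, and the fact that the UCPFs only give the particular linear combinations $b^{2000}_{2000}+b^{2000}_{0020}+6b^{2000}_{0101}$ etc., not the individual characters), whereas the modularity-plus-Sturm argument is short, rigorous, and computer-verifiable. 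One small caution on your stage 3: to apply Sturm's bound you must control the string-function side at \emph{all} cusps, not just $i\infty$; this is exactly what Kac--Peterson's Theorem A(4) supplies, so your appeal to it should be made load-bearing rather than parenthetical.
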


\begin{proof}
Given that these expressions for $b^{\Lambda}_{\lambda}$ are all in terms of the Dedekind eta function, one can easily compute their modular transformations under $\tau\mapsto \tau+1$ and $\tau\mapsto -1/\tau$ and compare the results computed using Theorem A(1) in \cite{Kac1984} as follows:
\begin{align*}
    b^{2000}_{2000}\left(-\frac{1}{\tau}\right)-b^{2000}_{0020}\left(-\frac{1}{\tau}\right)&=2\left(b^{1100}_{1100}(\tau)+b^{1100}_{0011}(\tau)\right),\\
    b^{2000}_{2000}\left(-\frac{1}{\tau}\right)+b^{2000}_{0020}\left(-\frac{1}{\tau}\right)&=\dfrac{1}{2\sqrt{3}}\left(b^{2000}_{2000}(\tau)+b^{2000}_{0020}(\tau) + 6 b^{2000}_{0101}(\tau) + 6b^{0101}_{0101}(\tau) +  2b^{0101}_{2000}(\tau)\right),\\
    b^{0101}_{2000}\left(-\frac{1}{\tau}\right)=b^{0101}_{0002}\left(-\frac{1}{\tau}\right)&=\dfrac{1}{2\sqrt{3}}\left(b^{2000}_{2000}(\tau)+b^{2000}_{0020}(\tau) + 6 b^{2000}_{0101}(\tau) -3 b^{0101}_{0101}(\tau) -  b^{0101}_{2000}(\tau)\right),\\
    b^{2000}_{0101}\left(-\frac{1}{\tau}\right)&=\dfrac{1}{4\sqrt{3}}\left(b^{2000}_{2000}(\tau)+b^{2000}_{0020}(\tau) - 2 b^{2000}_{0101}(\tau) -2 b^{0101}_{0101}(\tau) + 2 b^{0101}_{2000}(\tau)\right),\\
    b^{0101}_{0101}\left(-\frac{1}{\tau}\right)&=\dfrac{1}{2\sqrt{3}}\left(b^{2000}_{2000}(\tau)+b^{2000}_{0020}(\tau) - 2 b^{2000}_{0101}(\tau) + b^{0101}_{0101}(\tau) - b^{0101}_{2000}(\tau)\right),\\
    b^{1100}_{1100}\left(-\frac{1}{\tau}\right)&=\dfrac{1}{4}\left(b^{2000}_{2000}(\tau)-b^{2000}_{0020}(\tau) + 2 b^{1100}_{1100}(\tau) -2 b^{1100}_{0011}(\tau)\right),\\
     b^{1100}_{0011}\left(-\frac{1}{\tau}\right)&=\dfrac{1}{4}\left(b^{2000}_{2000}(\tau)-b^{2000}_{0020}(\tau) - 2 b^{1100}_{1100}(\tau) +2 b^{1100}_{0011}(\tau)\right).
\end{align*}
Also, using Theorem A(4)\footnote{With a closer look at the proof of Theorem A(3) in \cite{Kac1984}, we believe that Theorem A(4), op.~cit., holds true if we replace all the groups $\Gamma(Nm)$ and $\Gamma(N(m+g))$ by $\Gamma_0(Nm)$ and $\Gamma_0(N(m+g))$, 
respectively.} in \cite{Kac1984} and Sturm's bound for modular forms (see, for example, Theorem 6.4.7 in \cite{MDG2015}), we know that comparing 
coefficients of $\eta(\tau)^{-3}b^{\Lambda}_{\lambda}$ with the numerical result for $ c^{\Lambda}_{\lambda}$, computed from its definition, 
up to the order of $q^{100}$ will be sufficient to conclude that $\eta(\tau)^{-3}b^{\Lambda}_{\lambda}=c^{\Lambda}_{\lambda}$. 
All of the above have been verified with the help of \textit{SageMath} and \textit{Mathematica}.
\end{proof}
In fact, these expressions for $b^{\Lambda}_{\lambda}$ arise in the way of proving Theorem \ref{th:sl4UCPF} and \ref{th:A2lat}, with the related results summarised as \eqref{eq:VOA-1-1}, \eqref{eq:VOA-1-2} and \eqref{eq:VOA-2}.

\section{Universal Chiral partition functions for level-2 $\fsl_n$ parafermions} \label{sec:Level-2}

\subsection{Universal chiral partition functions}
It can be seen from \cite{Gep1987} that the representation theory of $\operatorname{PF}_2(\mathfrak{sl}_n)$ involves intertwiners between various representations, 
so it is natural to think of fractional exclusion statistics where the statistical interactions of intertwiners are encoded into a matrix $\mathbf{G}$; see \cite{Hal1991}. 

The so-called \emph{universal chiral partition function} (UCPF) \cite{Bou2000} is in the form of
\[\func{\operatorname{UCPF}}{\mathbf{G};\mathbf{a}}:=q^{\delta}\sum_{N_i\ge 0}\dfrac{q^{\frac{1}{2}\mathbf{N}^{\mathsf{T}}\cdot\mathbf{G}\cdot\mathbf{N}-\mathbf{a}\cdot \mathbf{N}}}{\prod_i (q;q)_{N_i}}\]
where $q^{\delta}$ is a prefactor, $\mathbf{N}=(N_1,N_2,\dots,N_m)^{\mathsf{T}}$, $\mathbf{a}=(a_1,a_2,\dots,a_m)$ and $m$ is the number of quasi-particles.

In the case of $\operatorname{PF}_2(\mathfrak{sl}_n)$, generalised commutation relations \cite{Noy2007,Han23,BH24} allow us to analyse the exclusion statistics of intertwiners. 
This motivates us to conjecture a statistical interaction matrix $\mathbf{G}$ for $\operatorname{PF}_2(\mathfrak{sl}_n)$. We will examine a few examples to see the connection between 
UCPFs and parafermionic characters $b^{\Lambda}_{\lambda}$.


\subsection{$\fsl_2$} In this case the exclusion statistic is trivial, so we define
$$\mathbf{G}_2:=\frac{1}{2}\big(2\big),$$
where $\big(2\big)$ is understood as a one-by-one matrix. We have relations as follows.
\begin{theorem}\label{th:sl2UCPF}
\begin{subequations}
\begin{align}
    \func{\operatorname{UCPF}}{\mathbf{G}_2;\mathbf{0}}&=q^{-\frac{1}{48}}\sum_{N\ge 0}\dfrac{q^{\frac{1}{2}N^2}}{(q;q)_N}=b^{20}_{20}+b^{20}_{02},\label{eq:sl2untwi}\\
    \func{\operatorname{UCPF}}{\mathbf{G}_2;
 (\tfrac{1}{2})}&=q^{\frac{1}{16}-\frac{1}{48}}\sum_{N\ge 0}\dfrac{q^{\frac{1}{2}(N^2-N)}}{(q;q)_N}=2b^{11}_{11}\label{eq:sl2twi}.
\end{align}
\end{subequations}
\end{theorem}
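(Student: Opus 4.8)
The plan is to recognise both sums as specialisations of classical $q$-series identities and match them against the eta-quotient expressions for $b^{20}_{20}$, $b^{20}_{02}$, $b^{11}_{11}$ already recorded in the $\fsl_2$ subsection. For \eqref{eq:sl2untwi}, the sum $\sum_{N\ge 0} q^{N^2/2}/(q;q)_N$ is, after the substitution $q\mapsto q^{1/2}$ in the standard Euler/Cauchy identity, exactly $(-q^{1/2};q)_\infty$ — that is, $\sum_{N\ge0} x^N q^{\binom{N}{2}}/(q;q)_N = (-x;q)_\infty$ with the roles of the base and the exponent adjusted. So I would first rewrite $\tfrac12 N^2 = \binom{N}{2} + \tfrac12 N$ to bring it into the shape $\sum_N (q^{1/2})^N q^{\binom{N}{2}}_{}/(q;q)_N$, apply the $q$-exponential identity, and obtain a clean infinite product. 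For \eqref{eq:sl2twi}, $\tfrac12(N^2-N) = \binom{N}{2}$, so the sum is literally $\sum_N q^{\binom{N}{2}}/(q;q)_N = (-1;q)_\infty$; note $(-1;q)_\infty = 2(-q;q)_\infty$, which already explains the factor of $2$ on the right-hand side matching $2b^{11}_{11}$.

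Next I would convert the resulting products into eta-quotients. Using $(-q^{1/2};q)_\infty = (q;q^2)_\infty^{-1}(q^{1/2};q^{1/2})_\infty/(q;q)_\infty$ and the companion identity $(-q;q)_\infty = (q^2;q^2)_\infty/\big((q;q)_\infty (q;q^2)_\infty\big)^{1/?}$ — more cleanly, $(-q;q)_\infty = (q^2;q^2)_\infty/(q;q)_\infty \cdot (q;q^2)_\infty$ rearranged via $(q;q)_\infty = (q;q^2)_\infty (q^2;q^2)_\infty$ — one reduces everything to products in $(q;q)_\infty$, $(q^2;q^2)_\infty$, $(q^{1/2};q^{1/2})_\infty$, i.e. to $\eta(\tau)$, $\eta(2\tau)$, $\eta(\tau/2)$ up to the rational power of $q$ supplied by the prefactor $q^{-1/48}$ (resp. $q^{1/16-1/48}$) together with the $q^{1/24}$'s in the eta functions. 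Then I would simply expand the right-hand sides $b^{20}_{20}+b^{20}_{02} = \eta(2\tau)^{-1}\eta(\tau/2)^{-1}\eta(\tau)^2$ and $2b^{11}_{11} = 2\eta(\tau)^{-1}\eta(2\tau)$ in the same variables and verify the two sides agree as formal $q$-series — a finite check of Pochhammer/eta bookkeeping.

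The main obstacle is purely organisational rather than conceptual: keeping the half-integer powers of $q$ consistent across the three normalisations (the UCPF prefactor $q^\delta$, the $q^{L_0-c/24}$ convention in $b^\Lambda_\lambda$, and the $q^{1/24}$ inside each $\eta$), since an off-by-$q^{1/48}$ slip is easy to make and silently breaks the identity. A secondary subtlety is the factor $2$ in \eqref{eq:sl2twi}: it is not an artefact but comes from $(-1;q)_\infty = (1-(-1))\,(-q;q)_\infty = 2(-q;q)_\infty$, and it must be carried through to land on $2b^{11}_{11}$ rather than $b^{11}_{11}$; the physical interpretation is that the twisted sector sees the $\ZZ_2$-doubling. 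Once the normalisation is pinned down, both identities follow immediately from the $q$-binomial theorem and the explicit string-function formulae, with no appeal to deeper hypergeometric machinery.
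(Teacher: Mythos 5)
Your proposal is correct and follows essentially the same route as the paper: both identities are instances of Euler's second summation $(z;q)_\infty=\sum_{n\ge0}(-1)^nz^nq^{\binom{n}{2}}/(q;q)_n$ with $z=-q^{1/2}$ and $z=-1$ respectively, followed by routine eta-quotient bookkeeping against the listed expressions for $b^{20}_{20}+b^{20}_{02}$ and $b^{11}_{11}$. (Only a cosmetic remark: your parenthetical rearrangement of $(-q;q)_\infty$ is garbled as written --- the clean statement is $(-q;q)_\infty=(q^2;q^2)_\infty/(q;q)_\infty$ --- but this does not affect the argument.)
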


\begin{proof}
The two relations are instances of Euler's second summation formula \eqref{eq:Eul-2}. For \eqref{eq:sl2untwi}, 
we use \eqref{eq:Eul-2} with $z=-q^{\frac{1}{2}}$. For \eqref{eq:sl2twi}, we again use \eqref{eq:Eul-2} but with $z=-1$.
\end{proof}
\subsection{$\fsl_3$}
According to the explicit analysis of $\operatorname{PF}_2(\mathfrak{sl}_3)$ in \cite{Ard2001,Han23,BH24}, we first conjecture that
\begin{align}\label{eq:G3-def}
\mathbf{G}_3:=\dfrac{1}{2}\begin{pmatrix}
2&1&1\\
1&2&1\\
1&1&2
\end{pmatrix}.
\end{align}
Then noticing that there are (up to equivalence) one untwisted and one twisted representation of $\operatorname{PF}_2(\mathfrak{sl}_3)$, we consider
\begin{align*}
    \func{\operatorname{UCPF}}{\mathbf{G}_3;\mathbf{0}}&=q^{-\frac{1}{20}}\sum_{N_i\ge 0}\dfrac{q^{\frac{1}{2}\mathbf{N}^{\mathsf{T}}\cdot\mathbf{G}_3\cdot\mathbf{N}}}{\prod_i (q;q)_{N_i}},\\
  \func{\operatorname{UCPF}}{\mathbf{G}_3;\tfrac{1}{2} (0,1,1)}&=q^{\frac{1}{10}-\frac{1}{20}}\sum_{N_i\ge 0}\dfrac{q^{\frac{1}{2}\mathbf{N}^{\mathsf{T}}\cdot\mathbf{G}_3\cdot\mathbf{N}-\frac{1}{2}(N_2+N_3)}}{\prod_i (q;q)_{N_i}},
\end{align*}
where
\begin{align}\label{eq:G3-N}
    \mathbf{N}^{\mathsf{T}}=(N_1,N_2,N_3).
\end{align}

It is easy to observe relations between these UCPFs and the parafermionic characters $b^{\Lambda}_{\lambda}$ numerically. In particular, we have the following result. 
\begin{theorem}\label{th:sl3UCPF}
\begin{subequations}
\begin{align}
    \func{\operatorname{UCPF}}{\mathbf{G}_3; \mathbf{0}}&=b_{200}^{200}+3b_{011}^{200},\label{eq:sl3untwi}\\
    \func{\operatorname{UCPF}}{\mathbf{G}_3;\tfrac{1}{2} (0,1,1)}&=b_{002}^{110}+3b_{110}^{110}.\label{eq:sl3twi}
\end{align}
\end{subequations}
\end{theorem}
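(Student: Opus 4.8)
The plan is to prove Theorem~\ref{th:sl3UCPF} by evaluating both $\operatorname{UCPF}(\mathbf{G}_3;\mathbf{0})$ and $\operatorname{UCPF}(\mathbf{G}_3;\tfrac12(0,1,1))$ in closed form as $\eta$-quotients (or finite sums of infinite products), and matching these against the explicit expressions \eqref{eq:b200200}--\eqref{eq:b110002} for the parafermionic characters of $\operatorname{PF}_2(\mathfrak{sl}_3)$ listed in the $\fsl_3$ subsection. Concretely, write $Q(\mathbf N)=\tfrac12\mathbf N^{\mathsf T}\mathbf G_3\mathbf N=\tfrac12(N_1^2+N_2^2+N_3^2)+\tfrac12(N_1N_2+N_1N_3+N_2N_3)$, so that the summand factors nicely after completing the square in one of the variables. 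My first step is to isolate the $N_1$-sum: for fixed $N_2,N_3$ the inner sum is $\sum_{N_1\ge0} q^{\tfrac12 N_1^2 + \tfrac12 N_1(N_2+N_3+c)}/(q;q)_{N_1}$, which is exactly an instance of Euler's second summation formula $\sum_{N\ge0} z^N q^{\binom N2}/(q;q)_N = (-z;q)_\infty$ (cited in the excerpt as \eqref{eq:Eul-2}), with $z$ a $q$-power depending on $N_2+N_3$. This collapses the triple sum to a double sum over $N_2,N_3$ with an extra factor $(-q^{\,\cdot\,(N_2+N_3)+\cdots};q)_\infty$.

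The second step is to handle the resulting double sum over $N_2,N_3$. Because $\mathbf G_3$ is symmetric in all three indices, after the $N_1$-summation the remaining double sum still carries a quadratic form in $N_2,N_3$ coupled through $N_2N_3$ and through the Pochhammer factor produced above; the natural move is to again apply Euler's formula (or the $q$-binomial theorem / Durfee-type rearrangements) to the $N_2$-sum, and then the $N_3$-sum, each time peeling off an infinite product. Alternatively — and this is probably cleaner — one recognises the whole expression as a specialisation of a known multi-sum Rogers--Ramanujan-type identity (for instance an Andrews--Gordon or Bressoud-type identity, or one of the $\mathfrak{sl}$-parafermion identities of \cite{Bel2013}), so that the triple sum telescopes directly to the combination of products appearing in $b_{200}^{200}+3b_{011}^{200}$. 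One should also exploit the prefactors: $\operatorname{UCPF}(\mathbf G_3;\mathbf0)$ carries $q^{-1/20}$ while $b_{200}^{200}$ and $b_{011}^{200}$ carry $q^{1/30}$ and $q^{8/15}$ respectively together with $\eta(\tau)^{-2}=q^{-1/12}E(q)^{-2}$, so the overall $q$-powers must be reconciled; this is a routine but essential bookkeeping check.

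For the twisted case \eqref{eq:sl3twi}, the approach is identical except that the linear term $-\tfrac12(N_2+N_3)$ shifts the argument $z$ in Euler's formula (exactly as $z=-1$ vs.\ $z=-q^{1/2}$ played different roles in Theorem~\ref{th:sl2UCPF}), so the inner $N_1$-sum again gives an infinite product and the subsequent steps go through with modified $q$-powers, landing on $b_{002}^{110}+3b_{110}^{110}$. A useful cross-check: the modular S-transform relates the untwisted and twisted characters, and the coefficient pattern $1+3(\cdot)$ on both sides reflects the dimensions of the relevant weight spaces, so an error in one case should show up as an inconsistency with the other.

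The main obstacle I expect is the middle step — the double sum over $N_2,N_3$ after the first Euler reduction. Unlike the $\fsl_2$ case, the surviving summand is not a single geometric-type series: the factor $(-q^{*(N_2+N_3)+*};q)_\infty$ depends on $N_2+N_3$, so the $N_2$- and $N_3$-sums do not separate, and one must either find the right change of variables $(N_2,N_3)\mapsto(N_2+N_3,\,N_3)$ (turning it into a sum that telescopes via a Bailey pair or via Heine's transformation) or invoke an existing multisum identity and verify that the specialisation matches. Getting the infinite products to organise into precisely $(q^{1/2};q^{1/2})_\infty(q,q^{3/2},q^{5/2};q^{5/2})_\infty + q^{1/2}(q^2;q^2)_\infty(q^2,q^8,q^{10};q^{10})_\infty + 3q^{1/2}(q^2;q^2)_\infty(q^2,q^8,q^{10};q^{10})_\infty$ (i.e.\ the modulus-5 and modulus-10 Rogers--Ramanujan products appearing in $b_{200}^{200}+3b_{011}^{200}$) is where the real work lies; the appearance of both a modulus-$5$ and a modulus-$10$ product on the character side strongly suggests a split of the triple sum according to the parity of $N_2+N_3$, and I would organise the computation around that parity split.
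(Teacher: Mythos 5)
Your reduction strategy for the triple sum is essentially the paper's: Euler's formula \eqref{eq:Eul-2} applied to the $N_1$-sum, then grouping by $M=N_2+N_3$ and splitting by the parity of $M$, is exactly how Theorem~\ref{th:G3-general} is proved (the constrained inner sum over $N_2+N_3=M$ is evaluated in closed form via the $q$-binomial theorem, giving \eqref{eq:SM-exp}). The single sums that emerge are then identified not with Andrews--Gordon or Bressoud multisums but with four specific single-sum identities of Rogers (Slater's list), namely \eqref{eq:S19} and \eqref{eq:S44} for the untwisted case and \eqref{eq:S15} and \eqref{eq:S46} for the twisted one. Up to this point your outline is sound, even if you leave the evaluation of the parity-constrained double sum as an ``either/or.''

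The genuine gap is in your last step, which you describe as ``routine but essential bookkeeping.'' It is not. After the Rogers--Slater identities are applied, the untwisted sum evaluates to the combination \eqref{eq:triple-AG-1}, whose first term is the modulus-$10$/modulus-$20$ product coming from the Kac--Peterson string function, \emph{not} the modulus-$5$ product $(q;q)_\infty(q^2,q^3,q^5;q^5)_\infty$ appearing in $b^{200}_{200}+3b^{200}_{011}$ via \eqref{eq:b200200}. Reconciling the two product forms is equivalent to the identity \eqref{eq:G-modular-eqn} relating $G(-q)G(q^4)$, $G(q)G(q^4)$ and an eta-quotient; the paper proves it by combining two of Ramanujan's forty identities for the Rogers--Ramanujan functions, \eqref{eq:GH+} and \eqref{eq:GH-}, with the degree-$5$ modular equation \eqref{eq:modular-deg-5} for $\phi(q)\phi(-q^5)-\phi(-q)\phi(q^5)$. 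The twisted case needs the analogous identity \eqref{eq:H-modular-eqn} for $H$. Without this step (or an equivalent verification, e.g.\ by modular-forms arguments with an explicit Sturm bound), your proof does not close: matching $q$-power prefactors tells you nothing about whether two genuinely different-looking infinite products agree. You should either supply a proof of \eqref{eq:G-modular-eqn} and \eqref{eq:H-modular-eqn} or explicitly cite the forty identities and the degree-$5$ modular equation as the missing ingredients.
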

\begin{proof}
We begin by noticing that \eqref{eq:sl3untwi} and \eqref{eq:sl3twi} are, respectively, equivalent to
\begin{subequations}
\begin{align}
	\sum_{N_i\ge 0} \frac{q^{\mathbf{N}^{\mathsf{T}}\cdot\mathbf{G}_3\cdot\mathbf{N}}}{\prod_i(q^2;q^2)_{N_i}}&=\frac{(q;q)_\infty(q^2,q^{3},q^{5};q^{5})_\infty}{(q^2;q^2)^2_\infty}+\frac{4q(q^4;q^4)_\infty(q^4,q^{16},q^{20};q^{20})_\infty}{(q^2;q^2)^2_\infty},\label{eq:triple-AG-1-2}\\
	\sum_{N_i\ge 0} \frac{q^{\mathbf{N}^{\mathsf{T}}\cdot\mathbf{G}_3\cdot\mathbf{N}-N_2-N_3}}{\prod_i(q^2;q^2)_{N_i}}&=-\frac{(q;q)_\infty(q,q^{4},q^{5};q^{5})_\infty}{(q^2;q^2)^2_\infty}+\frac{4(q^4;q^4)_\infty(q^8,q^{12},q^{20};q^{20})_\infty}{(q^2;q^2)^2_\infty}.\label{eq:triple-AG-2-2}
\end{align}
\end{subequations}
For \eqref{eq:triple-AG-1-2}, we first show that
\begin{align}
    \LHS\eqref{eq:triple-AG-1-2}=\frac{(q^2;q^2)_\infty(q^2,q^8,q^{10};q^{10})_\infty(q^6,q^{10},q^{14};q^{20})_\infty}{(q;q)_\infty(q^4;q^4)_\infty(q^3,q^5,q^{7};q^{10})_\infty}+\frac{2q(q^4;q^4)_\infty(q^4,q^{16},q^{20};q^{20})_\infty}{(q^2;q^2)^2_\infty},\label{eq:triple-AG-1}
\end{align}
where the right-hand side is adopted from \cite{Kac1984}.
To see this, we take $(z_1,z_2,z_3)=(1,1,1)$ in \eqref{eq:G3-general}. Thus,
\begin{align*}
    \LHS\eqref{eq:triple-AG-1-2}&=(-q;q^2)_\infty \sum_{M\ge 0}\frac{q^{3M^2}}{(q;q^2)_M (q^4;q^4)_{M}}+ (-q^2;q^2)_\infty \sum_{M\ge 0} \frac{2 q^{3M^2+3M+1} }{(q^2;q^2)_{M} (q^2;q^4)_{M+1}}.
\end{align*}
Now, using \eqref{eq:S19} with $q\mapsto -q$ and \eqref{eq:S44} with $q\mapsto q^2$, we arrive at \eqref{eq:triple-AG-1}. We further rewrite \eqref{eq:triple-AG-1} with the Rogers--Ramanujan functions \eqref{eq:G-def} and \eqref{eq:H-def}:
	\begin{align*}
	\LHS\eqref{eq:triple-AG-1-2}=\frac{E(q^2)^4}{E(q)^2E(q^4)^2}G(-q)+\frac{2qE(q^4)^2}{E(q^2)^2}H(q^4).
	\end{align*}
	To show \eqref{eq:triple-AG-1-2}, it suffices to prove
	\begin{align}\label{eq:G-modular-eqn}
	\frac{E(q^2)^4}{E(q)^2E(q^4)^2}G(-q)G(q^4)-\frac{E(q)^2}{E(q^2)^2}G(q)G(q^4) = \frac{2qE(q^4)E(q^{20})}{E(q^2)^2}.
	\end{align}
	By \eqref{eq:GH+} and \eqref{eq:GH-}, we may obtain an expression for $G(q)G(q^4)$ and thus also for $G(-q)G(q^4)$. Then,
	\begin{align*}
	\LHS\eqref{eq:G-modular-eqn}
	&=\frac{E(q^2)^3E(q^5)^2}{2E(q)^2E(q^4)^2E(q^{10})}-\frac{E(q)^2E(q^{10})^5}{2E(q^2)^3E(q^5)^2E(q^{20})^2}\\
	&=\frac{\phi(q)\phi(-q^5)-\phi(-q)\phi(q^5)}{2E(q^2)^2},
	\end{align*}
	thereby confirming \eqref{eq:G-modular-eqn} via the following modular equation of degree $5$ (see \cite[p.~278]{Ber1991}):
\begin{align}\label{eq:modular-deg-5}
	\phi(q)\phi(-q^5)-\phi(-q)\phi(q^5) = 4q E(q^4)E(q^{20}).
\end{align}
Here, Ramanujan's theta function $\phi(q)$ is defined in \eqref{eq:phi-def}.

For \eqref{eq:triple-AG-2-2}, we first establish that
\begin{align}
    \LHS\eqref{eq:triple-AG-2-2}=\frac{(q^2;q^2)_\infty(q^4,q^6,q^{10};q^{10})_\infty(q^2,q^{10},q^{18};q^{20})_\infty}{(q;q)_\infty(q^4;q^4)_\infty(q,q^5,q^{9};q^{10})_\infty}
    +\frac{2(q^4;q^4)_\infty(q^8,q^{12},q^{20};q^{20})_\infty}{(q^2;q^2)^2_\infty},\label{eq:triple-AG-2}
\end{align}
where the right-hand side is also adopted from \cite{Kac1984}.
To see this, we take $(z_1,z_2,z_3)=(1,q^{-1},q^{-1})$ in \eqref{eq:G3-general}. Thus,
\begin{align*}
    \LHS\eqref{eq:triple-AG-2-2}&=(-q;q^2)_\infty \sum_{M\ge 0}\frac{q^{3M^2-2M}}{(q;q^2)_M (q^4;q^4)_{M}}+ 
    (-q^2;q^2)_\infty \sum_{M\ge 0} \frac{2 q^{3M^2+M} }{(q^2;q^2)_{M} (q^2;q^4)_{M+1}}.
\end{align*}
We then make use of \eqref{eq:S15} with $q\mapsto -q$ and \eqref{eq:S46} with $q\mapsto q^2$ to derive \eqref{eq:triple-AG-2}. 
We further rewrite \eqref{eq:triple-AG-2} in terms of the Rogers--Ramanujan functions:
	\begin{align*}
	\LHS\eqref{eq:triple-AG-2-2}=\frac{E(q^2)^4}{E(q)^2E(q^4)^2}H(-q)+\frac{2E(q^4)^2}{E(q^2)^2}G(q^4).
	\end{align*}
	So \eqref{eq:triple-AG-2-2} is equivalent to
	\begin{align}\label{eq:H-modular-eqn}
	\frac{E(q^2)^4}{E(q)^2E(q^4)^2}H(-q)H(q^4)+\frac{E(q)^2}{E(q^2)^2}H(q)H(q^4) = \frac{2E(q^4)E(q^{20})}{E(q^2)^2}.
	\end{align}
	We also deduce from \eqref{eq:GH+} and \eqref{eq:GH-} an expression for $H(q)H(q^4)$. Then,
	\begin{align*}
	\LHS\eqref{eq:H-modular-eqn}=\frac{\phi(q)\phi(-q^5)-\phi(-q)\phi(q^5)}{2qE(q^2)^2},
	\end{align*}
	thereby confirming \eqref{eq:H-modular-eqn} with recourse to \eqref{eq:modular-deg-5}. Thus, \eqref{eq:triple-AG-2-2} holds true.
\end{proof}

\subsection{$\fsl_4$} We can generalise the analysis above to the case of $\operatorname{PF}_2(\mathfrak{sl}_4)$. In this case, 
extra work is needed to deal with the intertwiners related to the module $L^{2000}_{0020}$ with details given in \cite{Han23,BH24}. In short, we first conjecture that
\begin{align}\label{eq:G4-def}
\mathbf{G}_4:=\frac{1}{2}\begin{pmatrix}
2&1&1&0&1&1\\
1&2&1&1&0&1\\
1&1&2&1&1&2\\
0&1&1&2&1&1\\
1&0&1&1&2&1\\
1&1&2&1&1&2
\end{pmatrix},
\end{align}
and define, for (up to equivalence) one untwisted and two twisted representations,  
\begin{align*}
    \func{\operatorname{UCPF}}{\mathbf{G}_4;\mathbf{0}}&:=q^{-\frac{1}{12}}\sum_{N_i\ge 0}\dfrac{q^{\frac{1}{2}\mathbf{N}^{\mathsf{T}}\cdot\mathbf{G}_4\cdot\mathbf{N}}}{\prod_i (q;q)_{N_i}},\\
    \func{\operatorname{UCPF}}{\mathbf{G}_4;\tfrac{1}{2} (1,0,1,1,0,1)}&:=q^{\frac{1}{6}-\frac{1}{12}}\sum_{N_i\ge 0}\dfrac{q^{\frac{1}{2}\mathbf{N}^{\mathsf{T}}\cdot\mathbf{G}_4\cdot\mathbf{N}-\frac{1}{2}(N_1+N_3+N_4+N_6)}}{\prod_i (q;q)_{N_i}},\\
    \func{\operatorname{UCPF}}{\mathbf{G}_4;\tfrac{1}{2} (1,1,0,0,0,1)}&:=q^{\frac{1}{8}-\frac{1}{12}}\sum_{N_i\ge 0}\dfrac{q^{\frac{1}{2}\mathbf{N}^{\mathsf{T}}\cdot\mathbf{G}_4\cdot\mathbf{N}-\frac{1}{2}(N_1+N_2+N_6)}}{\prod_i (q;q)_{N_i}},
\end{align*}
where
\begin{align}\label{eq:G4-N}
    \mathbf{N}^{\mathsf{T}}=(N_1,N_2,N_3,N_4,N_5,N_6).
\end{align}
Then we obtain more identities by connecting UCPFs and parefermionic characters of $\operatorname{PF}_2(\mathfrak{sl}_4)$. 
\begin{theorem}\label{th:sl4UCPF}
\begin{subequations}
\begin{align}
    \func{\operatorname{UCPF}}{\mathbf{G}_4;\mathbf{0}}&=b^{2000}_{2000}+b^{2000}_{0020}+6b^{2000}_{0101},\label{eq:sl4untwi}\\
    \func{\operatorname{UCPF}}{\mathbf{G}_4;\tfrac{1}{2} (1,0,1,1,0,1)}&=2\left(b^{0101}_{2000}+3b^{0101}_{0101}\right),\label{eq:sl4twi-1}\\
    \func{\operatorname{UCPF}}{\mathbf{G}_4;\tfrac{1}{2} (1,1,0,0,0,1)}&=4\left(b^{1100}_{1100}+b^{1100}_{0011}\right).\label{eq:sl4twi-2}
\end{align}
\end{subequations}

\end{theorem}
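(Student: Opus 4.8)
The plan is to follow the template of the proof of Theorem~\ref{th:sl3UCPF}. First I would replace $q$ by $q^{2}$, which clears the half-integer exponents, so that $\operatorname{UCPF}(\mathbf G_4;\mathbf a)$ becomes, up to an explicit power of $q$, $\sum_{N_i\ge 0}q^{\,\mathbf N^{\mathsf T}\mathbf G_4\mathbf N-2\mathbf a\cdot\mathbf N}/\prod_i(q^{2};q^{2})_{N_i}$; on the right I would substitute the eta-quotients of Theorem~\ref{th:sl4strfun} for each $b^{\Lambda}_{\lambda}$ (the two theorems are really proved together, the eta-quotients dropping out of the product forms obtained below), rewrite every $\eta$ as a Pochhammer product and cancel the overall $q$-prefactor, which must match the conformal-weight shift $s_{\Lambda}(\lambda)$. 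Each of \eqref{eq:sl4untwi}--\eqref{eq:sl4twi-2} then takes the shape of \eqref{eq:triple-AG-1}/\eqref{eq:triple-AG-2}: one multisum equal to a sum of two infinite products. The remaining work is to collapse the six-fold sum, evaluate the residual series by Slater-type identities, and verify the leftover eta-quotient identities by classical modular equations.

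\textbf{Collapsing the six-fold sum.} The key is the block structure of $\mathbf G_4$ in \eqref{eq:G4-def}: on the indices $\{1,2,4,5\}$ the off-diagonal couplings are precisely those of the $4$-cycle $1\!-\!2\!-\!4\!-\!5\!-\!1$, while $(\mathbf G_4)_{33}=(\mathbf G_4)_{66}=(\mathbf G_4)_{36}=1$ and $3,6$ couple identically to all of $1,2,4,5$. Hence $N_3,N_6$ enter the quadratic form only through $P:=N_3+N_6$ (contributing $P^{2}+P(N_1+N_2+N_4+N_5)$), and likewise enter the linear term of $\operatorname{UCPF}(\mathbf G_4;\tfrac12(1,0,1,1,0,1))$ only through $P$. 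Summing over $N_3$ with $P$ fixed is then the finite $q$-binomial theorem: for $\mathbf a=\mathbf0$ and for the first twisted sector the inner sum is $(-1;q^{2})_{P}/(q^{2};q^{2})_{P}$, whereas for the sector $\tfrac12(1,1,0,0,0,1)$ --- whose linear term carries $N_6$ but not $N_3$ --- writing $N_6=P-N_3$ gives $(-q;q^{2})_{P}/(q^{2};q^{2})_{P}$ instead. (This is exactly the phenomenon behind the stray factors $(-q;q^{2})_\infty$, $(-q^{2};q^{2})_\infty$ and the $2$ in the $\fsl_3$ proof.) The residual sum is now over $N_1,N_2,P,N_4,N_5$, a ``wheel'' on the $4$-cycle with hub $P$; applying the Durfee-square/Cauchy reductions packaged in the general formula \eqref{eq:G3-general} and its $\fsl_4$ analogue (which I would isolate first) should bring each $\operatorname{UCPF}$ down to a sum of one or two single series with base $q^{2}$ or $q^{4}$, just as for $\fsl_3$.

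\textbf{Evaluation and modular endgame.} Each remaining series I expect to be summable by an entry of Slater's list --- the \eqref{eq:S15}, \eqref{eq:S19}, \eqref{eq:S44}, \eqref{eq:S46} family already used, or neighbouring ones --- producing infinite products that I would repackage with the Rogers--Ramanujan functions $G,H$ via \eqref{eq:GH+}--\eqref{eq:GH-} and Ramanujan's $\phi$. What is then left is to match the result against the eta-quotients of Step~1. For the level-$12$ sectors \eqref{eq:sl4untwi}, \eqref{eq:sl4twi-1} (the $\eta(3\tau),\eta(6\tau),\eta(12\tau)$ factors) I would look for a chain of degree-$2$ and degree-$3$ modular equations --- perhaps with the degree-$5$ equation \eqref{eq:modular-deg-5} reappearing --- reducing the claim to a known theta identity; for the level-$8$ sector \eqref{eq:sl4twi-2} (the $\eta(4\tau),\eta(8\tau)$ factors) a $2$-dissection of $\phi$ together with a degree-$2$ modular equation should do.

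\textbf{Where the difficulty lies.} The main obstacle is this modular endgame together with the bookkeeping. Unlike the $\fsl_3$ case, the right-hand side $b^{2000}_{2000}+b^{2000}_{0020}$ is itself a sum of two eta-quotients, so \eqref{eq:sl4untwi} is genuinely a two-term identity, and the multiplicities $6$ and $4$ (from the Dynkin-diagram automorphism group and from maximal-weight multiplicities) must be carried correctly through every reduction; pinning down the exact degree of each modular equation and the precise eta-argument is where the real effort goes. Should no clean classical chain present itself, the fallback is exactly the device used for Theorem~\ref{th:sl4strfun}: both sides of each reduced identity are modular forms of computable weight and level on a congruence subgroup, so by Sturm's bound it suffices to verify finitely many $q$-coefficients (through order $q^{100}$, say), which closes the proof.
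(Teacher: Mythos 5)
Your overall architecture (pass to $q\mapsto q^{2}$, identify the right-hand sides with eta-quotients, reduce the multisum to products, finish with dissection/modular identities) matches the paper's, but the central reduction step is broken. You claim that because $N_{3},N_{6}$ enter $\mathbf N^{\mathsf T}\mathbf G_{4}\mathbf N$ only through $P=N_{3}+N_{6}$ (which is true), the inner sum over $N_{3}$ at fixed $P$ evaluates by the finite $q$-binomial theorem to $(-1;q^{2})_{P}/(q^{2};q^{2})_{P}$. It does not: precisely because the quadratic form depends only on $P$, there is no residual weight $q^{2\binom{N_{3}}{2}}$ left in that inner sum, so what you get is $\sum_{k}{P\brack k}_{q^{2}}/(q^{2};q^{2})_{P}$, a Rogers--Szeg\H{o} polynomial over $(q^{2};q^{2})_{P}$, which has no product form (for $P=2$ the numerator is $3+q^{2}$, not $(-1;q^{2})_{2}=2+2q^{2}$). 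The same objection kills the sector $\tfrac12(1,0,1,1,0,1)$, whose linear term also weights $N_{3}$ and $N_{6}$ equally; only the sector $\tfrac12(1,1,0,0,0,1)$ happens to admit a product (namely $(-q;q)_{P}$, not the $(-q;q^{2})_{P}$ you wrote). The subsequent reduction of the residual five-fold ``wheel'' sum is only asserted, and your Sturm-bound fallback cannot rescue this, since the UCPF side is not a priori a modular form --- modularity only becomes available after the very reduction that is missing.

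What the paper actually does is structurally different and is the key idea you are missing: Theorem \ref{th:VOA-z} changes variables to $M_{1}=N_{1}+N_{3}-N_{4}-N_{6}$, $M_{2}=N_{2}+N_{3}-N_{5}-N_{6}$, $M_{3}=N_{4}+N_{6}$, $M_{4}=N_{5}+N_{6}$, keeping $N_{3},N_{6}$ as free inner indices; the inner sums then carry genuine $q^{2N_{3}^{2}-\cdots}$ weights and collapse by the $q$-Chu--Vandermonde summation \eqref{eq:q-CV-spe}, after which the $M_{3},M_{4}$ sums are evaluated by \eqref{eq:q-Gauss-spe}. The sextuple sum thereby becomes the bilateral $A_{2}$-lattice theta sum $\sum_{M_{1},M_{2}}z_{1}^{M_{1}}z_{2}^{M_{2}}q^{M_{1}^{2}+M_{2}^{2}-M_{1}M_{2}}(1+\delta_1q^{2M_1})(1+\delta_2q^{2M_2})/(q^{2};q^{2})_{\infty}^{2}$, which factors via Jacobi triple product (Lemma \ref{le:double-sum-rep}); identities \eqref{eq:2-dis-1/E1E3}, \eqref{eq:2-dis-E3^3/E1}, \eqref{eq:13-1} then match the eta-quotients for \eqref{eq:sl4untwi} and \eqref{eq:sl4twi-1}. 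For \eqref{eq:sl4twi-2} the paper uses yet another route (Theorem \ref{th:VOA-1}, a ${}_{3}\phi_{2}$ transformation) rather than Slater-type single-sum identities. No degree-$5$ modular equation or $G,H$ machinery is needed in the $\fsl_4$ case.
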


\begin{proof}
We start by noticing that \eqref{eq:sl4untwi} is equivalent to
\begin{align}
    \sum_{N_i\ge 0} \frac{q^{\mathbf{N}^{\mathsf{T}}\cdot \mathbf{G}_4\cdot \mathbf{N}}}{\prod_i(q^2;q^2)_{N_i}}=\frac{(q^2;q^2)^3_\infty(q^6;q^6)^5_\infty}{(q;q)^2_\infty(q^3;q^3)^2_\infty(q^4;q^4)^2_\infty(q^{12};q^{12})^2_\infty}+\frac{4q(q^4;q^4)^2_\infty(q^{12};q^{12})^2_\infty}{(q^2;q^2)^3_\infty(q^6;q^6)_\infty}.\label{eq:VOA-1-1}
\end{align}
To see this equivalence, we shall also use the fact that
\begin{align*}
		&\frac{E(q^2)^3E(q^6)^5}{E(q)^2E(q^{3})^2E(q^4)^2E(q^{12})^2}+\frac{4qE(q^4)^2E(q^{12})^2}{E(q^2)^3E(q^6)}\\
		& \quad= \frac{E(q^8)^4E(q^{12})^8}{E(q^2)E(q^{4})^4E(q^6)^3E(q^{24})^4}+\frac{q^2E(q^4)^8E(q^{6})E(q^{24})^4}{E(q^2)^5E(q^{8})^4E(q^{12})^4}+\frac{6qE(q^4)^2E(q^{12})^2}{E(q^2)^3E(q^6)},
	\end{align*}
which follows by substituting \eqref{eq:2-dis-1/E1E3} into the left-hand side of the above. Now, for \eqref{eq:VOA-1-1}, it suffices to use \eqref{eq:VOA-z-1} with $\delta_1=\delta_2=0$ and $z_1=z_2=1$, and then apply \eqref{eq:double-sum-rep} with $q$ replaced by $q^2$.

Next, \eqref{eq:sl4twi-1} is equivalent to
\begin{align}
    \sum_{N_i\ge 0} \frac{q^{\mathbf{N}^{\mathsf{T}}\cdot \mathbf{G}_4\cdot \mathbf{N}-(N_1+N_3+N_4+N_6)}}{\prod_i(q^2;q^2)_{N_i}}=\frac{6(q^3;q^3)^3_\infty}{(q;q)_\infty(q^2;q^2)^2_\infty}.\label{eq:VOA-1-2}
\end{align}
Here the equivalence comes from
\begin{align*}
		\frac{E(q^{3})^3}{E(q)E(q^2)^2} &= \frac{E(q^4)^3E(q^6)^2}{E(q^2)^4E(q^{12})} + \frac{q E(q^{12})^3}{E(q^2)^2 E(q^4)},
	\end{align*}
which is true by invoking \eqref{eq:2-dis-E3^3/E1}. Now, for \eqref{eq:VOA-1-2}, we use \eqref{eq:VOA-z-1} with $\delta_1=1$, $\delta_2=0$ and $(q,z_1,z_2)\mapsto (q^2,q^2,1)$. Then,
	\begin{align*}
		\LHS\eqref{eq:VOA-1-2}&=\frac{1}{(q^2;q^2)_\infty^2}\sum_{M_1,M_2=-\infty}^\infty q^{M_1^2+M_2^2-M_1M_2-M_1}(1+q^{2M_1})\\
		&=\frac{2}{(q^2;q^2)_\infty^2}\sum_{M_1,M_2=-\infty}^\infty q^{M_1^2+M_2^2-M_1M_2-M_1}\\
		\text{\tiny (by \eqref{eq:double-sum-rep})}&=\frac{2}{E(q^2)^2}\left(\frac{E(q^2)^7E(q^{3})E(q^{12})}{E(q)^3E(q^4)^3E(q^6)}+\frac{2E(q^4)^3E(q^6)^2}{E(q^2)^2E(q^{12})}\right)\\
		\text{\tiny (by \eqref{eq:13-1})}&=\frac{6E(q^{3})^3}{E(q)E(q^2)^2}.
	\end{align*}
	
Finally, \eqref{eq:sl4twi-2} is equivalent to
\begin{align}
    \sum_{N_i\ge 0} \frac{q^{\mathbf{N}^{\mathsf{T}}\cdot \mathbf{G}_4\cdot \mathbf{N}-(N_1+N_2+N_6)}}{\prod_i(q^2;q^2)_{N_i}}=\frac{4(q^2;q^2)_\infty^2}{(q;q)_\infty^2}.\label{eq:VOA-2}
\end{align}
For the equivalence, we need
\begin{align*}
	\frac{E(q^2)^2}{E(q)^2} &= \frac{E(q^8)^5}{E(q^2)^3 E(q^{16})^2} + \frac{2qE(q^4)^2 E(q^{16})^2}{E(q^2)^3 E(q^8)},
\end{align*}
which can be shown with recourse to \eqref{eq:2-dis-1/E1^2}. Further, for \eqref{eq:VOA-2}, we only need to take $(z_1,z_2)=(1,1)$ in \eqref{eq:VOA-1-3}.
\end{proof}

\section{Lattice characters for $\mathbb{Z}_2$-orbifolds of scaled $\fsl_n$-root lattices}

\subsection{Fermionic lattice construction}

Denote the root lattice of $\fsl_{n+1}$ by $A_{n}$. There is a standard process to construct a bosonic conformal field theory from $A_{n}$ 
(see for example, Section 15.6.3 in \cite{Fra1997}). We notice that scaling the root lattice $A_n$ by a factor of $1/\sqrt{2}$ 
results in a conformal field theory with coupled free fermions \cite{Han23,BH24}, denoted by $A_n/\sqrt{2}$ . The modules of $A_n/\sqrt{2}$ are in 
correspondence with $\func{}{A_n^*/\sqrt{2}}/\func{}{A_n/\sqrt{2}}$, where $L^*$ denotes the dual lattice of $L$. 
The expressions for characters of these modules are also straightforward. We have
\begin{equation}
    \chi_{\gamma}(\tau)=\eta(\tau)^{-n}\sum_{v\in A_n/\sqrt{2}+\gamma}q^{\frac{1}{2}|v|^2},
\end{equation}
where $\gamma\in \func{}{A_n^*/\sqrt{2}}/\func{}{A_n/\sqrt{2}}$.

For example, in the case of $\fsl_2$, we have
$$\big(A_1^*/\sqrt{2}\big) /  \big(A_1/\sqrt{2}\big) \cong \{ \gamma_0=0,\; \gamma_1= \tfrac{1}{2\sqrt2}\alpha\}  \cong \{ 0, \tfrac12\}, $$
with corresponding characters
\begin{align*}
    \chi_0&=\frac{1}{\eta(\tau)}\sum_{n\in\Bbb{Z}}q^{\frac{1}{2}n^2}=\eta(\tau)^4\eta(\tau/2)^{-2}\eta(2\tau)^{-2},\\
    \chi_1&=\frac{1}{\eta(\tau)}\sum_{n\in\Bbb{Z}}q^{\frac{1}{2}(n+\frac{1}{2})^2}=\eta(\tau)^{-2}\eta(2\tau)^{2}. 
\end{align*}

In general, it turns out that in order to recover the full set of string functions, we need to consider a $\mathbb{Z}_2$ orbifold of the above lattice construction, involving both twisted sectors as well as twist projections. This is described in detail in \cite{Han23,BH24}. In the following, we simply list several resulting characters and their relations to string functions and UCPFs.

\subsection{$\fsl_3$} 
Let $\{\alpha_1,\alpha_2\}$ be simple roots of $\fsl_3$ and they form a basis of $A_2/\sqrt{2}$. 
Then the dual lattice $A_2^*/\sqrt{2}$ has a basis $\big\{\frac{1}{3\sqrt2}(\alpha_1-\alpha_2),\frac{1}{\sqrt{2}}\alpha_1\big\}$, 
so there are three cosets in $\func{}{A_2^*/\sqrt{2}}/\func{}{A_2/\sqrt{2}}$, represented by $\gamma_0:=0$, 
$\gamma_1:=\frac{1}{3\sqrt{2}}(\alpha_1-\alpha_2)$ and $\gamma_2:=\frac{2}{3\sqrt{2}}(\alpha_1-\alpha_2)$. 

The characters of the orbifold modules without the twist factor are
\begin{align*}
    \chi_{\gamma_0}(\tau)&=\eta(\tau)^{-2}\sum_{m,n\in\mathbb{Z}}q^{\frac{1}{2}(m^2+n^2-mn)},\\
     \chi_{\gamma_1}(\tau)= \chi_{\gamma_2}(\tau)&=\eta(\tau)^{-2}q^{\frac{2}{3}}\sum_{m,n\in\mathbb{Z}}q^{\frac{1}{2}(m^2+n^2-mn+2m)},\\
    \chi_{\operatorname{orb}}(\tau)&=\eta(\tau)^2\eta(\tau/2)^{-2}.
\end{align*}
We notice that they are closely related to the UCPFs of $\operatorname{PF}_2(\mathfrak{sl}_4)$, as we should expect by studying the conformal field structure of these two theories. 
\begin{theorem}\label{th:A2lat}
\begin{subequations}
\begin{align}
    \chi_{\gamma_0}&=\func{\operatorname{UCPF}}{\mathbf{G}_4;\mathbf{0}},\label{eq:A3unshf}\\
    2\chi_{\gamma_1}&=\func{\operatorname{UCPF}}{\mathbf{G}_4;\tfrac{1}{2} (1,0,1,1,0,1)},\label{eq:A3shf}\\
    4\chi_{\operatorname{orb}}&=\func{\operatorname{UCPF}}{\mathbf{G}_4;\tfrac{1}{2} (1,1,0,0,0,1)}.\label{eq:A3orb}
\end{align}
\end{subequations}

\end{theorem}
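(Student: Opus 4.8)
The plan is to reduce each of the three identities \eqref{eq:A3unshf}–\eqref{eq:A3orb} to a known fact, rather than attack the UCPF side directly. The key observation is that the right-hand sides have already been computed in the course of proving Theorem \ref{th:sl4UCPF}: the three UCPFs equal, respectively, the right-hand sides of \eqref{eq:VOA-1-1}, \eqref{eq:VOA-1-2} (after the prefactor $q^{1/6-1/12}$ and the $q\mapsto q^2$ rescaling are undone), and \eqref{eq:VOA-2}. So it suffices to show that $\chi_{\gamma_0}$, $2\chi_{\gamma_1}$ and $4\chi_{\operatorname{orb}}$ coincide with those same $\eta$-quotients. Since all of $\chi_{\gamma_0},\chi_{\gamma_1},\chi_{\operatorname{orb}}$ are given above as $\eta^{-2}$ times a theta-like lattice sum, the whole theorem becomes a collection of classical theta-function evaluations.

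First I would handle \eqref{eq:A3orb}, which is immediate: $\chi_{\operatorname{orb}}(\tau)=\eta(\tau)^2\eta(\tau/2)^{-2}$ by definition, so $4\chi_{\operatorname{orb}}=4E(q^2)^2/E(q)^2$ after writing everything in $q$-Pochhammer form, and this is exactly the right-hand side of \eqref{eq:VOA-2}, which the proof of Theorem \ref{th:sl4UCPF} identified with $\func{\operatorname{UCPF}}{\mathbf{G}_4;\tfrac12(1,1,0,0,0,1)}$. Next, for \eqref{eq:A3shf}, the lattice sum for $\chi_{\gamma_1}$ is $\eta(\tau)^{-2}q^{2/3}\sum_{m,n}q^{\frac12(m^2+n^2-mn+2m)}$, i.e.\ a shifted $A_2$ theta function; completing the square in the quadratic form $m^2+n^2-mn$ and recognising the result as the relevant theta constant shows $2\chi_{\gamma_1}=6E(q^3)^3/(E(q)E(q^2)^2)$, which is the right-hand side of \eqref{eq:VOA-1-2} — already matched to $\func{\operatorname{UCPF}}{\mathbf{G}_4;\tfrac12(1,0,1,1,0,1)}$ in the proof of Theorem \ref{th:sl4UCPF}. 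The factor of $6$ on the character side and the factor of $3$ from $\chi_{\gamma_1}=\chi_{\gamma_2}$ must be tracked carefully here. Finally, for \eqref{eq:A3unshf}, one shows the unshifted $A_2$ theta sum $\eta(\tau)^{-2}\sum_{m,n}q^{\frac12(m^2+n^2-mn)}$ equals the right-hand side of \eqref{eq:VOA-1-1}; this is the two-dimensional analogue of the $\fsl_2$ computation $\chi_0=\eta^4\eta(\tau/2)^{-2}\eta(2\tau)^{-2}$ and follows from a theta decomposition of the $A_2$ lattice into cosets of a sublattice, giving the $\eta$-quotient sum on the right of \eqref{eq:VOA-1-1}.

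The main obstacle is the identity behind \eqref{eq:A3unshf}: splitting the $A_2$ theta function $\sum_{m,n}q^{\frac12(m^2+n^2-mn)}$ into the two-term $\eta$-quotient expression requires a genuine two-variable theta identity (equivalently, a dissection of the lattice $A_2$ by parity, or a change of basis to a rescaled $A_2\oplus$ something), not just a one-variable manipulation. I would look for this either as a known evaluation of the $A_2$ theta constant in terms of $\eta(q),\eta(q^2),\eta(q^3),\eta(q^4),\eta(q^6),\eta(q^{12})$ — the same $\eta$-arguments appearing on the right of \eqref{eq:VOA-1-1} — or derive it from Ramanujan's cubic theta functions $a(q),b(q),c(q)$, using $a(q)=\sum_{m,n}q^{m^2+mn+n^2}$ and the classical relations among $a,b,c$ and $\eta$-quotients. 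Once that single theta decomposition is in place, the remaining two identities are routine, and the theorem follows by transitivity through the already-established right-hand sides of \eqref{eq:VOA-1-1}, \eqref{eq:VOA-1-2}, \eqref{eq:VOA-2}. As an independent check, one can also compare $q$-expansions to high order via the Sturm-bound argument already used in the proof of Theorem \ref{th:sl4strfun}.
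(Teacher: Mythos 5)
Your overall strategy (transitivity through the right-hand sides already computed in the proof of Theorem \ref{th:sl4UCPF}) is workable for \eqref{eq:A3orb}, and indeed the paper disposes of \eqref{eq:A3orb} exactly as you do, via \eqref{eq:VOA-2}. But for the other two identities you have routed yourself into an unnecessary detour and then left its hardest step unproven. The crux you flag as ``the main obstacle'' --- evaluating the $A_2$ theta sum $\sum_{m,n}q^{\frac12(m^2+n^2-mn)}$ as the two-term $\eta$-quotient on the right of \eqref{eq:VOA-1-1}, for which you only say you ``would look for'' a cubic-theta-function identity --- is a genuine gap in your write-up, and it is also entirely avoidable. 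Theorem \ref{th:VOA-z} already expresses $\func{\operatorname{UCPF}}{\mathbf{G}_4;\mathbf{0}}$ as $\frac{1}{(q;q)_\infty^2}\sum_{M_1,M_2}q^{\frac12(M_1^2+M_2^2-M_1M_2)}$ times the prefactor (take $\delta_1=\delta_2=0$, $z_1=z_2=1$ and undo the $q\mapsto q^2$ rescaling), and that is \emph{literally} the defining lattice sum for $\chi_{\gamma_0}$ divided by $\eta(\tau)^2$; no evaluation of the $A_2$ theta constant in terms of $\eta$-quotients is needed at any point. Likewise for \eqref{eq:A3shf}: taking $\delta_1=1$, $\delta_2=0$ in \eqref{eq:VOA-z-1} gives $\frac{2q^{1/6}}{\eta(\tau)^2}\sum_{m,n}q^{\frac{m^2+n^2-mn-m}{2}}$, and the elementary change of variables $(m,n)\mapsto(n-m,-m-1)$ turns this into $2\chi_{\gamma_1}$ directly, with no need to establish $2\chi_{\gamma_1}=6E(q^3)^3/\bigl(E(q)E(q^2)^2\bigr)$ (your ``completing the square and recognising the theta constant'' step, which you also do not carry out, and which would require Lemma \ref{le:double-sum-rep} together with \eqref{eq:13-1}).

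In short: your chain $\operatorname{UCPF}=\text{$\eta$-quotient}=\text{lattice sum}$ inserts the $\eta$-quotient as a middleman and thereby forces you to prove two nontrivial theta evaluations that you only sketch; the intended argument is the one-step chain $\operatorname{UCPF}=\text{double lattice sum}=\chi$, which follows from Theorem \ref{th:VOA-z} and a change of summation variables. Your version could be completed --- the required theta identities are derivable from Lemma \ref{le:double-sum-rep} together with \eqref{eq:2-dis-1/E1E3} and \eqref{eq:13-1}, which is essentially how the proof of Theorem \ref{th:sl4UCPF} converts the double sums into $\eta$-quotients --- but as written the proposal does not prove \eqref{eq:A3unshf} or \eqref{eq:A3shf}. (Also, your worry about ``the factor of $3$ from $\chi_{\gamma_1}=\chi_{\gamma_2}$'' is a red herring: the statement is $2\chi_{\gamma_1}=\operatorname{UCPF}$, and the $2$ comes from symmetrising the factor $(1+q^{2M_1})$ in \eqref{eq:VOA-z-1}, not from counting cosets.)
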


\begin{proof}
We obtain \eqref{eq:A3unshf} by taking $\delta_1=\delta_2=0$ and $z_1=z_2=1$ in \eqref{eq:VOA-z-1}. For \eqref{eq:A3shf}, we know from \eqref{eq:VOA-z-1} that
	\begin{align*}
		\func{\operatorname{UCPF}}{\mathbf{G}_4;\tfrac{1}{2}(1,0,1,1,0,1)}&=\frac{2q^{\frac{1}{6}}}{\eta(\tau)^2}\sum_{m,n=-\infty}^\infty q^{\frac{m^2}{2}+\frac{n^2}{2}-\frac{mn}{2}-\frac{m}{2}}.
\end{align*}
Making the change of variables $(m,n)\mapsto (n-m,-m-1)$ in the above gives the desired result. Finally, \eqref{eq:A3orb} is a direct consequence of \eqref{eq:VOA-2}.
\end{proof}

The characters of the orbifold modules, with the twist factor inserted, are
\begin{subequations}
\begin{align}
    \chi_{\gamma_0}^T&=\eta^{-2}\sum_{m,n\in\mathbb{Z}}(-1)^{m+n}q^{\frac{1}{2}(m^2+n^2-mn)},\label{eq:chi-gamma-0-T}\\
    \chi_{\gamma_1}^T= \chi_{\gamma_2}^T&= \eta^{-2}q^{\frac{2}{3}}\sum_{m,n\in\mathbb{Z}}(-1)^{m+n}q^{\frac{1}{2}(m^2+n^2-mn+2m)},\label{eq:chi-gamma-1-T}\\
    \chi_{\operatorname{orb}}^T&=\eta(2\tau)^2\eta(\tau/2)^2\eta(\tau)^{-4}.
\end{align}
\end{subequations}

We again derive some identities between these lattice characters and the parafermionic characters. 
\begin{theorem}
\begin{subequations}
\begin{align}
    \chi_{\gamma_0}^T&=b^{2000}_{2000}+b^{2000}_{0020}-b^{2000}_{0101},\label{eq:A3Tunshf}\\
    \chi_{\gamma_1}^T&=b^{0101}_{2000}-b^{0101}_{0101},\label{eq:A3Tshf}\\
     \chi_{\operatorname{orb}}^T&=b^{1100}_{1100}-b^{1100}_{0011}.\label{eq:A3Torb}
\end{align}
\end{subequations}
\end{theorem}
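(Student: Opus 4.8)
The plan is to prove the three identities \eqref{eq:A3Tunshf}, \eqref{eq:A3Tshf}, \eqref{eq:A3Torb} by the same strategy used for Theorem~\ref{th:A2lat}: express each twisted lattice character $\chi^T$ as an explicit $\eta$-quotient and then compare with the corresponding combination of the $b^\Lambda_\lambda$'s, whose $\eta$-quotient expressions are recorded in Theorem~\ref{th:sl4strfun}. Since every quantity on both sides will then be a finite $\eta$-product/quotient, each identity reduces to a classical theta-function or modular-form identity that can be certified either by a known identity in the literature or by Sturm's bound (exactly as in the proof of Theorem~\ref{th:sl4strfun}).

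First I would handle \eqref{eq:A3Torb}, which is the cleanest: from the stated formula $\chi_{\operatorname{orb}}^T=\eta(2\tau)^2\eta(\tau/2)^2\eta(\tau)^{-4}$ and from Theorem~\ref{th:sl4strfun}, $b^{1100}_{1100}-b^{1100}_{0011}=\dfrac{\eta(4\tau)^5}{\eta(\tau)^3\eta(8\tau)^2}-\dfrac{2\eta(2\tau)^2\eta(8\tau)^2}{\eta(\tau)^3\eta(4\tau)}$. Clearing denominators, this is a two-term $\eta$-identity in the variable $q$ supported on $\eta(\tau),\eta(\tau/2),\eta(2\tau),\eta(4\tau),\eta(8\tau)$; rewriting via $E(q)$ and Ramanujan's $\phi$, it should match a dissection of $\phi(-q)$ or one of the $2$-dissection lemmas already invoked in the proof of Theorem~\ref{th:sl4UCPF} (e.g.\ the one that gives $E(q^2)^2/E(q)^2 = E(q^8)^5/(E(q^2)^3E(q^{16})^2) + 2qE(q^4)^2E(q^{16})^2/(E(q^2)^3E(q^8))$ after a suitable substitution). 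If no off-the-shelf identity is immediately applicable, Sturm's bound finishes it.

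Next, for \eqref{eq:A3Tunshf} and \eqref{eq:A3Tshf} I would start from the double-sum definitions \eqref{eq:chi-gamma-0-T} and \eqref{eq:chi-gamma-1-T}. For \eqref{eq:A3Tunshf} one inserts $(-1)^{m+n}$, i.e.\ replaces $q\mapsto q$ but sends $q^{1/2}\mapsto -q^{1/2}$ inside the $A_2$ theta function; using the standard decomposition of the $A_2$ theta series into products of ordinary Jacobi theta functions (the same one underlying the unsigned case in Theorem~\ref{th:A2lat}), one obtains an $\eta$-quotient. The right-hand side $b^{2000}_{2000}+b^{2000}_{0020}-b^{2000}_{0101}$ is assembled from the sum $b^{2000}_{2000}+b^{2000}_{0020}$ and $b^{2000}_{0101}=\eta(2\tau)^2\eta(6\tau)^2/(\eta(\tau)^3\eta(3\tau))$ of Theorem~\ref{th:sl4strfun}; one then checks the resulting $\eta$-identity, again by a known modular equation of degree $3$ (the $2$-dissection of $E(q^3)^3/(E(q)E(q^2)^2)$ used in the proof of \eqref{eq:VOA-1-2} is the likely ingredient) or by Sturm's bound. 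For \eqref{eq:A3Tshf}, the change of variables $(m,n)\mapsto(n-m,-m-1)$ that worked in Theorem~\ref{th:A2lat} will similarly convert the signed shifted $A_2$ sum into a recognizable theta product, to be matched with $b^{0101}_{2000}-b^{0101}_{0101}=\dfrac{3\eta(6\tau)^3}{\eta(\tau)^2\eta(2\tau)}-\dfrac{\eta(2\tau)^3\eta(3\tau)^2}{\eta(\tau)^4\eta(6\tau)}$; note the minus sign is essential for the $\eta$-expression to have nonnegative integer $q$-coefficients, consistent with it being a genuine (signed) lattice sum.

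The main obstacle I anticipate is obtaining closed $\eta$-quotient forms for the \emph{signed} theta sums $\chi_{\gamma_0}^T$ and $\chi_{\gamma_1}^T$: unlike the unsigned sums in Theorem~\ref{th:A2lat}, the sign $(-1)^{m+n}$ breaks the simplest product factorization, so one must track how the sign interacts with the $A_2\hookrightarrow A_1\oplus A_1$ (or $A_2\hookrightarrow$ hexagonal) splitting and possibly pick up extra theta factors at a doubled or halved modulus. Once those $\eta$-quotients are in hand, the remaining comparisons are routine: each is a balanced $\eta$-product identity that is either classical (a modular equation of degree $3$ or a known $2$-dissection, as already used elsewhere in this paper) or verifiable by comparing $q$-expansions up to Sturm's bound, exactly the same mechanism that proves Theorem~\ref{th:sl4strfun}. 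As in that proof, all numerical checks can be carried out in \textit{SageMath} and \textit{Mathematica}.
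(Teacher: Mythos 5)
Your proposal is correct and follows essentially the same route as the paper: evaluate the signed lattice sums in closed form, compare with the $\eta$-quotient expressions of the $b^{\Lambda}_{\lambda}$ from Theorem~\ref{th:sl4strfun}, and certify each resulting two- or three-term $\eta$-identity by a classical $2$-dissection (the paper uses \eqref{eq:2-dis-E1E3}, \eqref{eq:2-dis-E1^3/E3} together with \eqref{eq:13-2}, and \eqref{eq:2-dis-E1^2}, respectively). The ``main obstacle'' you flag is in fact already dispatched by Lemma~\ref{le:double-sum-rep}, which is stated for general $z_1,z_2$ and so handles the signed sums directly (e.g.\ $z_1=z_2=-1$ for $\chi_{\gamma_0}^T$), leaving only the routine $\eta$-identity comparisons you describe.
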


\begin{proof}
We see that the right-hand side of \eqref{eq:chi-gamma-0-T} is an instance of \eqref{eq:double-sum-rep}:
\begin{align*}
    \sum_{m,n\in\mathbb{Z}}(-1)^{m+n}q^{\frac{1}{2}(m^2+n^2-mn)}=\frac{E(q^{\frac{1}{2}})^2E(q^{\frac{3}{2}})^2}{E(q)E(q^3)}.
\end{align*}
Therefore, \eqref{eq:A3Tunshf} is equivalent to
\begin{align*}
		\frac{E(q)^2E(q^3)^2}{E(q^2)^3E(q^6)}=\frac{E(q^8)^4E(q^{12})^8}{E(q^2)E(q^{4})^4E(q^6)^3E(q^{24})^4}+
		\frac{q^2E(q^4)^8E(q^{6})E(q^{24})^4}{E(q^2)^5E(q^{8})^4E(q^{12})^4}-\frac{2qE(q^4)^2E(q^{12})^2}{E(q^2)^3E(q^6)},
\end{align*}
which follows by substituting \eqref{eq:2-dis-E1E3} into the left-hand side of the above.

For the right-hand side of \eqref{eq:chi-gamma-1-T}, we first change the variables $(m,n)\mapsto (-n-1,m-n-1)$. Thus,
\begin{align*}
    \sum_{m,n\in\mathbb{Z}}(-1)^{m+n}q^{\frac{1}{2}(m^2+n^2-mn+2m)} &= q^{-\frac{1}{2}}\sum_{m,n\in\mathbb{Z}}(-1)^{m}q^{\frac{1}{2}(m^2+n^2-mn-m)}\\
    \text{\tiny (by \eqref{eq:double-sum-rep})}&=q^{-\frac{1}{2}}\left(\frac{E(q)^7E(q^{\frac{3}{2}})E(q^6)}{E(q^{\frac{1}{2}})^3E(q^2)^3E(q^3)}-\frac{2E(q^2)^3E(q^3)^2}{E(q)^2E(q^6)}\right)\\
	\text{\tiny (by \eqref{eq:13-2})}&=-\frac{q^{-\frac{1}{2}}E(q^{\frac{1}{2}})^3 E(q^3)^2}{E(q)^2 E(q^{\frac{3}{2}})}.
\end{align*}
We then see that \eqref{eq:A3Tshf} is equivalent to
\begin{align*}
		\frac{E(q)^3 E(q^6)^2}{E(q^2)^4 E(q^{3})} = \frac{E(q^4)^3E(q^6)^2}{E(q^2)^4E(q^{12})} - \frac{3q E(q^{12})^3}{E(q^2)^2 E(q^4)},
\end{align*}
which is true by invoking \eqref{eq:2-dis-E1^3/E3}.

Finally, \eqref{eq:A3Torb} is equivalent to
\begin{align*}
		\frac{E(q)^2E(q^4)^2}{E(q^2)^4} = \frac{E(q^8)^5}{E(q^2)^3 E(q^{16})^2} - \frac{2qE(q^4)^2 E(q^{16})^2}{E(q^2)^3 E(q^8)},
\end{align*}
which can be shown with recourse to \eqref{eq:2-dis-E1^2}.
\end{proof}

\section{A $q$-rious perspective}\label{sec:transform}

As we will see in this section, the previous results can be nicely established from a $q$-series perspective. They are also of great interest within the community of $q$-theorists for their deep connections with identities of Rogers--Ramanujan type.

We begin with a general transform that rewrites a triple summation related to $\mathbf{G}_3$ in \eqref{eq:G3-def} into a single summation.

\begin{theorem}\label{th:G3-general}
Let $\mathbf{G}_3$ and $\mathbf{N}$ be as in \eqref{eq:G3-def} and \eqref{eq:G3-N}, respectively. Then,
\begin{align}\label{eq:G3-general}
&\sum_{N_1,N_2,N_3\ge 0} \frac{z_1^{N_1}z_2^{N_2}z_3^{N_3}q^{\mathbf{N}^{\mathsf{T}}\cdot\mathbf{G}_3\cdot\mathbf{N}}}{(q^2;q^2)_{N_1}(q^2;q^2)_{N_2}(q^2;q^2)_{N_3}}\notag\\
&\qquad= (-z_1q;q^2)_\infty \sum_{M\ge 0}\frac{z_2^M z_3^M q^{3M^2} (-z_2 z_3^{-1}q;q^2)_M (-z_2^{-1}z_3q;q^2)_M}{(-z_1q;q^2)_M (q^2;q^2)_{2M}}\notag\\
&\qquad\quad+ (-z_1 q^2;q^2)_\infty \sum_{M\ge 0} \frac{(z_2+z_3) z_2^M z_3^M q^{3M^2+3M+1} (-z_2z_3^{-1}q^2;q^2)_M (-z_2^{-1}z_3q^2;q^2)_M}{(-z_1q^2;q^2)_{M} (q^2;q^2)_{2M+1}}.
\end{align}
\end{theorem}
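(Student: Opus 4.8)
The plan is to peel off the $N_1$-summation first, since $N_1$ interacts with $N_2,N_3$ only through the off-diagonal entries $\tfrac12$ of $\mathbf{G}_3$, which contribute $q^{N_1N_2+N_1N_3}$; thus the inner sum over $N_1$ is
\[
\sum_{N_1\ge 0}\frac{z_1^{N_1}q^{N_1^2}\,(z_2z_3\,q^{N_2+N_3})^{\,\text{-weighted}}\,q^{N_1(N_2+N_3)}}{(q^2;q^2)_{N_1}},
\]
i.e.\ after writing $q^{N_1^2}=q^{N_1}\cdot q^{N_1(N_1-1)}$ it is a specialisation of one of the Euler/Rogers--Fine type summations recorded in the paper (the $z_1$-dependence appearing as $(-z_1q;q^2)_\infty$ in the first term and $(-z_1q^2;q^2)_\infty$ in the second is the tell-tale sign of splitting $N_1$ into even and odd parts, or of applying Euler's $\sum q^{\binom{n}{2}}z^n/(q)_n=(-z;q)_\infty$ after a rescale $q\to q^2$). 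So Step 1: fix $N_2,N_3$, carry out the $N_1$ sum in closed product form, obtaining a double sum over $N_2,N_3$ with a factor that naturally breaks into two pieces according to the parity of $N_2+N_3$ — this is the origin of the two terms on the right-hand side.

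Step 2: in the remaining double sum over $(N_2,N_3)$, the quadratic form is $N_2^2+N_3^2+N_2N_3$ (from the diagonal $2$'s halved to $1$'s, plus the off-diagonal $N_2N_3$), together with the $z_2,z_3$ monomials. Reparametrise by $M=\min(N_2,N_3)$ and $r=|N_2-N_3|$ (equivalently, set $N_2=M+a$, $N_3=M+b$ with $\min(a,b)=0$), so that $N_2^2+N_3^2+N_2N_3 = 3M^2 + (\text{linear in }M)\cdot(a+b)+\cdots$; summing over the "excess" variable $r\ge 0$ with its $z_2/z_3$ weighting produces exactly the $q$-Pochhammer factors $(-z_2z_3^{-1}q;q^2)_M(-z_2^{-1}z_3q;q^2)_M$ (resp.\ with $q^2$ in place of $q$ and the prefactor $(z_2+z_3)$) via another Euler-type finite sum. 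The denominators $(q^2;q^2)_{2M}$ and $(q^2;q^2)_{2M+1}$ arise from collecting $(q^2;q^2)_{N_2}(q^2;q^2)_{N_3}=(q^2;q^2)_{M}(q^2;q^2)_{M}\times(\text{the }r\text{-sum normalisation})$ and recognising a $q$-binomial identity that merges the two factorials into a single one indexed by $2M$ or $2M+1$.

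Step 3: bookkeeping — track the powers of $q$ and the monomials $z_2^Mz_3^M$, confirm the cross term $3M^2+3M+1$ in the second piece (the $+3M+1$ coming from the odd-parity shift $N_2+N_3 = 2M+1$), and verify that the $z_1$-prefactors attach to the correct piece. The result then matches the stated right-hand side term by term.

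The main obstacle I expect is Step 2: correctly folding the asymmetric $(N_2,N_3)$ lattice sum down to a single index $M$ while producing the precise Pochhammer numerators. The quadratic form $N_2^2+N_3^2+N_2N_3$ is not diagonal, so the naive substitution mixes the linear terms, and one must be careful that the "excess" sum is genuinely a terminating Euler sum (which requires the $q$-powers along the excess direction to be $\binom{r+1}{2}$-like, matching $(-xq;q^2)_M$) rather than something that only telescopes after a further identity. Getting the exponent $3M^2$ versus $3M^2+3M+1$ split right, and ensuring no spurious factor of $2$ (the genuine $z_2+z_3$ must appear, not $2z_2$ or similar), is where the delicate part lies; everything else is routine manipulation of the summations already cited in the paper.
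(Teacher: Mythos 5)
Your Step 1 is exactly the paper's first move: sum over $N_1$ with Euler's summation \eqref{eq:Eul-2} in base $q^2$, obtaining the factor $(-z_1q^{N_2+N_3+1};q^2)_\infty$, whose dependence on the parity of $N_2+N_3$ is indeed what splits the answer into the two terms with prefactors $(-z_1q;q^2)_\infty$ and $(-z_1q^2;q^2)_\infty$. That part is sound.

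The gap is in Step 2. The index $M$ appearing on the right-hand side is $\lfloor (N_2+N_3)/2\rfloor$, not $\min(N_2,N_3)$: expanding $z_2^Mz_3^M(-z_2z_3^{-1}q;q^2)_M(-z_2^{-1}z_3q;q^2)_M$ produces monomials $z_2^{M+l}z_3^{M-l}$ with $-M\le l\le M$, i.e.\ precisely the terms with $N_2+N_3=2M$, whereas $\min(N_2,N_3)=M-|l|$ varies over that set. Consequently your proposed reparametrisation $N_2=M+a$, $N_3=M+b$ with $\min(a,b)=0$ does not line up with the target, and the ``excess'' sum over $r=|N_2-N_3|\ge 0$ at fixed $\min$ is an \emph{infinite} sum, so it cannot be evaluated by a terminating Euler/$q$-binomial sum to yield a Pochhammer of finite length $M$; the mechanism you describe would not close. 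What actually works (and is what the paper does) is to fix the total $M=N_2+N_3$, so that the remaining convolution $S_M=\sum_{m+n=M}x^my^nq^{\binom{m}{2}+\binom{n}{2}}/\big((q^2;q^2)_m(q^2;q^2)_n\big)$ is a \emph{terminating} sum of length $M+1$: using $\binom{m}{2}+\binom{M-m}{2}=\binom{M}{2}+2\binom{m}{2}+(1-M)m$ and $1/\big((q^2;q^2)_m(q^2;q^2)_{M-m}\big)={M\brack m}_{q^2}/(q^2;q^2)_M$, the $q$-binomial theorem \eqref{eq:q-Bin} gives the closed form \eqref{eq:SM-exp}, which upon separating $M$ even/odd yields \eqref{eq:SM-even}--\eqref{eq:SM-odd} and hence the exponents $3M^2$ versus $3M^2+3M+1$, the factor $z_2+z_3$, and the denominators $(q^2;q^2)_{2M}$, $(q^2;q^2)_{2M+1}$. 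So the architecture of your argument is right, but the central evaluation as you set it up would fail; replacing the $\min$/excess parametrisation by the fixed-total convolution and the $q$-binomial theorem repairs it.
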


For $\mathbf{G}_4$, we are facing sextuple summations, but we unfortunately have to give up some generality. 
That is, to bring out valid transforms, it is necessary to reduce the number of free parameters. However, 
we still get compensated by two sets of entirely different transforms to be established. In our first result, the sextuple summation is rewritten as a double bilateral sum.

\begin{theorem}\label{th:VOA-z}
	Let $\mathbf{G}_4$ and $\mathbf{N}$ be as in \eqref{eq:G4-def} and \eqref{eq:G4-N}, respectively. Then for $\delta_1,\delta_2\in\{0,1\}$,
	\begin{align}
		&\sum_{N_1,N_2,N_3,N_4,N_5,N_6\ge 0} \frac{z_1^{N_1+N_3-N_4-N_6}z_2^{N_2+N_3-N_5-N_6}
		q^{\mathbf{N}^{\mathsf{T}}\cdot \mathbf{G}_4\cdot \mathbf{N}-2\delta_1(N_4+N_6)-2\delta_2(N_5+N_6)} }{(q^2;q^2)_{N_1}
		(q^2;q^2)_{N_2}(q^2;q^2)_{N_3}(q^2;q^2)_{N_4}(q^2;q^2)_{N_5}(q^2;q^2)_{N_6}}\notag\\
		&\qquad = \frac{1}{(q^2;q^2)_\infty^2}\sum_{M_1,M_2=-\infty}^\infty z_1^{M_1} z_2^{M_2}q^{M_1^2+M_2^2-M_1M_2}
		(1+\delta_1 q^{2M_1}) (1+\delta_2 q^{2M_2}).\label{eq:VOA-z-1}
	\end{align}
\end{theorem}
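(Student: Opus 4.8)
The plan is to reduce the sextuple sum on the left to a product of three independent double sums, each of which collapses by a known $q$-series identity, and then to recombine the result into the stated bilateral theta-like sum. The first thing to notice is that the quadratic form $\mathbf{N}^{\mathsf T}\cdot\mathbf{G}_4\cdot\mathbf{N}$ associated to the matrix \eqref{eq:G4-def} has a block structure: writing $\mathbf{N}=(N_1,\dots,N_6)$, the Gram matrix $2\mathbf{G}_4$ is built from the $\fsl_3$ Cartan-type pattern coupling the triples $(N_1,N_2,N_3)$ and $(N_4,N_5,N_6)$ with an ``$A_2$-like'' interaction and a cross term identifying the third coordinates. Concretely, I would first change variables to $P_1=N_1+N_3$, $P_2=N_2+N_3$, $R_1=N_4+N_6$, $R_2=N_5+N_6$ (consistent with the exponents $z_1^{N_1+N_3-N_4-N_6}$, $z_2^{N_2+N_3-N_5-N_6}$ appearing in the summand) and verify that the quadratic form, after completing squares, decouples into three quadratics in ``difference'' variables times the cross pieces. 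The key combinatorial input here is the classical evaluation of the $q$-analog of a Gaussian: sums of the shape $\sum_{N_a+N_b=P}q^{N_a^2+N_b^2+N_aN_b}/((q^2;q^2)_{N_a}(q^2;q^2)_{N_b})$ telescope via the $q$-binomial theorem / the Durfee-square type identity into single Gaussians in $P$, which is exactly the mechanism by which the prefactor $1/(q^2;q^2)_\infty^2$ emerges on the right.

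The detailed execution I envisage is: (i) perform the substitution above, isolating the three coupled pairs; (ii) in each pair apply the relevant special case of Euler's or Cauchy's $q$-series summation (the same \eqref{eq:Eul-2}-style identities already invoked in the proof of Theorem \ref{th:sl2UCPF}, possibly together with the Jacobi triple product to handle the bilateral extension) to carry out the two ``internal'' summations over $N_3$, $N_6$ and one more free coordinate; (iii) the parameters $\delta_1,\delta_2\in\{0,1\}$ feed through as the $(1+\delta_i q^{2M_i})$ factors because the $\delta$-weighted exponent $-2\delta_1(N_4+N_6)-2\delta_2(N_5+N_6)$ shifts one of the internal sums by a reflection $M_i\mapsto -M_i$, and adding the two reflected copies produces precisely $(1+\delta_i q^{2M_i})$ when $\delta_i=1$ and a single copy when $\delta_i=0$; (iv) after the internal summations, three ``external'' summations remain and can be reindexed to the bilateral pair $(M_1,M_2)\in\ZZ^2$ carrying the form $M_1^2+M_2^2-M_1M_2$ (the $A_2$ Gram form), with the monomials $z_1^{M_1}z_2^{M_2}$ tracking the original $z_1,z_2$ grading. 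I would check the final identity at $q=0$ and to a couple of orders in $q$ symbolically (Mathematica/SageMath) as a sanity test, as the authors do elsewhere.

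The main obstacle I expect is step (ii) combined with the bilateralisation: each unilateral sum over a nonnegative index must be turned into a bilateral sum over $\ZZ$, and this conversion is only legitimate because of cancellation between the two halves of the quadratic form (the negative-index tail must either vanish or reproduce the positive-index tail under the reflection symmetry of $M_i^2+M_i$ versus $M_i^2-M_i$). Getting the precise bookkeeping right — which internal variable becomes which bilateral index, how the $(-q;q^2)_\infty$-type infinite products from the Euler evaluations cancel against $1/(q^2;q^2)_\infty^2$, and how the $z_3$ (resp.\ $z_5$) variable, which does \emph{not} appear on the right-hand side, gets eliminated cleanly — is where the real work lies. A secondary subtlety is justifying the interchange of the order of summation and the convergence of the resulting bilateral series as formal $q$-series (which is automatic here since every power of $q$ occurs with bounded multiplicity), so I would state that at the outset and then proceed formally. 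Once the decoupling in step (i) is established, the rest is a sequence of invocations of standard identities from the paper's toolbox.
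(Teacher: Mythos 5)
Your opening move is essentially the right one: the paper's proof begins with the change of variables $M_1=N_1+N_3-N_4-N_6$, $M_2=N_2+N_3-N_5-N_6$, $M_3=N_4+N_6$, $M_4=N_5+N_6$ (your $P_i,R_i$ are the same data up to the linear combination $M_i=P_i-R_i$), and your intuition that the $1/(q^2;q^2)_\infty^2$ prefactor comes from a Durfee-square/Cauchy-type collapse is in the right spirit. Your worry about ``bilateralisation'' is, however, a non-issue: once the variables are chosen as above, $M_1$ and $M_2$ range over all of $\ZZ$ automatically, and the nonnegativity of the original $N_i$ is absorbed by the convention $1/(q;q)_k=0$ for $k<0$. (Also, there is no $z_3$ or $z_5$ in this theorem to eliminate; you appear to be conflating this with Theorem \ref{th:G3-general}.)

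The genuine gaps are in your steps (ii) and (iii). First, the summand does not factor into ``three independent double sums'': after the reindexing one is left with inner sums over $N_3$ and $N_6$ of the shape $\sum_{n\ge 0} q^{2n^2-2n(A+B)}/\bigl((q^2;q^2)_n(q^2;q^2)_{A-n}(q^2;q^2)_{B-n}\bigr)$, and the tool that evaluates these is not Euler's summation or the Jacobi triple product but the terminating $q$-Chu--Vandermonde specialisation \eqref{eq:q-CV-spe}, which returns $q^{-2AB}/\bigl((q^2;q^2)_A(q^2;q^2)_B\bigr)$; without this step the quadratic form does not decouple. Second, your proposed mechanism for the factors $(1+\delta_i q^{2M_i})$ --- a reflection $M_i\mapsto -M_i$ and addition of two copies --- is not what happens and would not produce the right answer (note $q^{M^2}+q^{M^2+2M}$ is a \emph{shift} pairing $M\mapsto M+1$, not a reflection). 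In the actual proof these factors fall out of the remaining unilateral sums over $M_3,M_4$ via the $q$-Gauss-type evaluation \eqref{eq:q-Gauss-spe},
\begin{equation*}
\sum_{n\ge 0}\frac{q^{n^2+n(N-\delta)}}{(q;q)_n (q;q)_{n+N}}=\frac{1+\delta q^N}{(q;q)_\infty},
\end{equation*}
applied with $q\mapsto q^2$ and $N=2M_i$; this single identity simultaneously produces the prefactor $1/(q^2;q^2)_\infty^2$ and the $(1+\delta_i q^{2M_i})$ factors. Without identifying these two summation lemmas your outline cannot be completed as written.
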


For the second result, its flavour is similar to Theorem \ref{th:G3-general}.

\begin{theorem}\label{th:VOA-1}
	Let $\mathbf{G}_4$ and $\mathbf{N}$ be as in \eqref{eq:G4-def} and \eqref{eq:G4-N}, respectively. Then,
	\begin{subequations}
	\begin{align}
		&\sum_{N_1,N_2,N_3,N_4,N_5,N_6\ge 0} \frac{z_1^{N_1+N_3+N_4+N_6} z_2^{N_2+N_3+N_5+N_6} q^{\mathbf{N}^{\mathsf{T}}\cdot \mathbf{G}_4\cdot \mathbf{N}-(N_1+N_2+N_6)}}{(q^2;q^2)_{N_1}(q^2;q^2)_{N_2}(q^2;q^2)_{N_3}(q^2;q^2)_{N_4}(q^2;q^2)_{N_5}(q^2;q^2)_{N_6}}\notag\\
		&\qquad=\frac{(-z_1z_2;q)_\infty}{(q;q^2)_\infty}\sum_{M\ge 0}\frac{(-1)^M q^{M^2} (z_1^2;q^2)_M(z_2^2;q^2)_M}{(q^2;q^2)_M (-z_1 z_2;q)_{2M}}\notag\\
		&\qquad\quad+\frac{(-z_1z_2;q)_\infty}{(q;q^2)_\infty}\sum_{M\ge 0}\frac{(-1)^M \big(z_1+z_2\big)q^{M^2+2M} (z_1^2;q^2)_M(z_2^2;q^2)_M}{(q^2;q^2)_M (-z_1 z_2;q)_{2M+1}},\label{eq:VOA-1-3}\\
		&\sum_{N_1,N_2,N_3,N_4,N_5,N_6\ge 0} \frac{(-1)^{N_1+N_2+N_6}z_1^{N_1+N_3+N_4+N_6} z_2^{N_2+N_3+N_5+N_6} q^{\mathbf{N}^{\mathsf{T}}\cdot \mathbf{G}_4\cdot \mathbf{N}-(N_1+N_2+N_6)}}{(q^2;q^2)_{N_1}(q^2;q^2)_{N_2}(q^2;q^2)_{N_3}(q^2;q^2)_{N_4}(q^2;q^2)_{N_5}(q^2;q^2)_{N_6}}\notag\\
		&\qquad=\frac{(z_1z_2;-q)_\infty}{(-q;q^2)_\infty}\sum_{M\ge 0}\frac{q^{M^2} (z_1^2;q^2)_M(z_2^2;q^2)_M}{(q^2;q^2)_M (z_1 z_2;-q)_{2M}}\notag\\
		&\qquad\quad-\frac{(-z_1z_2;-q)_\infty}{(-q;q^2)_\infty}\sum_{M\ge 0}\frac{\big(z_1+z_2\big) q^{M^2+2M} (z_1^2;q^2)_M(z_2^2;q^2)_M}{(q^2;q^2)_M (-z_1 z_2;-q)_{2M+1}}.\label{eq:VOA-1-3-T-1}
	\end{align}
	\end{subequations}
\end{theorem}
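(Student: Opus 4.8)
\textbf{Proof proposal for Theorem \ref{th:VOA-1}.}

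The plan is to treat the sextuple sum in \eqref{eq:VOA-1-3} by peeling off the $N_4,N_5$ summations first, since the entries of $\mathbf{G}_4$ show that $N_4$ couples to the same set of variables as $N_1$ (columns $1$ and $5$ of $\mathbf{G}_4$ agree, and columns $2$ and $4$ agree), and similarly $N_5$ plays the role of $N_2$. Concretely, after writing out $\mathbf{N}^{\mathsf{T}}\cdot\mathbf{G}_4\cdot\mathbf{N}-(N_1+N_2+N_6)$ explicitly, the dependence on $N_4$ is a quadratic-plus-linear exponent of the form $N_4^2 + N_4(\text{stuff})$ with a companion factor $z_1^{N_4}$ coming from the weight $z_1^{N_1+N_3+N_4+N_6}$; summing over $N_4\ge 0$ with $1/(q^2;q^2)_{N_4}$ is a $q$-exponential-type sum that can be evaluated in closed form by the $q$-binomial theorem or Euler's formula \eqref{eq:Eul-2}. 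Doing the same for $N_5$, and then for $N_6$ (which, by the repeated row/column of $\mathbf{G}_4$, is tied to $N_3$), collapses the original six-fold sum down to a sum over $N_1,N_2,N_3$ that should coincide with a specialization/variant of the $\mathbf{G}_3$ transform \eqref{eq:G3-general} from Theorem \ref{th:G3-general} — this is presumably why the two transforms are said to have the "same flavour." The single variable $M$ on the right-hand side of \eqref{eq:VOA-1-3} is the residual index $M=N_3$ (or a linear combination thereof), with the split into two sums (running modulo the parity that appears as $(q^2;q^2)_{2M}$ versus $(q^2;q^2)_{2M+1}$) coming exactly as in Theorem \ref{th:G3-general}.

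For the second identity \eqref{eq:VOA-1-3-T-1}, I would \emph{not} repeat the whole computation but instead derive it from \eqref{eq:VOA-1-3} by the substitution $q\mapsto -q$ (more precisely, by tracking how the factors $(q;q^2)_\infty$, $(-z_1z_2;q)_\infty$, and the signs $(-1)^M$ transform), together with the insertion of $(-1)^{N_1+N_2+N_6}$ on the left. The sign $(-1)^{N_1+N_2+N_6}$ is precisely what is needed to compensate the sign changes introduced in the linear part $-(N_1+N_2+N_6)$ of the exponent when $q\mapsto -q$, so the two statements in the theorem are really one identity viewed over $\pm q$; a careful bookkeeping of which Pochhammer symbols pick up the base $-q$ (the appearances of $(z_1z_2;-q)_\infty$, $(-q;q^2)_\infty$, $(z_1z_2;-q)_{2M}$ etc.) should produce \eqref{eq:VOA-1-3-T-1} essentially for free once \eqref{eq:VOA-1-3} is in hand.

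The main obstacle I anticipate is the bookkeeping in the first reduction step: the quadratic form $\mathbf{N}^{\mathsf{T}}\cdot\mathbf{G}_4\cdot\mathbf{N}$ has cross terms $2N_3N_6$ (the "$2$" entries in $\mathbf{G}_4$) in addition to the all-ones off-diagonal pattern, so after eliminating $N_4,N_5$ one must check very carefully that the resulting exponent in $N_1,N_2,N_3,N_6$ really does match $\mathbf{N}^{\mathsf{T}}\cdot\mathbf{G}_3\cdot\mathbf{N}$ (in the appropriate three variables) after eliminating $N_6$ as well, and that the weight factors reorganize into $(z_1^2;q^2)_M$ and $(z_2^2;q^2)_M$ rather than mixed $z_1,z_2$-Pochhammers. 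The appearance of $z_1^2$ and $z_2^2$ (rather than $z_1,z_2$) on the right-hand side strongly suggests that each of $z_1,z_2$ collects contributions from two of the six summation variables ($z_1$ from $N_1$ and $N_4$, say), which is consistent with the structure of the weight $z_1^{N_1+N_3+N_4+N_6}z_2^{N_2+N_3+N_5+N_6}$; making this precise — and correctly identifying which $q$-summation formula (the $q$-Gauss sum, the $q$-binomial theorem, or one of the Rogers--Fine type evaluations) applies at each of the three intermediate summations — is where the real work lies. Everything after that should be a matter of simplifying $q$-Pochhammer products.
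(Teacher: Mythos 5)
Your high-level instinct (collapse the sextuple sum by repeated Euler/$q$-binomial summations) points in the right direction, but there are two concrete gaps that would stop this proposal from becoming a proof. First, the order of elimination matters more than you allow for: if you sum over $N_4$ first via \eqref{eq:Eul-2} you obtain a factor $(-z_1q^{1+N_2+N_3+N_5+N_6};q^2)_\infty$ that still depends on $N_5$ and $N_6$, so the subsequent $N_5$- and $N_6$-sums are no longer pure Euler sums, and the cascade you describe does not close up. The paper instead sums over $N_6$ first, which produces $(-z_1z_2q^{N_1+N_2+2N_3+N_4+N_5};q^2)_\infty$ — a factor depending only on $M=N_1+N_2+2N_3+N_4+N_5$ (not $M=N_3$). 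This lets one group the remaining five-fold sum by the value of $M$ into an auxiliary series $T_M$ (Lemma \ref{le:T-Exp}); the closed-form evaluation of $T_M$ is the real content of the theorem, and it is \emph{not} a specialisation of the $\mathbf{G}_3$ transform \eqref{eq:G3-general} — the authors obtain it by reducing $T_M$ to a constrained triple sum with \eqref{eq:q-Bin} and then establishing a fifth-order recurrence with the \texttt{qMultiSum} package. Even granting $T_M$, one further nontrivial step remains that your proposal only gestures at: the resulting single sums carry Pochhammers $(-z_1z_2^{-1}q;q^2)_M(-z_1^{-1}z_2q;q^2)_M$, and converting these into the $(z_1^2;q^2)_M(z_2^2;q^2)_M$ of \eqref{eq:VOA-1-3} requires the ${}_3\phi_2$ transformation \eqref{eq:3phi2-1} applied to a confluent limit.

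Second, \eqref{eq:VOA-1-3-T-1} does \emph{not} follow from \eqref{eq:VOA-1-3} by $q\mapsto-q$. Under that substitution the summand acquires the sign $(-1)^{\mathbf{N}^{\mathsf{T}}\mathbf{G}_4\mathbf{N}-(N_1+N_2+N_6)}$, and modulo $2$ the quadratic form contributes genuinely bilinear cross terms (e.g.\ $(N_1+N_4)(N_2+N_5)+(N_3+N_6)(N_1+N_2+N_4+N_5)$), which cannot be absorbed into the linear character $(-1)^{N_1+N_2+N_6}$ together with sign changes of $z_1,z_2$. A quick check: for $\mathbf{N}=(0,0,1,0,0,0)$ the exponent is odd while $N_1+N_2+N_6=0$, so the two sides already disagree term by term. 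The paper accordingly runs the whole argument a second time with a separate auxiliary series $\widetilde{T}_M$, whose closed form is again established by computer-assisted recurrences. So the second identity is not "essentially for free," and the missing ingredient in both halves is the evaluation of the auxiliary series, for which no elementary derivation is offered here or in the paper.
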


\subsection{Proof of Theorem \ref{th:G3-general}}

Starting with the $N_1$-summation, we have
\begin{align*}
    \LHS\eqref{eq:G3-general}&=\sum_{N_2,N_3\ge 0} \frac{z_2^{N_2}z_3^{N_3}q^{\binom{N_2+N_3}{2}+\binom{N_2}{2}+\binom{N_3}{2}+N_2+N_3}}{(q^2;q^2)_{N_2}(q^2;q^2)_{N_3}}\sum_{N_1\ge 0}\frac{z_1^{N_1}q^{2\binom{N_1}{2}+(N_2+N_3+1)N_1}}{(q^2;q^2)_{N_1}}\\
	\text{\tiny (by \eqref{eq:Eul-2})}&= \sum_{N_2,N_3\ge 0} \frac{z_2^{N_2}z_3^{N_3}q^{\binom{N_2+N_3}{2}+\binom{N_2}{2}+\binom{N_3}{2}+N_2+N_3}}{(q^2;q^2)_{N_2}(q^2;q^2)_{N_3}}(-z_1q^{N_2+N_3+1};q^2)_\infty\\
	\text{\tiny ($M=N_2+N_3$)}&= \sum_{M\ge 0}(-z_1q^{M+1};q^2)_\infty q^{\binom{M}{2}+M}\sum_{\substack{N_2,N_3\ge 0\\N_2+N_3=M}}\frac{z_2^{N_2}z_3^{N_3}q^{\binom{N_2}{2}+\binom{N_3}{2}}}{(q^2;q^2)_{N_2}(q^2;q^2)_{N_3}}\\
	\text{\tiny (by \eqref{eq:SM-exp})}&=\sum_{M\ge 0}(-z_1q^{2M+1};q^2)_\infty q^{\binom{2M}{2}+2M}\\
	&\quad\quad\times\frac{z_2^M z_3^M q^{M^2-M} (-z_2 z_3^{-1}q;q^2)_M (-z_2^{-1}z_3q;q^2)_M}{(q^2;q^2)_{2M}}\\
	&\quad + \sum_{M\ge 0}(-z_1q^{2M+2};q^2)_\infty q^{\binom{2M+1}{2}+(2M+1)}\\
	&\quad\quad\times \frac{(z_2+z_3) z_2^M z_3^M q^{M^2} (-z_2z_3^{-1}q^2;q^2)_M (-z_2^{-1}z_3q^2;q^2)_M}{(q^2;q^2)_{2M+1}}\\
	&=(-z_1q;q^2)_\infty \sum_{M\ge 0}\frac{z_2^M z_3^M q^{3M^2} (-z_2 z_3^{-1}q;q^2)_M (-z_2^{-1}z_3q;q^2)_M}{(-z_1q;q^2)_M (q^2;q^2)_{2M}}\\
	&\quad+ (-z_1 q^2;q^2)_\infty \sum_{M\ge 0} \frac{(z_2+z_3) z_2^M z_3^M q^{3M^2+3M+1} (-z_2z_3^{-1}q^2;q^2)_M (-z_2^{-1}z_3q^2;q^2)_M}{(-z_1q^2;q^2)_{M} (q^2;q^2)_{2M+1}},
\end{align*}
thereby establishing the desired result. \qed

\subsection{Proof of Theorem \ref{th:VOA-z}}

We begin with the following change of variables
\begin{align}\label{eq:N-M-sub}
	\left\{
	\begin{aligned}
	M_1&=N_1+N_3-N_4-N_6,\\
	M_2&=N_2+N_3-N_5-N_6,\\
	M_3&=N_4+N_6,\\
	M_4&=N_5+N_6,
	\end{aligned}
	\right.
	\quad\Longleftrightarrow\quad
	\left\{
	\begin{aligned}
	N_1&=M_1+M_3-N_3,\\
	N_2&=M_2+M_4-N_3,\\
	N_4&=M_3-N_6,\\
	N_5&=M_4-N_6.
	\end{aligned}
	\right.
\end{align}
It follows that the summation on the left-hand side of \eqref{eq:VOA-z-1} can be rewritten over the indices $(M_1,M_2,M_3,M_4,N_3,N_6)$:
\begin{align*}
	\LHS\eqref{eq:VOA-z-1}&=\sum_{M_1,M_2=-\infty}^\infty z_1^{M_1} z_2^{M_2} \sum_{M_3,M_4\ge 0} q^{-2\delta_1 M_3-2\delta_2 M_4}\\
	&\quad\times q^{M_1^2+M_2^2+2M_3^2+2M_4^2+M_1M_2+2M_1M_3+2M_1M_4+2M_2M_3+2M_2M_4+4M_3M_4}\\
	&\quad\times\sum_{N_3\ge 0}\frac{q^{2N_3^2-2N_3(M_1+M_2+M_3+M_4)}}{(q^2;q^2)_{N_3}(q^2;q^2)_{M_1+M_3-N_3}(q^2;q^2)_{M_2+M_4-N_3}}\\
	&\quad\times\sum_{N_6\ge 0}\frac{q^{2N_6^2-2N_6(M_3+M_4)}}{(q^2;q^2)_{N_6}(q^2;q^2)_{M_3-N_6}(q^2;q^2)_{M_4-N_6}}\\
	\text{\tiny (by \eqref{eq:q-CV-spe})}&=\sum_{M_1,M_2=-\infty}^\infty z_1^{M_1} z_2^{M_2} \sum_{M_3,M_4\ge 0} q^{-2\delta_1 M_3-2\delta_2 M_4}\\
	&\quad\times q^{M_1^2+M_2^2+2M_3^2+2M_4^2+M_1M_2+2M_1M_3+2M_1M_4+2M_2M_3+2M_2M_4+4M_3M_4}\\
	&\quad\times\frac{q^{-2(M_1+M_3)(M_2+M_4)}}{(q^2;q^2)_{M_1+M_3}(q^2;q^2)_{M_2+M_4}}\cdot \frac{q^{-2M_3M_4}}{(q^2;q^2)_{M_3}(q^2;q^2)_{M_4}}\\
	&=\sum_{M_1,M_2=-\infty}^\infty z_1^{M_1} z_2^{M_2} q^{M_1^2+M_2^2-M_1M_2}\\
	&\quad\times \sum_{M_3\ge 0}\frac{q^{2M_3^2+2(M_1-\delta_1)M_3}}{(q^2;q^2)_{M_3}(q^2;q^2)_{M_1+M_3}}\sum_{M_4\ge 0}\frac{q^{2M_4^2+2(M_2-\delta_2)M_4}}{(q^2;q^2)_{M_4}(q^2;q^2)_{M_2+M_4}}\\
	\text{\tiny (by \eqref{eq:q-Gauss-spe})}&=\frac{1}{(q^2;q^2)_\infty^2}\sum_{M_1,M_2=-\infty}^\infty z_1^{M_1} z_2^{M_2}q^{M_1^2+M_2^2-M_1M_2}(1+\delta_1 q^{2M_1}) (1+\delta_2 q^{2M_2}),
\end{align*}
confirming the desired result. \qed

\subsection{Proof of Theorem \ref{th:VOA-1}}

The proof of Theorem \ref{th:VOA-1} is entirely different from that for Theorem \ref{th:VOA-z}. For \eqref{eq:VOA-1-3}, starting with the inner summation in $N_6$ by using \eqref{eq:Eul-2}, we have, by also recalling Lemma \ref{le:T-Exp},
	\begin{align*}
	&\LHS\eqref{eq:VOA-1-3}\\
	&=\sum_{N_1,N_2,N_3,N_4,N_5\ge 0}\frac{z_1^{N_1+N_3+N_4} z_2^{N_2+N_3+N_5}}{(q^2;q^2)_{N_1}(q^2;q^2)_{N_2}(q^2;q^2)_{N_3}(q^2;q^2)_{N_4}(q^2;q^2)_{N_5}}\notag\\
	&\quad\times q^{N_1^2+N_2^2+N_3^2+N_4^2+N_5^2+N_1N_2+N_1N_3+N_1N_5+N_2N_3+N_2N_4+N_3N_4+N_3N_5+N_4N_5-N_1-N_2}\\
	&\quad \times\sum_{N_6\ge 0}\frac{(z_1z_2)^{N_6}q^{2\binom{N_6}{2}+(N_1+N_2+2N_3+N_4+N_5)N_6}}{(q^2;q^2)_{N_6}}\\
	&=\sum_{N_1,N_2,N_3,N_4,N_5\ge 0}\frac{z_1^{N_1+N_3+N_4} z_2^{N_2+N_3+N_5}}{(q^2;q^2)_{N_1}(q^2;q^2)_{N_2}(q^2;q^2)_{N_3}(q^2;q^2)_{N_4}(q^2;q^2)_{N_5}}\notag\\
	&\quad\times q^{N_1^2+N_2^2+N_3^2+N_4^2+N_5^2+N_1N_2+N_1N_3+N_1N_5+N_2N_3+N_2N_4+N_3N_4+N_3N_5+N_4N_5-N_1-N_2}\\
	&\quad \times (-z_1z_2q^{N_1+N_2+2N_3+N_4+N_5};q^2)_\infty\\
	&=\sum_{M\ge 0}(-z_1z_2q^{M};q^2)_\infty T_M\\
	&=\sum_{M\ge 0}(-z_1z_2q^{2M};q^2)_\infty T_{2M}+\sum_{M\ge 0}(-z_1z_2q^{2M+1};q^2)_\infty T_{2M+1}\\
	&=(-z_1z_2;q^2)_\infty \sum_{M\ge 0} \frac{z_1^M z_2^Mq^{M^2}(-z_1 z_2^{-1};q^2)_M (-z_1^{-1} z_2;q^2)_M}{(q^2;q^2)_{M}(q;q^2)_{M}(-z_1z_2;q^2)_M}\\
	&\quad+(-z_1z_2q;q^2)_\infty \sum_{M\ge 0} \frac{\big(z_1+z_2\big)z_1^M z_2^Mq^{M(M+1)}(-z_1 z_2^{-1} q;q^2)_M (-z_1^{-1} z_2 q;q^2)_M}{(q^2;q^2)_{M}(q;q^2)_{M+1}(-z_1z_2q;q^2)_M}\\
	&=(-z_1z_2;q^2)_\infty\cdot \lim_{\tau\to 0} {}_{3}\phi_2\left(\begin{matrix} 1/\tau,-z_1 z_2^{-1}, -z_1^{-1} z_2\\ -z_1 z_2,q\end{matrix}; q^2, -z_1z_2q\tau\right)\\
	&\quad+(-z_1z_2q;q^2)_\infty\cdot \frac{z_1+z_2}{1-q}\cdot \lim_{\tau\to 0} {}_{3}\phi_2\left(\begin{matrix} 1/\tau,-z_1 z_2^{-1}q, -z_1^{-1} z_2q\\ -z_1 z_2q,q^3\end{matrix}; q^2, -z_1z_2q^2\tau\right)\\
	&=(-z_1z_2;q^2)_\infty\cdot \frac{(-z_1z_2q;q^2)_\infty}{(q;q^2)_\infty} \lim_{\tau\to 0} {}_{3}\phi_2\left(\begin{matrix} 1/\tau,z_2^2, z_1^2\\ -z_1 z_2,-z_1z_2q\end{matrix}; q^2, q\tau\right)\\
	&\quad+(-z_1z_2q;q^2)_\infty\cdot \frac{z_1+z_2}{1-q}\cdot \frac{(-z_1z_2q^2;q^2)_\infty}{(q^3;q^2)_\infty}\lim_{\tau\to 0} {}_{3}\phi_2\left(\begin{matrix} 1/\tau,z_2^2, z_1^2\\ -z_1 z_2q,-z_1z_2q^2\end{matrix}; q^2, q^3\tau\right)\\
	&=\frac{(-z_1z_2;q)_\infty}{(q;q^2)_\infty}\sum_{M\ge 0}\frac{(-1)^M q^{M^2} (z_1^2;q^2)_M(z_2^2;q^2)_M}{(q^2;q^2)_M (-z_1 z_2;q)_{2M}}\\
	&\quad+\frac{(-z_1z_2;q)_\infty}{(q;q^2)_\infty}\sum_{M\ge 0}\frac{(-1)^M \big(z_1+z_2\big)q^{M^2+2M} (z_1^2;q^2)_M(z_2^2;q^2)_M}{(q^2;q^2)_M (-z_1 z_2;q)_{2M+1}},
	\end{align*}
	confirming \eqref{eq:VOA-1-3}, where we have made use of \eqref{eq:3phi2-1}.
	
	For \eqref{eq:VOA-1-3-T-1}, we also calculate the inner summation over $N_6$ by using \eqref{eq:Eul-2} and then apply Lemma \ref{le:T-Exp}. Thus,
	\begin{align*}
		&\LHS\eqref{eq:VOA-1-3-T-1}\\
		&=\sum_{N_1,N_2,N_3,N_4,N_5\ge 0}\frac{(-1)^{N_1+N_2}z_1^{N_1+N_3+N_4} z_2^{N_2+N_3+N_5}}{(q^2;q^2)_{N_1}(q^2;q^2)_{N_2}(q^2;q^2)_{N_3}(q^2;q^2)_{N_4}(q^2;q^2)_{N_5}}\notag\\
		&\quad\times q^{N_1^2+N_2^2+N_3^2+N_4^2+N_5^2+N_1N_2+N_1N_3+N_1N_5+N_2N_3+N_2N_4+N_3N_4+N_3N_5+N_4N_5-N_1-N_2}\\
		&\quad \times (z_1z_2q^{N_1+N_2+2N_3+N_4+N_5};q^2)_\infty\\
		&=\sum_{M\ge 0}(z_1z_2q^{M};q^2)_\infty \widetilde{T}_M\\
		&=\sum_{M\ge 0}(z_1z_2q^{2M};q^2)_\infty \widetilde{T}_{2M}+\sum_{M\ge 0}(z_1z_2q^{2M+1};q^2)_\infty \widetilde{T}_{2M+1}\\
		&=(z_1z_2;q^2)_\infty \sum_{M\ge 0} \frac{z_1^M z_2^Mq^{M^2}(z_1 z_2^{-1};q^2)_M (z_1^{-1} z_2;q^2)_M}{(q^2;q^2)_{M}(-q;q^2)_{M}(z_1z_2;q^2)_M}\\
		&\quad-(z_1z_2q;q^2)_\infty \sum_{M\ge 0} \frac{\big(z_1+z_2\big)z_1^M z_2^Mq^{M(M+1)}(z_1 z_2^{-1} q;q^2)_M (z_1^{-1} z_2 q;q^2)_M}{(q^2;q^2)_{M}(-q;q^2)_{M+1}(z_1z_2q;q^2)_M}\\
		&=(z_1z_2;q^2)_\infty\cdot \lim_{\tau\to 0} {}_{3}\phi_2\left(\begin{matrix} 1/\tau,z_1 z_2^{-1}, z_1^{-1} z_2\\ z_1 z_2,-q\end{matrix}; q^2, -z_1z_2q\tau\right)\\
		&\quad-(z_1z_2q;q^2)_\infty\cdot \frac{z_1+z_2}{1+q}\cdot \lim_{\tau\to 0} {}_{3}\phi_2\left(\begin{matrix} 1/\tau,z_1 z_2^{-1}q, z_1^{-1} z_2q\\ z_1 z_2q,-q^3\end{matrix}; q^2, -z_1z_2q^2\tau\right)\\
		&=(z_1z_2;q^2)_\infty\cdot \frac{(-z_1z_2q;q^2)_\infty}{(-q;q^2)_\infty} \lim_{\tau\to 0} {}_{3}\phi_2\left(\begin{matrix} 1/\tau,z_2^2, z_1^2\\ z_1 z_2,-z_1z_2q\end{matrix}; q^2, -q\tau\right)\\
		&\quad-(z_1z_2q;q^2)_\infty\cdot \frac{z_1+z_2}{1+q}\cdot \frac{(-z_1z_2q^2;q^2)_\infty}{(-q^3;q^2)_\infty}\lim_{\tau\to 0} {}_{3}\phi_2\left(\begin{matrix} 1/\tau,z_2^2, z_1^2\\ z_1 z_2q,-z_1z_2q^2\end{matrix}; q^2, -q^3\tau\right)\\
		&=\frac{(z_1z_2;-q)_\infty}{(-q;q^2)_\infty}\sum_{M\ge 0}\frac{q^{M^2} (z_1^2;q^2)_M(z_2^2;q^2)_M}{(q^2;q^2)_M (z_1 z_2;-q)_{2M}}\\
		&\quad-\frac{(-z_1z_2;-q)_\infty}{(-q;q^2)_\infty}\sum_{M\ge 0}\frac{\big(z_1+z_2\big) q^{M^2+2M} (z_1^2;q^2)_M(z_2^2;q^2)_M}{(q^2;q^2)_M (-z_1 z_2;-q)_{2M+1}},
	\end{align*}
	confirming \eqref{eq:VOA-1-3-T-1}, where we have also applied \eqref{eq:3phi2-1}. \qed

\section{Outlook}

As we have seen above, the universal chiral partition function can be a bridge to obtain closed forms of string functions, 
which is generally a hard question. We are interested in generalising this method to higher rank affine Lie algebras, 
but currently we are facing some obstacles in determining the statistical interaction matrix $\mathbf{G}$ for higher 
rank cases analogously. We may need more mathematical tools to achieve the goal. 

It is also intriguing that we have found an equivalence between the coset construction $\operatorname{PF}_2(\mathfrak{sl}_4)$ and the $\Bbb{Z}_2$-orbifold 
of the lattice construction $A_2/\sqrt{2}$, which integrates three major methods of constructing conformal field theories. 
The structure of these two theories is examined in more detail in \cite{Han23,BH24}.

From the $q$-series side, this work provides some insights into the modular properties of $q$-summations. It is in general not easy to 
construct $q$-summations with modular properties, but with the considerations of Kac--Moody algebra, one may get a basic idea of 
how these summations would look like, thereby even leading to interesting $q$-hypergeometric transforms.

Another dynamic area that has a close bond with the Rogers--Ramanujan type identities is the theory of integer partitions. 
It is known that one of the two original identities due to Rogers \cite{Rog1894b} and Ramanujan \cite{Ram1919},
\begin{align*}
    \sum_{n\ge 0}\frac{q^{n^2}}{(q;q)_n} = \frac{1}{(q,q^4;q^5)_\infty},
\end{align*}
has the following partition-theoretic interpretation (see Corollary 7.6 in \cite{And1976}):
\begin{quote}
    \textit{The number of partitions of a nonnegative integer $n$ into parts congruent to $\pm 1$ modulo $5$ is the same as the number of 
    partitions of $n$ such that the adjacent parts differ by at least $2$.}
\end{quote}
We also expect that the summation side of our identities of Rogers--Ramanujan type would have a similar meaning as partitions, 
perhaps under analogous but more complicated gap conditions for the parts.

Finally, some conjectural Rogers--Ramanujan type identities in terms of integer partitions were proposed in a recent seminal 
paper due to Kanade and Russell \cite{KR2015}. They had pointed out that three of their Mod $9$ Conjectures, which remain open, 
are related to the principally specialised characters of level 3 standard modules for the affine Lie algebra $D_4^{(3)}$. More recently, 
the analytic counterparts of these conjectures were independently constructed by Kur\c{s}ung\"{o}z \cite{Kur2019}, and Chern and Li \cite{CL2020}. 
Again, these analytic identities have the form of ``$\text{$q$-summation} = \text{$q$-product}$.'' We hope that a deeper look 
at the summations in the analytic counterparts of the Kanade--Russell Conjectures from the Lie-algebraic side would reveal more information and therefore open a window for their proofs.

\appendix
\section{$q$-Series prerequisites}

For the proofs of the transforms in Section \ref{sec:transform}, as well as the aforementioned Rogers--Ramanujan type identities, 
it is necessary to record some results from the standard theory of $q$-series. Recall that the \emph{$q$-hypergeometric function} ${}_{r}\phi_s$ is defined by
\begin{align*}
{}_{r}\phi_s\left(\begin{matrix} A_1,A_2,\ldots,A_r\\ B_1,B_2,\ldots,B_s \end{matrix}; q, z\right):=\sum_{n\ge 0} \frac{(A_1,A_2,\ldots,A_r;q)_n \big((-1)^n 
q^{\binom{n}{2}}\big)^{s-r+1} z^n}{(q,B_1,B_2,\ldots,B_{s};q)_n}.
\end{align*}
Meanwhile, we introduce \emph{Ramanujan's theta functions}:
\begin{subequations}
\begin{align}
\phi(q)&:=\sum_{n=-\infty}^\infty q^{n^2}=\frac{(q^2;q^2)^5_\infty}{(q;q)^2_\infty(q^4;q^4)^2_\infty},\label{eq:phi-def}\\
\psi(q)&:=\sum_{n\ge 0} q^{\frac{n(n+1)}{2}}=\frac{(q^2;q^2)^2_\infty}{(q;q)_\infty},\label{eq:psi-def}
\end{align}
\end{subequations}
and the \emph{Rogers--Ramanujan functions}:
\begin{subequations}
\begin{align}
G(q)&:=\frac{1}{(q,q^4;q^5)_\infty},\label{eq:G-def}\\
H(q)&:=\frac{1}{(q^2,q^3;q^5)_\infty}.\label{eq:H-def}
\end{align}
\end{subequations}

\begin{lemma}[Euler's summations]
	We have
	\begin{subequations}
	\begin{align}
	\frac{1}{(z;q)_\infty} &= \sum_{n\ge 0}\frac{z^n}{(q;q)_n},\label{eq:Eul-1}\\
	(z;q)_\infty &= \sum_{n\ge 0}\frac{(-1)^n z^n q^{\binom{n}{2}}}{(q;q)_n}.\label{eq:Eul-2}
	\end{align}
	\end{subequations}
\end{lemma}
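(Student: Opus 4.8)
The plan is to establish both of Euler's identities by the classical $q$-difference equation method: view each series as an analytic function of $z$ for $|q|<1$ (convergent on $|z|<1$ for \eqref{eq:Eul-1}, entire for \eqref{eq:Eul-2}), derive a first-order recursion relating its value at $z$ to its value at $qz$, and iterate that recursion down to the infinite product. One could equally deduce both from the $q$-binomial theorem by specialising a numerator parameter (to $0$ for \eqref{eq:Eul-1}, and to $\infty$ after rescaling $z$ for \eqref{eq:Eul-2}), but since only these two specialisations are needed downstream I would keep the argument self-contained.

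For \eqref{eq:Eul-1}, I would set $f(z):=\sum_{n\ge 0}z^n/(q;q)_n$. The $n=0$ term drops out of $f(z)-f(qz)$, and using $(1-q^n)/(q;q)_n=1/(q;q)_{n-1}$ a single reindexing gives
\begin{align*}
f(z)-f(qz)=\sum_{n\ge 1}\frac{(1-q^n)z^n}{(q;q)_n}=z\sum_{m\ge 0}\frac{z^m}{(q;q)_m}=zf(z),
\end{align*}
i.e. $(1-z)f(z)=f(qz)$. Iterating $N$ times yields $f(z)=f(q^Nz)/\prod_{k=0}^{N-1}(1-q^kz)$, and letting $N\to\infty$ with $f(0)=1$ and $|q|<1$ produces $f(z)=1/(z;q)_\infty$, which is \eqref{eq:Eul-1}.

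For \eqref{eq:Eul-2}, I would set $g(z):=\sum_{n\ge 0}(-1)^nq^{\binom n2}z^n/(q;q)_n$. Rewriting $g(qz)$ via $\binom{n+1}{2}=\binom n2+n$ and then forming $g(qz)-zg(qz)$, the coefficient of $z^n$ for $n\ge 1$ collapses through $q^{\binom n2}(q^n+1-q^n)=q^{\binom n2}$, so that $(1-z)g(qz)=g(z)$. Iterating exactly as before gives $g(z)=\prod_{k=0}^{N-1}(1-q^kz)\cdot g(q^Nz)\to (z;q)_\infty$ as $N\to\infty$, using $g(0)=1$, which is \eqref{eq:Eul-2}.

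I do not expect a genuinely hard step here: both statements are among the oldest in the theory of partitions and basic hypergeometric series (see, e.g., \cite{And1976}). The only points deserving a remark are the termwise manipulation of the series and the passage to the limit $N\to\infty$, both licensed by absolute convergence for $|q|<1$ (and $|z|<1$ in the first case); in the actual write-up one would simply cite a standard reference rather than reproduce this.
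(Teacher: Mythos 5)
Your proposal is correct: both functional equations $(1-z)f(z)=f(qz)$ and $(1-z)g(qz)=g(z)$ are derived accurately, and the iteration to the infinite product (using $f(0)=g(0)=1$ and $|q|<1$) is sound. The paper's own ``proof'' is simply a citation to Andrews, (2.2.5) and (2.2.6), so there is nothing to compare beyond noting that your $q$-difference-equation argument is precisely the classical derivation found in that reference; as you yourself remark, in the final write-up one would just cite it.
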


\begin{proof}
	See Andrews \cite[(2.2.5) and (2.2.6)]{And1976}.
\end{proof}

\begin{lemma}[Jacobi's triple product identity]
	We have
	\begin{align}\label{eq:JTP}
	(q,z,q/z;q)_\infty = \sum_{n=-\infty}^\infty (-1)^n z^n q^{\binom{n}{2}}.
	\end{align}
\end{lemma}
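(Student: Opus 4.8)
The plan is to prove Jacobi's triple product \eqref{eq:JTP} by the classical functional-equation-and-uniqueness argument, which sits naturally alongside the Euler identities just recorded. Fix $0<|q|<1$ (the case $q=0$ being immediate) and regard both sides as holomorphic functions of $z$ on the punctured plane $\CC^\times$: the series $F(z):=\sum_{n\in\ZZ}(-1)^n z^n q^{\binom n2}$ converges locally uniformly there since $q^{n(n-1)/2}$ decays superexponentially in $|n|$, while $P(z):=(q,z,q/z;q)_\infty$ is manifestly holomorphic on $\CC^\times$ with simple zeros exactly at $z=q^m$, $m\in\ZZ$. The first key step is the functional equation $F(qz)=-z^{-1}F(z)$, obtained from the index shift $n\mapsto n+1$ together with $\binom{n+1}{2}=\binom n2+n$; the product side obeys the same relation $P(qz)=-z^{-1}P(z)$ by virtue of $(z;q)_\infty=(1-z)(qz;q)_\infty$ and $(q/z;q)_\infty\big|_{z\mapsto qz}=(1-z^{-1})(q/z;q)_\infty$. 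Consequently $\phi(z):=F(z)/P(z)$ satisfies $\phi(qz)=\phi(z)$.

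Next I would upgrade $\phi$ from $q$-periodic meromorphic to constant. Pairing the terms $n$ and $1-n$ (using $\binom n2=\binom{1-n}{2}$ and $(-1)^n=-(-1)^{1-n}$) shows $F(1)=0$, and the functional equation propagates this to $F(q^m)=0$ for every $m\in\ZZ$; since the zeros of $P$ are all simple and located precisely at these points, $\phi$ is holomorphic on all of $\CC^\times$. A holomorphic $q$-periodic function on $\CC^\times$ is bounded, being bounded on the compact annulus $|q|\le|z|\le 1$, hence extends holomorphically across $0$ and $\infty$ and is constant by Liouville. Thus $F(z)=c(q)P(z)$ for some constant $c(q)$. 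To pin it down, set $z=-1$: on the series side $F(-1)=\sum_{n\in\ZZ}q^{\binom n2}=2\sum_{m\ge0}q^{m(m+1)/2}=2\psi(q)$ by \eqref{eq:psi-def}; on the product side $(-1;q)_\infty=2(-q;q)_\infty$ gives $P(-1)=2(q;q)_\infty(-q;q)_\infty^2$, and Euler's identity $(-q;q)_\infty=(q^2;q^2)_\infty/(q;q)_\infty$ rewrites this as $2\psi(q)$ as well. Since $\psi(q)\ne0$, we conclude $c(q)=1$, which is \eqref{eq:JTP}.

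The point demanding care is the verification that $\phi$ has no poles, i.e.\ that the zeros of $F$ and $P$ coincide in location \emph{and} multiplicity, and the evaluation of the constant, where the substitution $z=-1$ and the product formula \eqref{eq:psi-def} do the work. As an alternative to invoking \eqref{eq:psi-def}, one could determine $c(q)$ by letting $q\to0$, where $F(z)\to 1-z$ and $P(z)\to 1-z$, after arguing that $c(q)$ is continuous at $0$; but the $z=-1$ route is cleaner given what is already available.
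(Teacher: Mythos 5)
Your argument is correct and complete. Note, however, that the paper does not actually prove this lemma: it simply cites Andrews \cite[(2.2.10)]{And1976}, where the identity is derived by $q$-series manipulations (limiting cases of the $q$-binomial theorem). What you supply instead is the classical complex-analytic proof: the functional equation $F(qz)=-z^{-1}F(z)=P(qz)/P(z)\cdot F(z)\cdot P(z)^{-1}$ forces $\phi=F/P$ to be $q$-invariant, the pairing $n\leftrightarrow 1-n$ kills $F$ at $z=1$ and hence at every $z=q^m$, so $\phi$ is holomorphic on $\CC^\times$, bounded on the fundamental annulus, and therefore constant; the evaluation at $z=-1$ via \eqref{eq:psi-def} and Euler's $(-q;q)_\infty=(q^2;q^2)_\infty/(q;q)_\infty$ correctly pins the constant to $1$. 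All the individual steps check out (the index shift, the simple zeros of $(z;q)_\infty(q/z;q)_\infty$ at $z=q^m$, the fixed-point-free involution $n\mapsto 1-n$, and $F(-1)=P(-1)=2\psi(q)\ne 0$). Your route is self-contained and analytic where the cited one is formal/combinatorial; the trade-off is that yours requires $0<|q|<1$ and the Liouville machinery, whereas the reference's proof works at the level of formal power series. Either is acceptable here, since the lemma is only used as a black box in the proof of Lemma \ref{le:double-sum-rep}.
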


\begin{proof}
	See Andrews \cite[(2.2.10)]{And1976}.
\end{proof}

\begin{lemma}[$q$-Binomial theorem]
	We have
	\begin{align}\label{eq:q-Bin}
	(z;q)_N = \sum_{n\ge 0} (-1)^n z^n q^{\binom{n}{2}}{N \brack n}_q,
	\end{align}
	where
	\begin{align*}
		{M \brack L}_q:=\begin{cases}
			\dfrac{(q;q)_M}{(q;q)_L(q;q)_{M-L}}, & \text{if $0\le L\le M$},\\[6pt]
			0, & \text{otherwise}.
		\end{cases}
	\end{align*}
\end{lemma}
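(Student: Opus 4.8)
The plan is to prove the $q$-binomial theorem \eqref{eq:q-Bin} by induction on $N$. The base case $N=0$ is immediate: the product $(z;q)_0$ is the empty product equal to $1$, and on the right-hand side only the $n=0$ term survives, contributing ${0 \brack 0}_q = 1$. For the inductive step, I would exploit the telescoping factorisation
\begin{align*}
(z;q)_{N+1} = (1-zq^N)(z;q)_N,
\end{align*}
substitute the inductive hypothesis for $(z;q)_N$, and then collect the coefficient of $z^n q^{\binom{n}{2}}$ on the resulting expression. Concretely, the $z^n$ term on the right will pick up a contribution from the $n$-th term of $(z;q)_N$ (multiplied by $1$) and from the $(n-1)$-st term (multiplied by $-zq^N$), so after matching powers of $q$ the whole identity reduces to the Pascal-type recurrence for Gaussian binomial coefficients.

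First I would record that recurrence explicitly, namely
\begin{align*}
{N+1 \brack n}_q = {N \brack n}_q + q^{N-n+1}{N \brack n-1}_q,
\end{align*}
which follows directly from the closed form ${M \brack L}_q = (q;q)_M/\big((q;q)_L(q;q)_{M-L}\big)$ by writing $(q;q)_{N+1} = (1-q^{N+1})(q;q)_N$ and splitting $1-q^{N+1} = (1-q^{n}) \cdot \text{(something)} + \ldots$; more cleanly, one clears denominators and checks the polynomial identity $(1-q^{N+1}) = (1-q^{N+1-n}) + q^{N+1-n}(1-q^{n})$. With this in hand, the inductive step is a bookkeeping exercise: one verifies that the coefficient of $(-1)^n z^n q^{\binom{n}{2}}$ produced by $(1-zq^N)\sum_{n\ge 0}(-1)^n z^n q^{\binom{n}{2}}{N \brack n}_q$ is exactly ${N \brack n}_q + q^{N}\cdot q^{\binom{n}{2}-\binom{n-1}{2}}\cdot {N\brack n-1}_q$, and since $\binom{n}{2}-\binom{n-1}{2} = n-1$, the extra power is $q^{N+n-1}$...

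Wait, that does not match the recurrence above, so let me reconsider. Shifting the index in the $-zq^N$ term: the term $(-zq^N)\cdot(-1)^{n-1}z^{n-1}q^{\binom{n-1}{2}}{N\brack n-1}_q = (-1)^n z^n q^{N+\binom{n-1}{2}}{N\brack n-1}_q$. To compare with the target $(-1)^n z^n q^{\binom{n}{2}}{N+1\brack n}_q$, I factor out $q^{\binom{n}{2}}$ and use $\binom{n}{2} = \binom{n-1}{2}+(n-1)$, so the coefficient of $(-1)^n z^n q^{\binom{n}{2}}$ on the left is ${N\brack n}_q + q^{N-(n-1)}{N\brack n-1}_q = {N\brack n}_q + q^{N-n+1}{N\brack n-1}_q$, which is precisely ${N+1\brack n}_q$ by the Pascal recurrence. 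This completes the induction.

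The only genuine obstacle — and it is a minor one — is verifying the Gaussian-binomial recurrence cleanly while handling the boundary cases where ${N\brack n-1}_q$ or ${N\brack n}_q$ vanishes (i.e. $n=0$, or $n=N+1$, or $n>N+1$), so that the polynomial identity in $q$ is checked without dividing by zero. I would dispose of these by noting that the recurrence holds as an identity of polynomials for all $n\ge 0$ once one adopts the convention ${M\brack L}_q = 0$ for $L<0$ or $L>M$, since then each side is visibly zero outside the range $0\le n\le N+1$. Alternatively — and this is perhaps the slickest route — one can avoid induction entirely and cite the known expansion of $(z;q)_N$ as a terminating $q$-analogue of the binomial series, but since the excerpt already references \cite{And1976} for the neighbouring Euler and Jacobi identities, I would keep the proof self-contained and simply write ``See Andrews \cite[(3.3.6)]{And1976}'' or give the two-line induction above.
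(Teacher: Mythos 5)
Your proof is correct, and it is genuinely more than what the paper provides: the paper's entire ``proof'' of this lemma is the citation ``See Andrews \cite[(3.3.6)]{And1976}'', an option you yourself flag at the end. Your induction is the standard elementary argument and all the pieces check out: the factorisation $(z;q)_{N+1}=(1-zq^{N})(z;q)_{N}$, the exponent bookkeeping $\binom{n}{2}-\binom{n-1}{2}=n-1$ giving the factor $q^{N-n+1}$, and the $q$-Pascal recurrence ${N+1\brack n}_q={N\brack n}_q+q^{N+1-n}{N\brack n-1}_q$, which indeed follows from splitting $1-q^{N+1}=(1-q^{N+1-n})+q^{N+1-n}(1-q^{n})$ and holds for all $n\ge 0$ under the vanishing convention stated in the lemma, so the boundary cases take care of themselves. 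The only blemish is presentational: the ``Wait, that does not match\dots let me reconsider'' passage records a false start (you momentarily computed $\binom{n}{2}-\binom{n-1}{2}$ with the wrong sign placement) that you then correct; in a final write-up you would simply delete the first attempt and keep the corrected index shift. What the citation buys the paper is brevity; what your induction buys is self-containedness and a proof valid verbatim as a polynomial identity in $z$ and $q$, which is all that is needed anywhere in the paper (it is used only for finite $N$, e.g.\ in the proof of Lemma \ref{le:T-Exp}).
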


\begin{proof}
	See Andrews \cite[(3.3.6)]{And1976}.
\end{proof}

\begin{lemma}
	We have
	\begin{subequations}
	\begin{align}
	\frac{1}{E(q)^2}&=\frac{E(q^8)^5}{E(q^2)^5 E(q^{16})^2}+\frac{2qE(q^4)^2E(q^{16})^2}{E(q^2)^5E(q^8)},\label{eq:2-dis-1/E1^2}\\
	E(q)^2&=\frac{E(q^2)E(q^8)^5}{E(q^4)^2 E(q^{16})^2}-\frac{2qE(q^2) E(q^{16})^2}{E(q^8)},\label{eq:2-dis-E1^2}\\
	\frac{1}{E(q)E(q^3)}&=\frac{E(q^8)^2E(q^{12})^{5}}{E(q^2)^2E(q^4)E(q^6)^4E(q^{24})^2}+\frac{qE(q^4)^{5}E(q^{24})^2}{E(q^2)^4E(q^6)^2E(q^8)^2E(q^{12})},\label{eq:2-dis-1/E1E3}\\
	E(q)E(q^3)&=\frac{E(q^2)E(q^8)^2E(q^{12})^4}{E(q^4)^2E(q^{6})E(q^{24})^2}-\frac{qE(q^4)^4E(q^6)E(q^{24})^2}{E(q^2)E(q^{8})^2E(q^{12})^2},\label{eq:2-dis-E1E3}\\
	\frac{E(q)^3}{E(q^3)}&=\frac{E(q^4)^3}{E(q^{12})}-\frac{3qE(q^2)^2E(q^{12})^3}{E(q^4) E(q^6)^2},\label{eq:2-dis-E1^3/E3}\\
	\frac{E(q^3)}{E(q)^3}&=\frac{E(q^4)^6 E(q^6)^3}{E(q^2)^9 E(q^{12})^2}+\frac{3qE(q^4)^2 E(q^6) E(q^{12})^2}{E(q^2)^7},\label{eq:2-dis-E3/E1^3}\\
	\frac{E(q^3)^3}{E(q)}&=\frac{E(q^4)^3 E(q^6)^2}{E(q^2)^2 E(q^{12})}+\frac{q E(q^{12})^3}{E(q^4)}.\label{eq:2-dis-E3^3/E1}
	\end{align}
	\end{subequations}
\end{lemma}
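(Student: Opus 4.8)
The plan is to prove each identity by a $2$-dissection (or $3$-dissection, via an auxiliary $2$-dissection) of a single infinite product, combining Jacobi's triple product \eqref{eq:JTP} with the standard product formulas for Ramanujan's theta functions $\phi(q)$ and $\psi(q)$ from \eqref{eq:phi-def} and \eqref{eq:psi-def}. The key observation is that every left-hand side is (up to an eta-quotient prefactor) one of the classical theta functions $\phi(\pm q^a)$, $\psi(\pm q^a)$, or a ratio thereof; so each identity amounts to separating the even-power and odd-power terms in a theta series and recognising each piece as a product again.

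First I would set up the three basic dissection tools. From \eqref{eq:JTP} one has $\phi(q)=\phi(q^4)+2q\,\psi(q^8)$ (split $\sum q^{n^2}$ by parity of $n$), and $\psi(q)=f(q^3,q^6)+q\,f(q,q^{10})$ in Ramanujan's notation, which in product form reads $\psi(q)=\frac{E(q^6)E(q^9)E(q^{?})}{\cdots}$ — more usefully for us, the $2$-dissection $\psi(q)=?$ obtained from splitting $\sum_{n\ge0}q^{n(n+1)/2}$ by parity of $n$. Concretely: the $n$ even and $n$ odd sub-series of $\psi(q)$ give $\psi(q)=f(q^{10},q^6)\cdot(\text{even}) + q\,f(\cdots)$, and after rewriting each $f$ via \eqref{eq:JTP} as a triple product and then as an eta-quotient, one reads off \eqref{eq:2-dis-E1^2} and \eqref{eq:2-dis-1/E1^2} directly (recall $E(q)^2 = (q;q)_\infty^2$ and $\psi(q)=E(q^2)^2/E(q)$, so $E(q)^2 = E(q^2)^2/\psi(q)$, converting a $\psi$-dissection into the stated form). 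The same machine, applied to $\phi(q)\psi(q)$ or to the product $E(q)E(q^3)$ (which equals $E(q^2)E(q^6)\cdot\bigl(\text{an eta-quotient}\bigr)/\bigl(\psi(q)\psi(q^3)\bigr)$-type expression — more directly one writes $E(q)E(q^3)$ in terms of $\phi$ and $\psi$ at arguments $q,q^3$ using known product-to-theta conversions), yields \eqref{eq:2-dis-1/E1E3} and \eqref{eq:2-dis-E1E3}. For the remaining four, \eqref{eq:2-dis-E1^3/E3}, \eqref{eq:2-dis-E3/E1^3}, \eqref{eq:2-dis-E3^3/E1} and the $1/(E(q)E(q^3))$ variants, I would use the classical $2$-dissections of $\varphi(-q)/\varphi(-q^3)$ and $E(q^3)^3/E(q)$ (the generating function for $\sum_{n}(\cdots)$), which are tabulated in Berndt's notebooks; these follow the same split-by-parity-then-reassemble-as-products pattern.

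The step I expect to be the main obstacle is the bookkeeping: converting each post-dissection piece, which naturally appears as a Ramanujan theta function $f(a,b)=\sum_{n\in\mathbb Z}a^{n(n+1)/2}b^{n(n-1)/2}$, into the specific eta-quotient displayed on the right-hand side, since $f(a,b)=(-a;ab)_\infty(-b;ab)_\infty(ab;ab)_\infty$ by \eqref{eq:JTP} and each such triple product must then be massaged into the combination of $E(q^j)$'s appearing in the statement — there are several distinct but equal-looking eta-quotient forms for the same product, so matching the \emph{exact} right-hand side requires care. A secondary subtlety is verifying the signs and the powers of $q$ on the odd-index pieces (the coefficients $\pm 2q$, $\pm 3q$, $q$). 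Given that all seven identities are between eta-quotients, a clean alternative for the write-up — which I would use to \emph{confirm} rather than to discover — is to observe that each claimed identity, after clearing denominators, is an identity of modular forms (holomorphic $q$-expansions) on $\Gamma_0(N)$ for an explicit level $N = \mathrm{lcm}$ of the arguments, so by Sturm's bound (cf.\ the citation of \cite{MDG2015} in the proof of Theorem~\ref{th:sl4strfun}) it suffices to check agreement of $q$-coefficients up to an explicit finite order; this reduces each case to a finite, mechanical verification.
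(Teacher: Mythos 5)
Your plan is workable, but it is a genuinely different (and much longer) route than the paper's, which offers no dissection argument at all: all seven identities are standard $2$-dissections taken directly from Hirschhorn's \emph{The power of $q$} \cite{Hir2017} (equations (1.9.4), (30.12.3), (30.12.1), (22.7.3) and (22.7.5)), with \eqref{eq:2-dis-E1^2} and \eqref{eq:2-dis-E3/E1^3} obtained from \eqref{eq:2-dis-1/E1^2} and \eqref{eq:2-dis-E1^3/E3} by the substitution $q\mapsto -q$. You should add that last observation to your write-up in any case --- it halves the work, since e.g.\ $E(-q)=E(q^2)^3/\bigl(E(q)E(q^4)\bigr)$ turns \eqref{eq:2-dis-1/E1^2} into \eqref{eq:2-dis-E1^2} immediately. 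Your parity-splitting strategy is indeed how these identities are proved in the sources: for instance \eqref{eq:2-dis-1/E1^2} is precisely $\phi(q)=\phi(q^4)+2q\psi(q^8)$ after multiplying through by $E(q^4)^2/E(q^2)^5$ and using $\phi(q)=E(q^2)^5/\bigl(E(q)^2E(q^4)^2\bigr)$ and $\psi(q)=E(q^2)^2/E(q)$, so that case at least is complete modulo routine rewriting. The cases involving $E(q)E(q^3)$ and $E(q)^3/E(q^3)$ require the $2$-dissections of $\psi(q)\psi(q^3)$-type products and of the Borwein function $b(q)=E(q)^3/E(q^3)$, which follow the same pattern but with heavier triple-product bookkeeping --- exactly the part you left as placeholders, so as written your primary argument is a sketch rather than a proof. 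Your Sturm-bound fallback does supply rigour, but note one genuine subtlety you gloss over: an eta quotient is in general only \emph{weakly} holomorphic on $\Gamma_0(N)$ and may have poles at cusps other than $i\infty$, so before invoking Sturm's bound you must either verify nonnegative order at every cusp (e.g.\ via Ligozat's formula) or multiply the cleared-denominator identity by a suitable power of $\eta(N\tau)\cdots$ to force holomorphy; only then does a finite coefficient check become a proof.
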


\begin{proof}
	For \eqref{eq:2-dis-1/E1^2}, see \cite[(1.9.4)]{Hir2017}. Also, \eqref{eq:2-dis-E1^2} follows by taking $q\mapsto -q$ in \eqref{eq:2-dis-1/E1^2}. For \eqref{eq:2-dis-1/E1E3} and \eqref{eq:2-dis-E1E3}, see \cite[(30.12.3) and (30.12.1)]{Hir2017}. For \eqref{eq:2-dis-E1^3/E3}, see \cite[(22.7.3)]{Hir2017}. Also, \eqref{eq:2-dis-E3/E1^3} follows by taking $q\mapsto -q$ in \eqref{eq:2-dis-E1^3/E3}. For \eqref{eq:2-dis-E3^3/E1}, see \cite[(22.7.5)]{Hir2017}.
\end{proof}

\begin{lemma}
	We have
	\begin{subequations}
	\begin{align}
		\frac{E(q^2)^7E(q^{3})E(q^{12})}{E(q)^3E(q^4)^3E(q^6)}+\frac{2E(q^4)^3E(q^6)^2}{E(q^2)^2E(q^{12})} &= \frac{3E(q^{3})^3}{E(q)},\label{eq:13-1}\\
		\frac{E(q^2)^7E(q^{3})E(q^{12})}{E(q)^3E(q^4)^3E(q^6)}-\frac{2E(q^4)^3E(q^6)^2}{E(q^2)^2E(q^{12})} &= -\frac{E(q)^3 E(q^6)^2}{E(q^2)^2 E(q^{3})}.\label{eq:13-2}
	\end{align}
	\end{subequations}
\end{lemma}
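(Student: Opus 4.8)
The plan is to obtain both identities by purely algebraic manipulation of the three two-dissection formulas \eqref{eq:2-dis-E3^3/E1}, \eqref{eq:2-dis-E1^3/E3} and \eqref{eq:2-dis-E3/E1^3} established in the previous lemma, with no fresh modular-equation or hypergeometric input required. The key first step is to recognise the cumbersome eta-quotient $\frac{E(q^2)^7E(q^{3})E(q^{12})}{E(q)^3E(q^4)^3E(q^6)}$ shared by the left-hand sides of \eqref{eq:13-1} and \eqref{eq:13-2} as a rescaling of \eqref{eq:2-dis-E3/E1^3}: multiplying that identity through by $\frac{E(q^2)^7E(q^{12})}{E(q^4)^3E(q^6)}$ gives
\[
  \frac{E(q^2)^7E(q^{3})E(q^{12})}{E(q)^3E(q^4)^3E(q^6)}
  = \frac{E(q^4)^3E(q^6)^2}{E(q^2)^2E(q^{12})} + \frac{3qE(q^{12})^3}{E(q^4)},
\]
which I will call $(\star)$.

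Armed with $(\star)$, I would finish as follows. Substituting $(\star)$ into the left-hand side of \eqref{eq:13-1} merges the two occurrences of $\frac{E(q^4)^3E(q^6)^2}{E(q^2)^2E(q^{12})}$, leaving $\frac{3E(q^4)^3E(q^6)^2}{E(q^2)^2E(q^{12})}+\frac{3qE(q^{12})^3}{E(q^4)}$, which is precisely three times the right-hand side of \eqref{eq:2-dis-E3^3/E1}, hence equal to $\frac{3E(q^3)^3}{E(q)}$; this is \eqref{eq:13-1}. Likewise, substituting $(\star)$ into the left-hand side of \eqref{eq:13-2} gives $-\frac{E(q^4)^3E(q^6)^2}{E(q^2)^2E(q^{12})}+\frac{3qE(q^{12})^3}{E(q^4)}$, which is the negative of the right-hand side of \eqref{eq:2-dis-E1^3/E3} after multiplying that identity through by $\frac{E(q^6)^2}{E(q^2)^2}$, hence equal to $-\frac{E(q)^3E(q^6)^2}{E(q^2)^2E(q^3)}$; this is \eqref{eq:13-2}.

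So the lemma amounts to no more than the linear relations obtained by adding and subtracting \eqref{eq:13-1} and \eqref{eq:13-2}, once one notices that the right-hand sides of \eqref{eq:2-dis-E3^3/E1}, \eqref{eq:2-dis-E1^3/E3} (rescaled) and \eqref{eq:2-dis-E3/E1^3} (rescaled) are all built from the same two monomials $\frac{E(q^4)^3E(q^6)^2}{E(q^2)^2E(q^{12})}$ and $\frac{qE(q^{12})^3}{E(q^4)}$. There is thus no genuinely hard step; the only non-mechanical observation is the identification $(\star)$, and everything else is bookkeeping with the Euler product notation. If instead one wanted a proof independent of the earlier dissection lemma, the honest obstacle would be re-deriving those classical $2$-dissections (for which one can cite Hirschhorn's \emph{The Power of $q$}); alternatively one could clear denominators and an appropriate power of $q$ so that every term in \eqref{eq:13-1}--\eqref{eq:13-2} becomes a holomorphic modular form of a common weight on $\Gamma_0(12)$, and then verify the two identities by matching $q$-expansions up to Sturm's bound, but the dissection route is shorter and more illuminating.
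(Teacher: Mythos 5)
Your proposal is correct and follows essentially the same route as the paper, which likewise proves \eqref{eq:13-1} by substituting \eqref{eq:2-dis-E3/E1^3} and \eqref{eq:2-dis-E3^3/E1}, and \eqref{eq:13-2} by invoking \eqref{eq:2-dis-E1^3/E3} and \eqref{eq:2-dis-E3/E1^3}. Your rescaled identity $(\star)$ and the subsequent bookkeeping are exactly the computation the paper leaves implicit, and all the exponents check out.
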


\begin{proof}
	For \eqref{eq:13-1}, we substitute \eqref{eq:2-dis-E3/E1^3} and \eqref{eq:2-dis-E3^3/E1} into this relation and find that the two sides are equal. Also, \eqref{eq:13-2} follows by invoking \eqref{eq:2-dis-E1^3/E3} and \eqref{eq:2-dis-E3/E1^3}.
\end{proof}

\begin{lemma}
	We have
	\begin{subequations}
	\begin{align}
	G(q)G(q^4)+qH(q)H(q^4)&=\frac{E(q^2)^4}{E(q)^2E(q^4)^2},\label{eq:GH+}\\
	G(q)G(q^4)-qH(q)H(q^4)&=\frac{E(q^{10})^5}{E(q^2)E(q^5)^2E(q^{20})^2}.\label{eq:GH-}
	\end{align}
	\end{subequations}
\end{lemma}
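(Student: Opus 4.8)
The plan is to reduce both identities to a single pair of theta--product identities and then to dispatch those either by Jacobi's triple product \eqref{eq:JTP} or by appealing to the classical literature. First I would clear denominators: multiplying \eqref{eq:GH+} and \eqref{eq:GH-} through by $E(q)E(q^4)$ and using the elementary rearrangements $G(q)E(q)=(q^2,q^3,q^5;q^5)_\infty$ and $H(q)E(q)=(q,q^4,q^5;q^5)_\infty$ --- immediate from \eqref{eq:G-def}, \eqref{eq:H-def} and $E(q)=(q,q^2,q^3,q^4,q^5;q^5)_\infty$ --- together with their $q\mapsto q^4$ analogues, the two claims collapse to
\begin{align*}
&(q^2,q^3,q^5;q^5)_\infty(q^8,q^{12},q^{20};q^{20})_\infty \pm q\,(q,q^4,q^5;q^5)_\infty(q^4,q^{16},q^{20};q^{20})_\infty\\
&\qquad=\begin{cases}\dfrac{E(q^2)^4}{E(q)E(q^4)}&(+),\\[3mm]\dfrac{E(q)E(q^4)E(q^{10})^5}{E(q^2)E(q^5)^2E(q^{20})^2}&(-).\end{cases}
\end{align*}
Using \eqref{eq:psi-def} and \eqref{eq:phi-def} one checks that the two right--hand sides are exactly $\psi(q)\phi(-q^2)$ and $\psi(-q)\phi(q^5)$; this gives a quick low--order consistency check and makes plain that the $+$ and $-$ cases ought to be handled in parallel.

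For the reduced identities I would expand each of the four products on the left by \eqref{eq:JTP}, for instance $(q^2,q^3,q^5;q^5)_\infty=\sum_{m\in\ZZ}(-1)^mq^{(5m^2-m)/2}$ and $(q^8,q^{12},q^{20};q^{20})_\infty=\sum_{n\in\ZZ}(-1)^nq^{10n^2-2n}$, with the analogous expansions for $(q,q^4,q^5;q^5)_\infty$ and $(q^4,q^{16},q^{20};q^{20})_\infty$. The left--hand side then becomes a sum of two double theta series over $\ZZ^2$; dually, writing $\psi(q)\phi(-q^2)$ (resp.\ $\psi(-q)\phi(q^5)$) as a double series and splitting $\ZZ^2$ along an index--two sublattice produces two complementary pieces. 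The decisive step is to match these up, i.e.\ to exhibit the right rank--two sublattice and coset representative so that the leading double sum accounts for one piece and the ``$\pm q$''--term for the other, with no overlap and no gap. Once the correct unimodular change of summation variables has been located, resumming by \eqref{eq:JTP} and simplifying with \eqref{eq:phi-def} and \eqref{eq:psi-def} completes the argument.

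I expect this lattice--matching to be the main obstacle: it is precisely the kind of careful $2$--dissection bookkeeping that pervades \cite{Hir2017}, and finding the substitution is where essentially all the work lies, the rest being routine. In practice, \eqref{eq:GH+} and \eqref{eq:GH-} are classical relations for the Rogers--Ramanujan functions $G$ and $H$ (of the same flavour as Ramanujan's forty identities), so the most economical option for the paper --- and the one most in keeping with how the other product identities in this appendix are established --- is simply to quote them from \cite{Hir2017} or \cite{Ber1991}, relegating the reduction above to an explanatory remark.
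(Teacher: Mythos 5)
Your closing recommendation is precisely what the paper does: its entire proof is the observation that \eqref{eq:GH+} and \eqref{eq:GH-} are two of Ramanujan's forty identities for the Rogers--Ramanujan functions, quoted from the Lost Notebook \cite[pp.~236--237]{Ram1988} and from Hirschhorn \cite[(17.4.8) and (17.4.9)]{Hir2017}. Your preliminary reduction is correct and worth keeping as a remark: clearing denominators via $G(q)E(q)=(q^2,q^3,q^5;q^5)_\infty$, $H(q)E(q)=(q,q^4,q^5;q^5)_\infty$ does turn the two claims into
$(q^2,q^3,q^5;q^5)_\infty(q^8,q^{12},q^{20};q^{20})_\infty \pm q\,(q,q^4,q^5;q^5)_\infty(q^4,q^{16},q^{20};q^{20})_\infty$ equalling $\psi(q)\phi(-q^2)$ and $\psi(-q)\phi(q^5)$ respectively, and both identifications with \eqref{eq:phi-def}--\eqref{eq:psi-def} check out. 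However, the Jacobi-triple-product/lattice-matching step that you yourself flag as ``where essentially all the work lies'' is never exhibited, so the direct argument as written is a plan rather than a proof; since you explicitly defer to \cite{Ber1991} and \cite{Hir2017} for the actual verification --- exactly as the paper does --- this gap is immaterial to the correctness of your proposal.
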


\begin{proof}
	The two identities are among a set of 40 identities compiled by Ramanujan in his Lost Notebook \cite[pp.~236--237]{Ram1988}. See also Hirschhorn \cite[(17.4.8) and (17.4.9)]{Hir2017}.
\end{proof}

\begin{lemma}
	We have
	\begin{subequations}
	\begin{align}
	\sum_{n\ge 0}\frac{(-1)^n q^{3n^2-2n}}{(-q;q^2)_n (q^4;q^4)_n}&=\frac{(q,q^4,q^5;q^5)_\infty}{(q^2;q^2)_\infty},\label{eq:S15}\\
	\sum_{n\ge 0}\frac{(-1)^n q^{3n^2}}{(-q;q^2)_n (q^4;q^4)_n}&=\frac{(q^2,q^3,q^5;q^5)_\infty}{(q^2;q^2)_\infty},\label{eq:S19}\\
	\sum_{n\ge 0}\frac{q^{(3n^2+3n)/2}}{(q;q)_n(q;q^2)_{n+1}}&=\frac{(q^2,q^8,q^{10};q^{10})_\infty}{(q;q)_\infty},\label{eq:S44}\\
	\sum_{n\ge 0}\frac{q^{(3n^2+n)/2}}{(q;q)_n(q;q^2)_{n+1}}&=\frac{(q^4,q^6,q^{10};q^{10})_\infty}{(q;q)_\infty}.\label{eq:S46}
	\end{align}
	\end{subequations}
\end{lemma}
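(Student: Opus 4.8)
All four identities are entries in Slater's classical list of Rogers--Ramanujan type identities, falling into two companion pairs: \eqref{eq:S15}--\eqref{eq:S19}, whose right-hand sides are the two modulus-$5$ Rogers--Ramanujan products $(q,q^4,q^5;q^5)_\infty/(q^2;q^2)_\infty$ and $(q^2,q^3,q^5;q^5)_\infty/(q^2;q^2)_\infty$ (compare $G(q)$, $H(q)$ in \eqref{eq:G-def}--\eqref{eq:H-def}), and \eqref{eq:S44}--\eqref{eq:S46}, whose right-hand sides are the ``base $q^2$'' counterparts $(q^2,q^8,q^{10};q^{10})_\infty/(q;q)_\infty$ and $(q^4,q^6,q^{10};q^{10})_\infty/(q;q)_\infty$. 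The quickest route is simply to quote Slater's list; for a self-contained derivation I would run the Bailey pair machinery, handling the two companion pairs in parallel.

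First I would clear the mixed $q$- and $q^2$-Pochhammers so each left-hand side takes a Rogers--Selberg shape. From $(-q;q^2)_n(q;q^2)_n=(q^2;q^4)_n$ and $(q^2;q^4)_n(q^4;q^4)_n=(q^2;q^2)_{2n}$ one gets $(-q;q^2)_n(q^4;q^4)_n=(q^2;q^2)_{2n}/(q;q^2)_n$, so that the left side of \eqref{eq:S15}, \eqref{eq:S19} becomes $\sum_{n\ge 0}(-1)^nq^{3n^2+cn}(q;q^2)_n/(q^2;q^2)_{2n}$ with $c\in\{-2,0\}$; likewise $(q;q)_n(q;q^2)_{n+1}=(q;q)_n(q;q)_{2n+1}/(q^2;q^2)_n$ turns the left side of \eqref{eq:S44}, \eqref{eq:S46} into $\sum_{n\ge0}q^{(3n^2+cn)/2}(q^2;q^2)_n/\big((q;q)_n(q;q)_{2n+1}\big)$ with $c\in\{3,1\}$. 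Each of these is precisely the series produced by inserting a modulus-$5$ Bailey pair — relative to $a=1$ on base $q^2$ for the first family, and a companion pair on base $q$ (after the relevant dilation) for the second — into the Rogers--Selberg limiting case of Bailey's lemma; the two members of each pair come from the two admissible normalisations of the underlying Bailey pair. The product side then drops out of Jacobi's triple product \eqref{eq:JTP} applied to the associated $\alpha_n$.

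The genuine difficulty is bookkeeping rather than conceptual: one has to pin down the exact Bailey pairs and the exact specialisation of Bailey's lemma so that the exponents $3n^2-2n$, $3n^2$, $\tfrac12(3n^2+3n)$ and $\tfrac12(3n^2+n)$ come out on the nose, and then identify each resulting bilateral theta series as the claimed product. A lower-tech alternative is to verify that both sides of each identity satisfy the same first-order $q$-difference equation in a homogenising variable together with matching initial data — equivalently, to apply the Rogers--Fine identity to a suitably specialised ${}_2\phi_1$ — but arranging a clean recursion here is itself fiddly, so in practice the Bailey-pair route, or most honestly the citation to Slater, is the path of least resistance.
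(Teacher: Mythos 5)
Your primary route---quoting these as classical entries from Slater's list---is exactly what the paper does: its proof consists of citations to Rogers (1894, 1917), to entries (15), (19) and (44) of Slater's list, and to the Mc\ Laughlin--Sills--Zimmer compendium. Your supplementary Bailey-pair sketch (and the Pochhammer manipulations putting the left-hand sides into Rogers--Selberg shape) is a correct outline of how one would prove them from scratch, but the paper does not attempt this.
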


\begin{proof}
	These identities are all due to Rogers. See \cite[p.~330, equation (5)]{Rog1917} for \eqref{eq:S15}, \cite[p.~339, Ex.~2]{Rog1894} for \eqref{eq:S19}, \cite[p.~330, equation (2), line 2]{Rog1917} for \eqref{eq:S44}, and \cite[p.~330, equation (2), line 1]{Rog1917} for \eqref{eq:S46}. The first three identities also appear as equations (15), (19) and (44) on Slater's famous list \cite{Sla1952}. See also equations (2.5.5), (2.5.7), (2.10.4) and (2.10.6) on a more comprehensive and up-to-date list due to Mc Laughlin, Sills and Zimmer \cite{MSZ2008}.
\end{proof}

\begin{lemma}
	We have
	\begin{align}\label{eq:3phi2-1}
		{}_{3}\phi_2\left(\begin{matrix} a,b,c\\ d,e \end{matrix}; q, \frac{de}{abc}\right) = \frac{(e/a,de/(bc);q)_\infty}{(e,de/(abc);q)_\infty} {}_{3}\phi_2\left(\begin{matrix} a,d/b,d/c\\ d,de/(bc) \end{matrix}; q, \frac{e}{a}\right).
	\end{align}
\end{lemma}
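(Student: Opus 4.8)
The identity \eqref{eq:3phi2-1} is a classical nonterminating ${}_3\phi_2$-transformation, found in the standard references on basic hypergeometric series, so the shortest proof is simply to quote it, in the same spirit as the $\eta$-quotient identities recorded above. If instead one wants an argument built only from material already in this appendix, the plan is to run the familiar ``expand, interchange, resum'' device, using \emph{one} application of the Cauchy form of the $q$-binomial theorem to peel off the factor $(c;q)_n/(e;q)_n$, followed by Heine's (Euler) transformation applied to the inner ${}_2\phi_1$ that results.

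Concretely, write $\LHS\eqref{eq:3phi2-1}=\sum_{n\ge 0}\frac{(a,b,c;q)_n}{(q,d,e;q)_n}\bigl(\tfrac{de}{abc}\bigr)^n$. Using $\frac{(c;q)_n}{(e;q)_n}=\frac{(c;q)_\infty}{(e;q)_\infty}\cdot\frac{(eq^n;q)_\infty}{(cq^n;q)_\infty}$ together with the Cauchy identity $\sum_{k\ge 0}\frac{(\alpha;q)_k}{(q;q)_k}z^k=\frac{(\alpha z;q)_\infty}{(z;q)_\infty}$ (which itself follows from \eqref{eq:Eul-1}, \eqref{eq:Eul-2} and \eqref{eq:q-Bin} by extracting the coefficient of $z^n$), and interchanging the two summations, one obtains
\[
\LHS\eqref{eq:3phi2-1}=\frac{(c;q)_\infty}{(e;q)_\infty}\sum_{k\ge 0}\frac{(e/c;q)_k\,c^k}{(q;q)_k}\;{}_2\phi_1\!\left(\begin{matrix}a,b\\ d\end{matrix};q,\tfrac{de}{abc}q^k\right).
\]
Now apply Heine's Euler transformation ${}_2\phi_1(a,b;d;q,Z)=\frac{(abZ/d;q)_\infty}{(Z;q)_\infty}\,{}_2\phi_1(d/a,d/b;d;q,abZ/d)$ with $Z=\tfrac{de}{abc}q^k$, so that $abZ/d=\tfrac{e}{c}q^k$. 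The product $(\tfrac{e}{c}q^k;q)_\infty=(e/c;q)_\infty/(e/c;q)_k$ then cancels the $(e/c;q)_k$ in the sum; expanding the new ${}_2\phi_1$ by its definition and carrying out the (now geometric-type) $k$-summation once more by the Cauchy identity, the double sum collapses to
\[
\LHS\eqref{eq:3phi2-1}=\frac{(e/c,\;de/(ab);q)_\infty}{(e,\;de/(abc);q)_\infty}\;{}_3\phi_2\!\left(\begin{matrix}a,\,d/b,\,d/c\\ d,\,de/(bc)\end{matrix};q,\tfrac{e}{a}\right)\quad\text{after the interchange } a\leftrightarrow c,
\]
where the relabelling is permitted because $\LHS\eqref{eq:3phi2-1}$ is symmetric in $a,b,c$ and ${}_3\phi_2$ is symmetric in its upper parameters; this is exactly \eqref{eq:3phi2-1}. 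A convenient sanity check at each stage is that the ``balanced'' argument is preserved, $\tfrac{d\cdot de/(bc)}{a\cdot (d/b)\cdot(d/c)}=e/a$.

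The conceptual point — and the one easy place to go astray — is that a genuine Heine-type transformation is needed here: a naive \emph{double} use of the Cauchy identity (peeling off both $(b;q)_n/(d;q)_n$ and $(c;q)_n/(e;q)_n$) merely runs in a circle and returns $\LHS\eqref{eq:3phi2-1}$ unchanged. So the real work is (i) recognising that it is Heine's \emph{Euler} form that produces the $(e/c;q)_k$ cancellation, and (ii) the bookkeeping of the several infinite-product ratios, so that the prefactor emerges as $(e/a,de/(bc);q)_\infty/(e,de/(abc);q)_\infty$ and the surviving series carries exactly the upper parameters $a,d/b,d/c$ rather than one of the many permuted relatives in the ${}_3\phi_2$ transformation orbit (the $q$-analogue of Thomae's relations). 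Convergence is not an issue for the present application: \eqref{eq:3phi2-1} enters Section~\ref{sec:transform} only inside the $\tau\to 0$ limits, where both sides may be read as formal power series in $\tau$ and every interchange is legitimate term by term; otherwise one simply imposes $|q|<1$ together with $|de/(abc)|,|e/a|<1$.
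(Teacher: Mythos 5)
Your proposal is correct and its primary route coincides with the paper's: the lemma is proved there simply by citing Gasper and Rahman, (III.9). The additional self-contained derivation you sketch (Cauchy expansion of $(c;q)_n/(e;q)_n$, Heine's Euler transformation of the inner ${}_{2}\phi_{1}$, resummation, and the final relabelling $a\leftrightarrow c$ justified by the symmetry of the left-hand side in its upper parameters) is also sound, though it goes beyond what the paper records.
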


\begin{proof}
	See Gasper and Rahman \cite[(III.9)]{GR2004}.
\end{proof}

\begin{lemma}
	For any integers $M,N\ge 0$,
	\begin{align}\label{eq:q-CV-spe}
	\sum_{n\ge 0}\frac{q^{n^2-(M+N)n}}{(q;q)_n(q;q)_{M-n}(q;q)_{N-n}} = \frac{q^{-MN}}{(q;q)_M(q;q)_N}.
	\end{align}
\end{lemma}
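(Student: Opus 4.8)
The plan is to renormalize \eqref{eq:q-CV-spe} into a standard shape and then settle it by a short induction (with a one-line alternative via the terminating $q$-Chu--Vandermonde sum). First I would clear denominators by multiplying both sides by $(q;q)_M(q;q)_N$ and then by $q^{MN}$. Using $\frac{(q;q)_M}{(q;q)_n(q;q)_{M-n}}={M \brack n}_q$, $\frac{(q;q)_N}{(q;q)_{N-n}}=(q;q)_n{N \brack n}_q$, and the factorization $n^2-(M+N)n+MN=(M-n)(N-n)$, the claim is equivalent to
\[
f(M,N):=\sum_{n\ge 0}{M \brack n}_q{N \brack n}_q\,(q;q)_n\,q^{(M-n)(N-n)}=1\qquad(M,N\ge 0).
\]
I would prove this by induction on $M$, with $N$ left arbitrary. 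The base case $M=0$ is immediate: only $n=0$ contributes, giving $f(0,N)=1$.

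For the inductive step I would apply the $q$-Pascal recurrence ${M \brack n}_q={M-1 \brack n-1}_q+q^n{M-1 \brack n}_q$ (a direct consequence of the definition, or of \eqref{eq:q-Bin}), splitting $f(M,N)=A+B$, where $B$ collects the terms carrying $q^n{M-1 \brack n}_q$. In $B$ the exponent rearranges as $(M-n)(N-n)+n=(M-1-n)(N-n)+N$, so $B=q^{N}f(M-1,N)=q^{N}$ by the inductive hypothesis. In $A$, shifting the index $n\mapsto n+1$ and using the elementary identity $(q;q)_{n+1}{N \brack n+1}_q=(1-q^{N})(q;q)_n{N-1 \brack n}_q$ converts $A$ into $(1-q^{N})f(M-1,N-1)=1-q^{N}$ (when $N=0$ one has $A=0=1-q^{0}$ directly). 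Hence $f(M,N)=(1-q^{N})+q^{N}=1$, completing the induction and thus the lemma.

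I do not expect any genuine obstacle here: the only points requiring a little care are the two rearrangements $(M-n)(N-n)+n=(M-1-n)(N-n)+N$ and $(q;q)_{n+1}{N \brack n+1}_q=(1-q^{N})(q;q)_n{N-1 \brack n}_q$, plus the $N=0$ boundary case. If one wishes to bypass induction entirely, the substitution ${M \brack n}_q=\frac{(q^{-M};q)_n}{(q;q)_n}(-q^{M})^{n}q^{-\binom{n}{2}}$ and its analogue for $N$ collapses $f(M,N)$ to $q^{MN}\,{}_2\phi_1\!\left(\begin{matrix}q^{-M},q^{-N}\\ 0\end{matrix};q,q\right)$, which the terminating $q$-Chu--Vandermonde evaluation (Gasper--Rahman \cite[(II.6)]{GR2004}, with lower parameter sent to $0$) equates to $q^{MN}\cdot q^{-MN}=1$.
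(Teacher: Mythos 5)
Your argument is correct, and your primary route is genuinely different from the paper's. The paper disposes of \eqref{eq:q-CV-spe} in one line: it is the $(a,c)\mapsto(q^{-M},0)$ specialisation of the terminating $q$-Chu--Vandermonde summation \cite[(II.6)]{GR2004} — which is precisely the ``bypass'' you sketch in your final sentences, and your reduction to $q^{MN}\,{}_2\phi_1\bigl(\begin{smallmatrix}q^{-M},q^{-N}\\ 0\end{smallmatrix};q,q\bigr)$ checks out. Your main proof instead normalises the identity to $f(M,N)=\sum_{n}{M\brack n}_q{N\brack n}_q(q;q)_n\,q^{(M-n)(N-n)}=1$ and runs an induction on $M$ via the $q$-Pascal recurrence; I verified the two key rearrangements, $(M-n)(N-n)+n=(M-1-n)(N-n)+N$ and $(q;q)_{n+1}{N\brack n+1}_q=(1-q^{N})(q;q)_n{N-1\brack n}_q$, and the $N=0$ boundary, and they all hold (note only that the $A$-term invokes the hypothesis at $(M-1,N-1)$, which is fine because your hypothesis is quantified over all $N$). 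What each approach buys: the paper's citation is shortest and situates the lemma inside the standard ${}_2\phi_1$ toolkit already used elsewhere in the appendix, while your induction is self-contained and elementary, requiring nothing beyond the definition of the Gaussian binomial — a reasonable trade-off, though in the context of this paper the citation is the more economical choice.
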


\begin{proof}
	This is the $(a,c)\mapsto (q^{-M},0)$ specialisation of the second $q$-Chu--Vandermonde summation \cite[(II.6)]{GR2004}:
	\begin{align*}
	{}_{2}\phi_{1}\left(\begin{matrix} a,q^{-N}\\ c \end{matrix}; q, q\right) = \frac{(c/a;q)_N a^N}{(c;q)_N}.
	\end{align*}
	See also \cite[Lemma D.3]{BS1998}.
\end{proof}

\begin{lemma}
	Let $\delta\in\{0,1\}$. For any integer $N$,
	\begin{align}
	\sum_{n\ge 0}\frac{q^{n^2+n(N-\delta)}}{(q;q)_n (q;q)_{n+N}}&=\frac{1+\delta q^N}{(q;q)_\infty}.\label{eq:q-Gauss-spe}
	\end{align}
\end{lemma}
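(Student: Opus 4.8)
The plan is to deduce \eqref{eq:q-Gauss-spe} from the $q$-Gauss summation, after a short preliminary reduction. Write $L(N,\delta)$ for the left-hand side of \eqref{eq:q-Gauss-spe}. First I would reduce to the case $N\ge 0$: when $N<0$, every term with $n<-N$ vanishes because $1/(q;q)_m=0$ for $m<0$, so a reindexing of the sum yields $L(N,\delta)=q^{\delta N}L(-N,\delta)$; since $q^{\delta N}\big(1+\delta q^{-N}\big)=1+\delta q^{N}$, it suffices to establish the identity for nonnegative $N$.

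For $\delta=0$ and $N\ge 0$, I would use $(q;q)_{n+N}=(q;q)_N(q^{N+1};q)_n$ and $q^{n^2+nN}=q^{2\binom n2}(q^{N+1})^{n}$ to rewrite the left-hand side as $\frac{1}{(q;q)_N}\sum_{n\ge 0}\frac{q^{2\binom n2}(q^{N+1})^{n}}{(q;q)_n(q^{N+1};q)_n}$. This series is the $a,b\to\infty$ limit of the ${}_2\phi_1$ appearing in the $q$-Gauss summation
\[
{}_{2}\phi_{1}\left(\begin{matrix} a,b\\ q^{N+1} \end{matrix}; q,\frac{q^{N+1}}{ab}\right)=\frac{(q^{N+1}/a;q)_\infty(q^{N+1}/b;q)_\infty}{(q^{N+1};q)_\infty(q^{N+1}/(ab);q)_\infty}
\]
from Gasper--Rahman \cite[(II.8)]{GR2004}: termwise one has $(a;q)_n(b;q)_n(ab)^{-n}\to q^{2\binom n2}$ while the argument $q^{N+1}/(ab)\to 0$, and the right-hand side tends to $1/(q^{N+1};q)_\infty$. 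Hence $L(N,0)=1/\big((q;q)_N(q^{N+1};q)_\infty\big)=1/(q;q)_\infty$, and together with the reduction this gives $L(M,0)=1/(q;q)_\infty$ for every integer $M$.

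It remains to treat $\delta=1$ and $N\ge 0$, which is not directly summable by the $q$-Gauss formula; here I would telescope onto the $\delta=0$ identity just obtained. Using $1/(q;q)_{n+N-1}=(1-q^{n+N})/(q;q)_{n+N}$ together with $q^{n^2+n(N-1)+n+N}=q^{N}q^{n^2+nN}$, one finds
\[
L(N-1,0)=\sum_{n\ge 0}\frac{q^{n^2+n(N-1)}(1-q^{n+N})}{(q;q)_n(q;q)_{n+N}}=L(N,1)-q^{N}L(N,0),
\]
whence $L(N,1)=L(N-1,0)+q^{N}L(N,0)=(1+q^{N})/(q;q)_\infty$, as claimed. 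The routine parts are the termwise $a,b\to\infty$ limit in the $q$-Gauss step and the bookkeeping of index shifts in the reduction to nonnegative $N$; the one genuinely non-mechanical observation --- and the reason that a single appeal to $q$-Gauss does not suffice --- is that the $\delta=1$ sum must first be pried apart, using the $\delta=0$ case, before a product form can be read off.
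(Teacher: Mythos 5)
Your proof is correct, and it diverges from the paper's in how the $\delta=1$ case is handled. The paper derives both cases of \eqref{eq:q-Gauss-spe} at once from Fine's transformation \cite[(20.51)]{Fin1988} with $b\mapsto q^N$: the specialisation $t\mapsto 1$ kills all but the $n=0$ term of the right-hand side and gives the $\delta=0$ case, while $t\mapsto q^{-1}$ leaves exactly two surviving terms and produces the factor $1+q^N$ directly; negative $N$ is then handled by the same shift $n\mapsto n-N$ that you use. Your $\delta=0$ step rests on essentially the same underlying identity (the $t=1$ case of Fine's formula is precisely the $a,b\to\infty$ limit of the $q$-Gauss sum that you invoke), but your $\delta=1$ step is genuinely different: instead of a second specialisation of a two-parameter transformation, you pry the sum apart via $1/(q;q)_{n+N-1}=(1-q^{n+N})/(q;q)_{n+N}$ to obtain the recurrence $L(N,1)=L(N-1,0)+q^{N}L(N,0)$ in your notation, and this is where your extension of the $\delta=0$ case to all integers $N$ (needed for $L(-1,0)$ when $N=0$) earns its keep. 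What your route buys is independence from Fine's identity --- only the classical $q$-Gauss summation plus elementary manipulations are required; what the paper's route buys is uniformity, since a single formula with the free parameter $t$ covers both values of $\delta$ without a separate contiguous-relation step. The only point worth making explicit in a final write-up is the justification of the termwise limit $a,b\to\infty$ (Tannery's theorem for $|q|<1$), which you correctly flag as routine.
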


\begin{proof}
	We first consider the case where $N\ge 0$. Then the two equalities for $\delta=0$ and $\delta=1$ can be derived directly by taking $b\mapsto q^N$ in an identity given by Fine \cite[(20.51)]{Fin1988}:
	\begin{align*}
		\sum_{n\ge 0} \frac{(bt)^n q^{n^2}}{(q;q)_n (bq;q)_n} = \frac{1}{(bq;q)_\infty} \sum_{n\ge 0} \frac{(t;q)_n}{(q;q)_n} (-b)^n q^{n(n+1)/2},
	\end{align*}
	followed by the specialisations of $t\mapsto 1$ and $t\mapsto q^{-1}$, respectively. When $N<0$, we make the substitution $n\mapsto n-N$ and carry out a similar calculation. We shall also notice the fact that $1/(q;q)_k = 0$ for integers $k<0$.
\end{proof}

\begin{lemma}\label{le:double-sum-rep}
	We have
	\begin{align}\label{eq:double-sum-rep}
		\sum_{m,n=-\infty}^\infty z_1^m z_2^n q^{\frac{m^2}{2}+\frac{n^2}{2}-\frac{mn}{2}}&= (-z_1^2 z_2q^{\frac{3}{2}},-z_1^{-2} z_2^{-1} q^{\frac{3}{2}},q^3;q^3)_\infty (-z_2q^{\frac{1}{2}},-z_2^{-1}q^{\frac{1}{2}},q;q)_\infty\notag\\
		&\quad+z_1q^{\frac{1}{2}}(-z_1^2 z_2q^{3},-z_1^{-2} z_2^{-1},q^3;q^3)_\infty (-z_2,-z_2^{-1}q,q;q)_\infty.
	\end{align}
\end{lemma}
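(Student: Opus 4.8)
The plan is to evaluate the double theta sum on the left by splitting the outer summation according to the parity of $m$ and then decoupling the two sums by completing the square in $n$. Writing $m=2k$ in the first piece, the exponent $\tfrac12(m^2+n^2-mn)$ equals $\tfrac12(n-k)^2+\tfrac32 k^2$; substituting $n\mapsto n+k$ (which is harmless since $n$ ranges over all of $\mathbb{Z}$) and using $z_1^{2k}z_2^{n+k}=(z_1^2z_2)^k z_2^n$, this piece factors as
\[
\Big(\sum_{n\in\mathbb{Z}} z_2^{n} q^{n^2/2}\Big)\Big(\sum_{k\in\mathbb{Z}}(z_1^2z_2)^{k} q^{3k^2/2}\Big).
\]
Writing $m=2k+1$ in the second piece, the analogous step yields exponent $\tfrac12\big(n-k-\tfrac12\big)^2+\tfrac32 k^2+\tfrac32 k+\tfrac38$ after $n\mapsto n+k$, and with $z_1^{2k+1}z_2^{n+k}=z_1(z_1^2z_2)^k z_2^n$ this piece becomes
\[
z_1 q^{3/8}\Big(\sum_{n\in\mathbb{Z}} z_2^{n} q^{(n-1/2)^2/2}\Big)\Big(\sum_{k\in\mathbb{Z}}(z_1^2z_2)^{k} q^{(3k^2+3k)/2}\Big).
\]

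Next I would evaluate each of the four resulting single-variable sums using Jacobi's triple product \eqref{eq:JTP}. The basic rewritings are $q^{n^2/2}=(q^{1/2})^{n}q^{\binom{n}{2}}$ and $q^{(n-1/2)^2/2}=q^{1/8}q^{\binom{n}{2}}$, which give $\sum_n z^n q^{n^2/2}=(-zq^{1/2},-z^{-1}q^{1/2},q;q)_\infty$ and $\sum_n z^n q^{(n-1/2)^2/2}=q^{1/8}(-z,-z^{-1}q,q;q)_\infty$; the two $k$-sums are handled the same way with $q$ replaced by $q^3$ (using $q^{3k^2/2}=(q^{3/2})^k(q^3)^{\binom{k}{2}}$ and $q^{(3k^2+3k)/2}=(q^3)^{\binom{k+1}{2}}$). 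Substituting these evaluations back, the even part reproduces exactly the first product on the right-hand side of \eqref{eq:double-sum-rep}, and the odd part, after collecting $q^{3/8}\cdot q^{1/8}=q^{1/2}$, reproduces exactly the second product. Adding the two pieces gives the claimed identity.

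Everything after the parity split is mechanical; the only genuine choice is to split on the parity of $m$ rather than $n$, which is what makes the extra factor $z_1$ and the power $q^{1/2}$ emerge cleanly in the odd sector. The one place to be careful is the bookkeeping of the half-integer exponents and the shift $n\mapsto n+k$ within each parity class, but there is no substantial obstacle here.
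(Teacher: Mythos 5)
Your proposal is correct and follows essentially the same route as the paper: split on the parity of $m$, decouple the double sum by a linear shift in the inner index (your $n\mapsto n+k$ is the same move as the paper's $w=u-v$ up to replacing $w$ by $-w$), and evaluate each resulting single bilateral sum with Jacobi's triple product \eqref{eq:JTP}. All of your exponent bookkeeping, including the $q^{3/8}\cdot q^{1/8}=q^{1/2}$ collection in the odd sector, checks out.
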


\begin{proof}
	Considering the parity of $m$, we have
	\begin{align*}
		\sum_{m,n=-\infty}^\infty z_1^m z_2^n q^{\frac{m^2}{2}+\frac{n^2}{2}-\frac{mn}{2}}
		&=\sum_{u,v=-\infty}^\infty z_1^{2u} z_2^v  q^{\frac{(2u)^2}{2}+\frac{v^2}{2}-\frac{(2u)v}{2}}\\
		&\quad+\sum_{u,v=-\infty}^\infty z_1^{2u+1} z_2^v q^{\frac{(2u+1)^2}{2}+\frac{v^2}{2}-\frac{(2u+1)v}{2}}\\
		&=\sum_{u,v=-\infty}^\infty z_1^{2u}z_2^{u-(u-v)}q^{\binom{u-v}{2}+\frac{u-v}{2}+3\binom{u}{2}+\frac{3u}{2}}\\
		&\quad+z_1 q^{\frac{1}{2}}\sum_{u,v=-\infty}^\infty z_1^{2u}z_2^{u-(u-v)} q^{\binom{u-v}{2}+(u-v)+3\binom{u}{2}+3u}\\
		\text{\tiny ($w=u-v$)}&=\sum_{u=-\infty}^\infty (z_1^2 z_2)^u q^{3\binom{u}{2}+\frac{3u}{2}}\sum_{w=-\infty}^\infty z_2^{-w} q^{\binom{w}{2}+\frac{w}{2}}\\
		&\quad +z_1q^{\frac{1}{2}}\sum_{u=-\infty}^\infty (z_1^2 z_2)^u  q^{3\binom{u}{2}+3u}\sum_{w=-\infty}^\infty z_2^{-w}  q^{\binom{w}{2}+w}\\
		\text{\tiny (by \eqref{eq:JTP})}&=(-z_1^2 z_2q^{\frac{3}{2}},-z_1^{-2} z_2^{-1} q^{\frac{3}{2}},q^3;q^3)_\infty (-z_2q^{\frac{1}{2}},-z_2^{-1}q^{\frac{1}{2}},q;q)_\infty\\
		&\quad+z_1q^{\frac{1}{2}}(-z_1^2 z_2q^{3},-z_1^{-2} z_2^{-1},q^3;q^3)_\infty (-z_2,-z_2^{-1}q,q;q)_\infty,
	\end{align*}
	which is our desired result.
\end{proof}

\section{Some auxiliary series}

For the purpose of showing Theorems \ref{th:G3-general} and \ref{th:VOA-1}, we also need to establish explicit expressions for some auxiliary series.

We start with an easier one.

\begin{lemma}
	Define
	\begin{subequations}
	\begin{align}\label{eq:S_M}
	S_M:=\sum_{\substack{m,n\ge 0\\m+n=M}}\frac{x^m y^n q^{\binom{m}{2}+\binom{n}{2}}}{(q^2;q^2)_m(q^2;q^2)_n}.
	\end{align}
	Then for $M\ge 0$,
	\begin{align}\label{eq:SM-exp}
	S_M=\frac{y^M q^{\binom{M}{2}}(-xy^{-1}q^{1-M};q^2)_M}{(q^2;q^2)_M}.
	\end{align}
	\end{subequations}
\end{lemma}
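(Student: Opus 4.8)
The plan is to evaluate the finite sum $S_M$ by recognising it as a $q$-binomial-type convolution. Writing $n = M-m$, the summand becomes a ratio of $q^2$-Pochhammer symbols times $x^m y^{M-m} q^{\binom m2 + \binom{M-m}2}$, and the exponent $\binom m2 + \binom{M-m}2$ can be rewritten (using $\binom m2 + \binom{M-m}2 = \binom M2 + m^2 - Mm$, a routine identity for binomial coefficients) so that, after pulling out the constant factor $y^M q^{\binom M2}$, we are left with $\sum_{m=0}^M (x/y)^m q^{m^2 - Mm}\big/\big((q^2;q^2)_m (q^2;q^2)_{M-m}\big)$. Dividing and multiplying by $(q^2;q^2)_M$, this is $\frac{y^M q^{\binom M2}}{(q^2;q^2)_M}\sum_{m=0}^M (x/y)^m q^{m^2-Mm}{M\brack m}_{q^2}$, so the whole problem reduces to showing
\begin{align*}
\sum_{m=0}^M (x/y)^m q^{m^2-Mm}{M\brack m}_{q^2} = (-x y^{-1} q^{1-M};q^2)_M.
\end{align*}

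First I would massage the left-hand side to match the form of the $q$-binomial theorem \eqref{eq:q-Bin}. With base $q^2$, identity \eqref{eq:q-Bin} reads $(z;q^2)_M = \sum_{m\ge 0} (-1)^m z^m q^{2\binom m2}{M\brack m}_{q^2} = \sum_{m\ge 0}(-1)^m z^m q^{m^2-m}{M\brack m}_{q^2}$, since $2\binom m2 = m^2-m$. Comparing exponents, I need the coefficient $q^{m^2 - Mm}$, whereas \eqref{eq:q-Bin} produces $q^{m^2-m}$; the discrepancy is $q^{-(M-1)m} = (q^{-(M-1)})^m$, which I absorb into $z$. Concretely, set $z$ so that $(-1)^m z^m q^{m^2-m} = (x/y)^m q^{m^2-Mm}$, i.e. $z = -(x/y)q^{-(M-1)} = -x y^{-1} q^{1-M}$. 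Then by \eqref{eq:q-Bin} the sum equals $(-xy^{-1}q^{1-M};q^2)_M$, exactly as required.

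Substituting back gives $S_M = \dfrac{y^M q^{\binom M2}(-xy^{-1}q^{1-M};q^2)_M}{(q^2;q^2)_M}$, which is \eqref{eq:SM-exp}. I do not anticipate a genuine obstacle here: the only thing requiring care is the bookkeeping of the quadratic exponent — verifying $\binom m2 + \binom{M-m}2 = \binom M2 + m^2 - Mm$ and $2\binom m2 = m^2-m$ — and then correctly identifying the argument $z = -xy^{-1}q^{1-M}$ of the $q$-binomial theorem. One should also note that the terms of $S_M$ with $m<0$ or $m>M$ vanish because $1/(q^2;q^2)_k = 0$ for $k<0$ (consistent with the convention used elsewhere in the paper), so extending the summation range in \eqref{eq:q-Bin} causes no issue.
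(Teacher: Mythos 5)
Your proposal is correct and is essentially identical to the paper's proof: both substitute $n=M-m$, use $\binom{m}{2}+\binom{M-m}{2}=\binom{M}{2}+m^2-Mm$ to factor out $y^Mq^{\binom{M}{2}}/(q^2;q^2)_M$, and then apply the $q$-binomial theorem \eqref{eq:q-Bin} in base $q^2$ with $z=-xy^{-1}q^{1-M}$. No issues.
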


\begin{proof}
Noticing that $n=M-m$, we have
\begin{align*}
S_M&=\sum_{m\ge 0}\frac{x^m y^{M-m} q^{\binom{m}{2}+\binom{M-m}{2}}}{(q^2;q^2)_m(q^2;q^2)_{M-m}}\\
&=\frac{y^M q^{\binom{M}{2}}}{(q^2;q^2)_M}\sum_{m\ge 0}(xy^{-1})^m q^{2\binom{m}{2}+(1-M)m}{M \brack m}_{q^2}\\
\text{\tiny (by \eqref{eq:q-Bin})}&=\frac{y^M q^{\binom{M}{2}}}{(q^2;q^2)_M} (-xy^{-1}q^{1-M};q^2)_M,
\end{align*}
which is exactly our desired result. For our purpose of using this relation in the proof of Theorem \ref{th:G3-general}, it would be better to restate it according to the parity of $M$:
\begin{align}
    S_{2M}&=\frac{x^M y^M q^{M^2-M} (-xy^{-1}q;q^2)_M (-x^{-1}yq;q^2)_M}{(q^2;q^2)_{2M}},\label{eq:SM-even}\\
    S_{2M+1}&=\frac{(x+y) x^M y^M q^{M^2} (-xy^{-1}q^2;q^2)_M (-x^{-1}yq^2;q^2)_M}{(q^2;q^2)_{2M+1}},\label{eq:SM-odd}
\end{align}
where $M\ge 0$.
\end{proof}

The next result is more surprising.

\begin{lemma}\label{le:T-Exp}
	Define
	\begin{align}\label{eq:T-Def}
	T_M&:=\sum_{\substack{N_1,N_2,N_3,N_4,N_5\ge 0\\N_1+N_2+2N_3+N_4+N_5=M}}\frac{z_1^{N_1+N_3+N_4}z_2^{N_2+N_3+N_5}}{(q^2;q^2)_{N_1}(q^2;q^2)_{N_2}(q^2;q^2)_{N_3}(q^2;q^2)_{N_4}(q^2;q^2)_{N_5}}\notag\\
	&\ \times q^{N_1^2+N_2^2+N_3^2+N_4^2+N_5^2+N_1N_2+N_1N_3+N_1N_5+N_2N_3+N_2N_4+N_3N_4+N_3N_5+N_4N_5-N_1-N_2}.
	\end{align}
	Then for $M\ge 0$,
	\begin{align}\label{eq:TM-exp}
	T_M=\frac{\big(z_1+z_2\big)z_2^{M-1}q^{\binom{M}{2}}(-z_1 z_2^{-1}q^{2-M};q^2)_M}{\big(1+z_1 z_2^{-1}q^M\big)(q;q)_M}.
	\end{align}

	Also, define
	\begin{align}\label{eq:T-Def-tilde}
		\widetilde{T}_M&:=\sum_{\substack{N_1,N_2,N_3,N_4,N_5\ge 0\\N_1+N_2+2N_3+N_4+N_5=M}}\frac{(-1)^{N_1+N_2}z_1^{N_1+N_3+N_4}z_2^{N_2+N_3+N_5}}{(q^2;q^2)_{N_1}(q^2;q^2)_{N_2}(q^2;q^2)_{N_3}(q^2;q^2)_{N_4}(q^2;q^2)_{N_5}}\notag\\
		&\ \times q^{N_1^2+N_2^2+N_3^2+N_4^2+N_5^2+N_1N_2+N_1N_3+N_1N_5+N_2N_3+N_2N_4+N_3N_4+N_3N_5+N_4N_5-N_1-N_2}.
	\end{align}
	Then for $M\ge 0$,
	\begin{align}\label{eq:TM-exp-tilde}
		\widetilde{T}_M=-\frac{\big(z_1-(-1)^M z_2\big)z_2^{M-1}(-q)^{\binom{M}{2}}(z_1 z_2^{-1}q^{2-M};q^2)_M}{\big(1-z_1 z_2^{-1}q^M\big)(-q;-q)_M}.
	\end{align}
\end{lemma}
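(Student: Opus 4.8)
\medskip

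\noindent\textbf{Strategy.} The quantity $T_M$ is a fivefold sum constrained by the single relation $N_1+N_2+2N_3+N_4+N_5=M$, so the plan is to perform the ``cheap'' summations first and reduce it to something one-dimensional. The key algebraic observation is that on the constraint surface the $N_3$-part of the exponent of $q$ equals $N_3^2+N_3(N_1+N_2+N_4+N_5)=N_3M-N_3^2$, while the remaining quadratic cross terms among $N_1,N_2,N_4,N_5$ bundle up as $(N_1+N_4)(N_2+N_5)$. Writing $a=N_1+N_4$ and $b=N_2+N_5$ (so $a+b=M-2N_3$), the sums over $N_1$ (with $N_4=a-N_1$) and over $N_2$ (with $N_5=b-N_2$) decouple, and each equals the convolution
\[
\sum_{j+k=a}\frac{q^{j^2+k^2-j}}{(q^2;q^2)_j(q^2;q^2)_k}=\frac{q^{\binom a2}}{(q;q)_a}.
\]
This I would prove by comparing generating functions in a formal variable $x$: the left-hand side factors as $\big(\sum_j \tfrac{q^{j^2-j}x^j}{(q^2;q^2)_j}\big)\big(\sum_k\tfrac{q^{k^2}x^k}{(q^2;q^2)_k}\big)=(-x;q^2)_\infty(-xq;q^2)_\infty=(-x;q)_\infty$ by two uses of Euler's identity \eqref{eq:Eul-2} in base $q^2$, and the right-hand side is $(-x;q)_\infty$ by \eqref{eq:Eul-2} in base $q$.

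\medskip

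\noindent Substituting these evaluations turns $T_M$ into a double sum over $N_3$ and $a$. Reparametrising by $p=N_3+a$, $r=N_3+b$, one gets $p+r=M$, the monomial becomes $z_1^pz_2^r$, and a short rearrangement shows the exponent of $q$ collapses to $pr+\binom{p-N_3}{2}+\binom{r-N_3}{2}$, whence
\[
T_M=\sum_{p+r=M}z_1^pz_2^r\,q^{pr}\,V_{p,r},\qquad
V_{p,r}:=\sum_{k\ge0}\frac{q^{\binom{p-k}{2}+\binom{r-k}{2}}}{(q^2;q^2)_k\,(q;q)_{p-k}\,(q;q)_{r-k}}.
\]
It remains to evaluate $V_{p,r}$ and to recognise the finite sum as the claimed product. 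A clean way to organise this is to write $p=k+i$, $r=k+j$ with $i+j=M-2k$; then $pr+\binom{p-k}{2}+\binom{r-k}{2}$ simplifies to $kM-k^2+\binom{M-2k}{2}$ with \emph{no} dependence on $i$ or $j$, so after exchanging the order of summation
\[
z_2^{-M}T_M=\sum_{k\ge0}\frac{q^{kM-k^2+\binom{M-2k}{2}}\,t^k}{(q^2;q^2)_k\,(q;q)_{M-2k}}\sum_{i\ge0}{M-2k\brack i}_q t^i,\qquad t:=z_1/z_2,
\]
a polynomial in $t$ of degree $M$. On the other side, cancelling the last factor $1+z_1z_2^{-1}q^M$ of the Pochhammer in \eqref{eq:TM-exp} against the denominator shows that $z_2^{-M}\cdot\text{RHS}\eqref{eq:TM-exp}=\tfrac{q^{\binom M2}}{(q;q)_M}(1+t)(-tq^{2-M};q^2)_{M-1}$, whose coefficient of $t^p$ is, by the $q$-binomial theorem \eqref{eq:q-Bin}, $\tfrac{q^{\binom M2}}{(q;q)_M}\big({M-1\brack p}_{q^2}q^{2\binom p2+(2-M)p}+{M-1\brack p-1}_{q^2}q^{2\binom{p-1}2+(2-M)(p-1)}\big)$. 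Matching coefficients of $t^p$ thus reduces \eqref{eq:TM-exp} to a finite $q$-binomial identity, which I would settle with \eqref{eq:q-Bin} together with the $q$-Chu--Vandermonde evaluation \eqref{eq:q-CV-spe} (after, if necessary, rewriting $V_{p,r}$ as a terminating ${}_{2}\phi_1$ at argument $q^2$ and applying a Heine- or \eqref{eq:3phi2-1}-type transformation to bring it into summable form).

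\medskip

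\noindent For $\widetilde T_M$ the same three steps apply. The sign $(-1)^{N_1+N_2}$ inserts $(-1)^{N_1}$ inside the $(N_1,N_4)$-summation and $(-1)^{N_2}$ inside the $(N_2,N_5)$-summation; the inner convolutions then evaluate (again via \eqref{eq:Eul-2}, now mixing the bases $q^2$ and $-q$) to $(-1)^a(-q)^{\binom a2}/(-q;-q)_a$ and $(-1)^b(-q)^{\binom b2}/(-q;-q)_b$, and Steps 2--3 run with the resulting overall factor $(-1)^{a+b}=(-1)^M$ and the base change $q\mapsto -q$ inside the $V$-type sum, producing \eqref{eq:TM-exp-tilde}.

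\medskip

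\noindent\textbf{Main obstacle.} The delicate point is the last step: the sum $V_{p,r}$ is \emph{not} a textbook $q$-Chu--Vandermonde because of the $(q^2;q^2)_k$ in the denominator and the extra $q^{k}$, so one must either push the coefficient-by-coefficient comparison with the target product through carefully or perform a preparatory $q$-hypergeometric transformation; and in the $\widetilde T_M$ case, keeping the $q\mapsto-q$ substitutions and the parity factors mutually consistent is the other place where slips are easy.
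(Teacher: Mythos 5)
Your reduction is correct and, for the record, follows a genuinely different route from the paper, which sums out $N_4$ via the $q$-binomial theorem and then proves the resulting triple sum by a computer-generated fifth-order recurrence in $M$ (via \texttt{qMultiSum}) plus initial-value checks. I verified your algebra: on the constraint surface the $N_3$-terms do collapse to $N_3M-N_3^2$, the cross terms do bundle as $(N_1+N_4)(N_2+N_5)$, the two inner convolutions evaluate to $q^{\binom{a}{2}}/(q;q)_a$ and $q^{\binom{b}{2}}/(q;q)_b$ exactly as your generating-function argument shows, and the exponent does reduce to $kM-k^2+\binom{M-2k}{2}$ independently of $i,j$. The analogous bookkeeping for $\widetilde{T}_M$ (with $(x;q^2)_\infty(-xq;q^2)_\infty=(x;-q)_\infty$) is also right.

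The gap is the final step, which you yourself flag: the entire content of the lemma is now concentrated in the evaluation of
\begin{align*}
V_{p,r}=\sum_{k\ge0}\frac{q^{\binom{p-k}{2}+\binom{r-k}{2}}}{(q^2;q^2)_k\,(q;q)_{p-k}\,(q;q)_{r-k}}
=\frac{q^{\binom{p}{2}+\binom{r}{2}}}{(q;q)_p(q;q)_r}\;{}_{2}\phi_{1}\!\left(\begin{matrix} q^{-p},q^{-r}\\ -q\end{matrix};q,q^2\right),
\end{align*}
and this is \emph{not} within reach of the tools you cite. The argument is $q^2$ rather than $q$ or $cq^n/a$, so neither form of $q$-Chu--Vandermonde (in particular \eqref{eq:q-CV-spe}) applies; the lower parameter $-q$ makes this a quadratic-base (Bailey--Daum/Watson-type) series, and \eqref{eq:3phi2-1} is a ${}_{3}\phi_{2}$ transformation at a generic argument, not a summation that produces the two-term closed form $\frac{q^{\binom{p}{2}+\binom{r}{2}}}{(q;q)_{p+r}}\bigl(q^{p(1-r)}{p+r-1\brack p}_{q^2}+q^{r(1-p)}{p+r-1\brack p-1}_{q^2}\bigr)$ that your coefficient comparison requires. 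That closed form is true (I checked small cases), but proving it needs a terminating quadratic summation theorem (e.g.\ a $q$-analogue of Bailey's ${}_2F_1(-1)$ or Watson's theorem) or its own induction on $p+r$; as written, the proposal reduces the lemma to an unproved identity rather than completing the proof. If you supply that one summation, your argument would give a fully human-readable alternative to the paper's computer-assisted proof.
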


\begin{proof}
We may compute the recurrences for $T_M$ and $\widetilde{T}_M$ with recourse to the \textit{Mathematica} package \texttt{qMultiSum} implemented by Riese \cite{Rie2003} of the Research Institute for Symbolic Computation (RISC) in Johannes Kepler University. This package can be downloaded at the website of RISC:
\begin{center}
	{\small\url{https://www3.risc.jku.at/research/combinat/software/ergosum/index.html}}
\end{center}

However, to reduce the amount of calculations, it is necessary to rewrite $T_M$ as
\begin{align}
T_M&=\sum_{N_1,N_2,N_3\ge 0}\frac{z_1^{N_1+N_3}z_2^{M-N_1-N_3}(-z_1z_2^{-1}q^{1+2N_2+2N_3-M};q^2)_{M-N_1-N_2-2N_3}}{(q^2;q^2)_{N_1}(q^2;q^2)_{N_2}(q^2;q^2)_{N_3}(q^2;q^2)_{M-N_1-N_2-2N_3}}\notag\\
&\times q^{M^2 - M N_1 - 2 M N_2 - 3 M N_3 + N_1^2 + 2 N_2^2 + 3 N_3^2 + 2 N_1 N_2 + 2 N_1 N_3 + 4 N_2 N_3 - N_1 - N_2}.
\end{align}
To see this, we recall $N_5=M-N_1-N_2-2N_3-N_4$ and start with the inner summation in $N_4$ by using \eqref{eq:q-Bin},
\begin{align*}
T_M&=\sum_{N_1,N_2,N_3\ge 0}\frac{z_1^{N_1+N_3}z_2^{M-N_1-N_3}}{(q^2;q^2)_{N_1}(q^2;q^2)_{N_2}(q^2;q^2)_{N_3}}\\
&\quad\times q^{M^2 - M N_1 - 2 M N_2 - 3 M N_3 + N_1^2 + 2 N_2^2 + 3 N_3^2 + 2 N_1 N_2 + 2 N_1 N_3 + 4 N_2 N_3 - N_1 - N_2}\\
&\quad\times\sum_{N_4\ge 0}\frac{z_1^{N_4}z_2^{-N_4}q^{2\binom{N_4}{2}+(1+2N_2+2N_3-M)N_4}}{(q^2;q^2)_{N_4}(q^2;q^2)_{M-N_1-N_2-2N_3-N_4}}\\
&=\sum_{N_1,N_2,N_3\ge 0}\frac{z_1^{N_1+N_3}z_2^{M-N_1-N_3}}{(q^2;q^2)_{N_1}(q^2;q^2)_{N_2}(q^2;q^2)_{N_3}(q^2;q^2)_{M-N_1-N_2-2N_3}}\\
&\quad\times q^{M^2 - M N_1 - 2 M N_2 - 3 M N_3 + N_1^2 + 2 N_2^2 + 3 N_3^2 + 2 N_1 N_2 + 2 N_1 N_3 + 4 N_2 N_3 - N_1 - N_2}\\
&\quad\times\sum_{N_4\ge 0}z_1^{N_4}z_2^{-N_4}q^{2\binom{N_4}{2}+(1+2N_2+2N_3-M)N_4} {M-N_1-N_2-2N_3 \brack N_4}_{q^2}\\
&=\sum_{N_1,N_2,N_3\ge 0}\frac{z_1^{N_1+N_3}z_2^{M-N_1-N_3}(-z_1z_2^{-1}q^{1+2N_2+2N_3-M};q^2)_{M-N_1-N_2-2N_3}}{(q^2;q^2)_{N_1}(q^2;q^2)_{N_2}(q^2;q^2)_{N_3}(q^2;q^2)_{M-N_1-N_2-2N_3}}\\
&\quad\times q^{M^2 - M N_1 - 2 M N_2 - 3 M N_3 + N_1^2 + 2 N_2^2 + 3 N_3^2 + 2 N_1 N_2 + 2 N_1 N_3 + 4 N_2 N_3 - N_1 - N_2}.
\end{align*}
Now we compute the recurrence for $T_M$ according to the parity of $M$. In particular, \eqref{eq:TM-exp} is equivalent to
\begin{subequations}
\begin{align}
T_{2M}&=\frac{z_1^M z_2^Mq^{M^2}(-z_1 z_2^{-1};q^2)_M (-z_1^{-1} z_2;q^2)_M}{(q^2;q^2)_{M}(q;q^2)_{M}},\label{eq:T-2M}\\
T_{2M+1}&=\frac{\big(z_1+z_2\big)z_1^M z_2^Mq^{M(M+1)}(-z_1 z_2^{-1} q;q^2)_M (-z_1^{-1} z_2 q;q^2)_M}{(q^2;q^2)_{M}(q;q^2)_{M+1}}.\label{eq:T-2M+1}
\end{align}
\end{subequations}
We further write $z=z_1 z_2^{-1}$ and define
\begin{align*}
	T^{\star}_{M}&:=\sum_{N_1,N_2,N_3\ge 0}\frac{z^{N_1+N_3}(-zq^{1+2N_2+2N_3-M};q^2)_{M-N_1-N_2-2N_3}}{(q^2;q^2)_{N_1}(q^2;q^2)_{N_2}(q^2;q^2)_{N_3}(q^2;q^2)_{M-N_1-N_2-2N_3}}\notag\\
	&\ \times q^{ - M N_1 - 2 M N_2 - 3 M N_3 + N_1^2 + 2 N_2^2 + 3 N_3^2 + 2 N_1 N_2 + 2 N_1 N_3 + 4 N_2 N_3 - N_1 - N_2}.
\end{align*}
Then $T_M = T^{\star}_{M} \cdot z_2^M q^{M^2}$. Therefore, it suffices to show
\begin{subequations}
\begin{align}
	T_{2M}^{\star}&=\frac{z^M q^{-3M^2}(-z;q^2)_M (-z^{-1};q^2)_M}{(q^2;q^2)_{M}(q;q^2)_{M}},\label{eq:T-2M-new}\\
	T_{2M+1}^{\star}&=\frac{\big(1+z\big)z^M q^{-3M(M+1)}(-z q;q^2)_M (-z^{-1} q;q^2)_M}{q (q^2;q^2)_{M}(q;q^2)_{M+1}}.\label{eq:T-2M+1-new}
\end{align}
\end{subequations}
We first treat \eqref{eq:T-2M-new}. To avoid ambiguity, we call $\mathscr{T}_{M}^{\star}:=T_{2M}^{\star}$ for $M\ge 0$. By invoking the \texttt{qMultiSum} package, we find that $\mathscr{T}_{M}^{\star}$ satisfies a fifth-order recurrence, so it suffices to confirm \eqref{eq:T-2M-new}, or equivalently \eqref{eq:T-2M}, for $0\le M\le 4$ and to verify that the right-hand side of \eqref{eq:T-2M-new} fulfills the same recurrence. The latter calculation can be done directly due to the fact that the ratio of the $(M+1)$-th and $M$-th terms on the right-hand side of \eqref{eq:T-2M-new} is a rational function of $q^M$, so we can effectively remove common factors when verifying the recurrence for the right-hand side of \eqref{eq:T-2M-new}. Here we will not display the concrete recurrence, which involves lengthy terms, for simplicity, but all \textit{Mathematica} codes can be found in the following webpage (notice that in the output recurrence, $\operatorname{SUM}[M]$ stands for $\mathscr{T}_{M}^{\star}$):
\begin{center}
	\url{https://github.com/shanechern/VOA}
\end{center}
We may similarly find the recurrence for $T_{2M+1}^{\star}$, as presented in the same webpage, and arrive at a proof of \eqref{eq:T-2M+1-new}.

In the same way, we rewrite \eqref{eq:T-Def-tilde} as
\begin{align*}
	\widetilde{T}_M&=\sum_{N_1,N_2,N_3\ge 0}\frac{(-1)^{N_1+N_2}z_1^{N_1+N_3}z_2^{M-N_1-N_3}(-z_1z_2^{-1}q^{1+2N_2+2N_3-M};q^2)_{M-N_1-N_2-2N_3}}{(q^2;q^2)_{N_1}(q^2;q^2)_{N_2}(q^2;q^2)_{N_3}(q^2;q^2)_{M-N_1-N_2-2N_3}}\notag\\
	&\times q^{M^2 - M N_1 - 2 M N_2 - 3 M N_3 + N_1^2 + 2 N_2^2 + 3 N_3^2 + 2 N_1 N_2 + 2 N_1 N_3 + 4 N_2 N_3 - N_1 - N_2}.
\end{align*}
Also, \eqref{eq:TM-exp-tilde} is equivalent to
\begin{subequations}
\begin{align}
	\widetilde{T}_{2M} &= \frac{z_1^M z_2^Mq^{M^2}(z_1 z_2^{-1};q^2)_M (z_1^{-1} z_2;q^2)_M}{(q^2;q^2)_{M}(-q;q^2)_{M}},\\
	\widetilde{T}_{2M+1} &= -\frac{\big(z_1+z_2\big)z_1^M z_2^Mq^{M(M+1)}(z_1 z_2^{-1} q;q^2)_M (z_1^{-1} z_2 q;q^2)_M}{(q^2;q^2)_{M}(-q;q^2)_{M+1}}.
\end{align}
\end{subequations}
The remaining computer-assisted calculations are analogous.
\end{proof}

\section{More identities of Rogers--Ramanujan type}

In Section \ref{sec:Level-2}, we have already witnessed several Rogers--Ramanujan type identities as specialisations of the transforms in Section \ref{sec:transform}. Since such instances are of interest within the community of $q$-theorists, we will record more in this appendix.

First, we reload the two identities from Theorem \ref{th:G3-general}, namely, \eqref{eq:triple-AG-1-2} and \eqref{eq:triple-AG-2-2}.

\begin{theorem}
Let $\mathbf{G}_3$ and $\mathbf{N}$ be as in \eqref{eq:G3-def} and \eqref{eq:G3-N}, respectively. Then,
\begin{subequations}
\begin{align}
	&\sum_{N_1,N_2,N_3\ge 0} \frac{q^{\mathbf{N}^{\mathsf{T}}\cdot\mathbf{G}_3\cdot\mathbf{N}}}{(q^2;q^2)_{N_1}(q^2;q^2)_{N_2}(q^2;q^2)_{N_3}}\notag\\
	&\qquad=\frac{(q;q)_\infty(q^2,q^{3},q^{5};q^{5})_\infty}{(q^2;q^2)^2_\infty}+\frac{4q(q^4;q^4)_\infty(q^4,q^{16},q^{20};q^{20})_\infty}{(q^2;q^2)^2_\infty},\\
	&\sum_{N_1,N_2,N_3\ge 0} \frac{q^{\mathbf{N}^{\mathsf{T}}\cdot\mathbf{G}_3\cdot\mathbf{N}-N_2-N_3}}{(q^2;q^2)_{N_1}(q^2;q^2)_{N_2}(q^2;q^2)_{N_3}}\notag\\
	&\qquad=-\frac{(q;q)_\infty(q,q^{4},q^{5};q^{5})_\infty}{(q^2;q^2)^2_\infty}+\frac{4(q^4;q^4)_\infty(q^8,q^{12},q^{20};q^{20})_\infty}{(q^2;q^2)^2_\infty}.
\end{align}
\end{subequations}
\end{theorem}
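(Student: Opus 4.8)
The plan is to recognise that the two identities stated in this theorem are literally \eqref{eq:triple-AG-1-2} and \eqref{eq:triple-AG-2-2}, both of which were already established inside the proof of Theorem \ref{th:sl3UCPF}. So nothing genuinely new is required; the argument I would give is a recapitulation of the chain used there.

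First I would collapse each triple sum into a single sum by specialising Theorem \ref{th:G3-general}. For the first identity, taking $(z_1,z_2,z_3)=(1,1,1)$ in \eqref{eq:G3-general} gives
\[
\sum_{N_1,N_2,N_3\ge 0} \frac{q^{\mathbf{N}^{\mathsf{T}}\cdot\mathbf{G}_3\cdot\mathbf{N}}}{(q^2;q^2)_{N_1}(q^2;q^2)_{N_2}(q^2;q^2)_{N_3}}
=(-q;q^2)_\infty \sum_{M\ge 0}\frac{q^{3M^2}}{(q;q^2)_M (q^4;q^4)_{M}}
+(-q^2;q^2)_\infty \sum_{M\ge 0} \frac{2 q^{3M^2+3M+1}}{(q^2;q^2)_{M} (q^2;q^4)_{M+1}},
\]
and for the second, taking $(z_1,z_2,z_3)=(1,q^{-1},q^{-1})$ produces the companion pair of single sums with exponents $3M^2-2M$ and $3M^2+M$ in place of $3M^2$ and $3M^2+3M+1$.

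Next I would evaluate each of these single sums in closed product form. The first summand in each pair is a Rogers identity: \eqref{eq:S19} with $q\mapsto -q$ evaluates $\sum_M q^{3M^2}/\big((q;q^2)_M(q^4;q^4)_M\big)$, and \eqref{eq:S15} with $q\mapsto -q$ evaluates its $3M^2-2M$ analogue; the second summand in each pair is evaluated by \eqref{eq:S44} and \eqref{eq:S46} with $q\mapsto q^2$. This produces \eqref{eq:triple-AG-1} and \eqref{eq:triple-AG-2}, whose product sides I would then rewrite in terms of the Rogers--Ramanujan functions $G$ and $H$ of \eqref{eq:G-def} and \eqref{eq:H-def}. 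After this rewriting, the first claimed identity becomes equivalent to \eqref{eq:G-modular-eqn} and the second to \eqref{eq:H-modular-eqn}.

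The only genuinely nontrivial step, and hence the main obstacle, is proving these two modular equations for $G$ and $H$. Here I would call on Ramanujan's forty identities, specifically \eqref{eq:GH+} and \eqref{eq:GH-}, to express $G(q)G(q^4)$ and $H(q)H(q^4)$ (and thence their $q\mapsto -q$ versions) as eta-quotients involving $\phi$; after simplification the left-hand sides of both \eqref{eq:G-modular-eqn} and \eqref{eq:H-modular-eqn} collapse to $\big(\phi(q)\phi(-q^5)-\phi(-q)\phi(q^5)\big)/\big(2E(q^2)^2\big)$, up to a factor $q^{-1}$ in the second case. The desired equalities then follow immediately from the classical degree-$5$ modular equation \eqref{eq:modular-deg-5}, namely $\phi(q)\phi(-q^5)-\phi(-q)\phi(q^5)=4q\,E(q^4)E(q^{20})$.
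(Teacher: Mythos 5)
Your proposal is correct and follows essentially the same route as the paper: the paper proves these two identities inside the proof of Theorem \ref{th:sl3UCPF} (as \eqref{eq:triple-AG-1-2} and \eqref{eq:triple-AG-2-2}) by exactly the chain you describe --- specialising \eqref{eq:G3-general} at $(1,1,1)$ and $(1,q^{-1},q^{-1})$, invoking Rogers' identities \eqref{eq:S19}, \eqref{eq:S15}, \eqref{eq:S44}, \eqref{eq:S46}, reducing to the modular equations \eqref{eq:G-modular-eqn} and \eqref{eq:H-modular-eqn} via \eqref{eq:GH+}--\eqref{eq:GH-}, and finishing with the degree-$5$ modular equation \eqref{eq:modular-deg-5}. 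No gaps.
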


We then take a look at the sextuple summations in Theorem \ref{th:VOA-z}. First, we should keep in mind that the right-hand side of \eqref{eq:VOA-z-1} is always possible to be represented as a linear combination of products of theta series by invoking Lemma \ref{le:double-sum-rep}. However, some cases, including \eqref{eq:VOA-1-1} and \eqref{eq:VOA-1-2} are more interesting.

\begin{theorem}\label{th:G4-1}
Let $\mathbf{G}_4$ and $\mathbf{N}$ be as in \eqref{eq:G4-def} and \eqref{eq:G4-N}, respectively. Then,
\begin{subequations}
\begin{align}
	&\sum_{N_1,N_2,N_3,N_4,N_5,N_6\ge 0} \frac{q^{\mathbf{N}^{\mathsf{T}}\cdot \mathbf{G}_4\cdot \mathbf{N}}}{(q^2;q^2)_{N_1}(q^2;q^2)_{N_2}(q^2;q^2)_{N_3}(q^2;q^2)_{N_4}(q^2;q^2)_{N_5}(q^2;q^2)_{N_6}}\notag\\
	&\qquad=\frac{(q^2;q^2)^3_\infty(q^6;q^6)^5_\infty}{(q;q)^2_\infty(q^3;q^3)^2_\infty(q^4;q^4)^2_\infty(q^{12};q^{12})^2_\infty}+\frac{4q(q^4;q^4)^2_\infty(q^{12};q^{12})^2_\infty}{(q^2;q^2)^3_\infty(q^6;q^6)_\infty},\\
	&\sum_{N_1,N_2,N_3,N_4,N_5,N_6\ge 0} \frac{(-1)^{N_1+N_2+N_4+N_5}q^{\mathbf{N}^{\mathsf{T}}\cdot \mathbf{G}
	_4\cdot \mathbf{N}}}{(q^2;q^2)_{N_1}(q^2;q^2)_{N_2}(q^2;q^2)_{N_3}(q^2;q^2)_{N_4}(q^2;q^2)_{N_5}(q^2;q^2)_{N_6}}\notag\\
	&\qquad=\frac{(q;q)^2_\infty(q^3;q^3)^2_\infty}{(q^2;q^2)^3_\infty(q^6;q^6)_\infty},\label{eq:VOA-1-1-T}\\
	&\sum_{N_1,N_2,N_3,N_4,N_5,N_6\ge 0} \frac{q^{\mathbf{N}^{\mathsf{T}}\cdot \mathbf{G}_4\cdot \mathbf{N}-(N_1+N_3+N_4+N_6)}}{(q^2;q^2)_{N_1}(q^2;q^2)_{N_2}(q^2;q^2)_{N_3}(q^2;q^2)_{N_4}(q^2;q^2)_{N_5}(q^2;q^2)_{N_6}}\notag\\
	&\qquad=\frac{6(q^3;q^3)^3_\infty}{(q;q)_\infty(q^2;q^2)^2_\infty},\\
	&\sum_{N_1,N_2,N_3,N_4,N_5,N_6\ge 0} \frac{(-1)^{N_1+N_3+N_4+N_6}q^{\mathbf{N}^{\mathsf{T}}\cdot \mathbf{G}_4\cdot \mathbf{N}-(N_1+N_3+N_4+N_6)}}{(q^2;q^2)_{N_1}(q^2;q^2)_{N_2}(q^2;q^2)_{N_3}(q^2;q^2)_{N_4}(q^2;q^2)_{N_5}(q^2;q^2)_{N_6}}\notag\\
	&\qquad=-\frac{2(q;q)^3_\infty (q^6;q^6)^2_\infty}{(q^2;q^2)^4_\infty (q^3;q^3)_\infty},\label{eq:VOA-1-2-T}\\
	&\sum_{N_1,N_2,N_3,N_4,N_5,N_6\ge 0} \frac{(-1)^{N_1+N_2+N_4+N_5}q^{\mathbf{N}^{\mathsf{T}}\cdot \mathbf{G}_4\cdot \mathbf{N}-(N_1+N_3+N_4+N_6)}}{(q^2;q^2)_{N_1}(q^2;q^2)_{N_2}(q^2;q^2)_{N_3}(q^2;q^2)_{N_4}(q^2;q^2)_{N_5}(q^2;q^2)_{N_6}}\notag\\
	&\qquad=\frac{2(q;q)^3_\infty (q^6;q^6)^2_\infty}{(q^2;q^2)^4_\infty (q^3;q^3)_\infty}.\label{eq:VOA-1-2-T-T}
	\end{align}
\end{subequations}
\end{theorem}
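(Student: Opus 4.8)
The plan is as follows. The first and third of the displayed identities coincide with \eqref{eq:VOA-1-1} and \eqref{eq:VOA-1-2}, both already established inside the proof of Theorem \ref{th:sl4UCPF}, so nothing new is needed there. The three remaining identities \eqref{eq:VOA-1-1-T}, \eqref{eq:VOA-1-2-T} and \eqref{eq:VOA-1-2-T-T} will each be read off from the master transform \eqref{eq:VOA-z-1} of Theorem \ref{th:VOA-z} by a single choice of the parameters $z_1,z_2\in\{-1,-q^{-1}\}$ and $\delta_1,\delta_2\in\{0,1\}$, followed by an evaluation of the resulting bilateral double theta sum via Lemma \ref{le:double-sum-rep}.

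What makes this mechanical are the congruences $N_1+N_3+N_4+N_6\equiv N_1+N_3-N_4-N_6$ and $N_2+N_3+N_5+N_6\equiv N_2+N_3-N_5-N_6\pmod{2}$, valid for all nonnegative integers $N_1,\dots,N_6$, whence also $N_1+N_2+N_4+N_5\equiv (N_1+N_3-N_4-N_6)+(N_2+N_3-N_5-N_6)\pmod 2$. Consequently a factor $(-1)^{N_1+N_2+N_4+N_5}$ in the summand of \eqref{eq:VOA-z-1} is produced by taking $z_1=z_2=-1$, a factor $(-1)^{N_1+N_3+N_4+N_6}$ by taking $z_1=-1$, while a linear shift $-(N_1+N_3+N_4+N_6)$ in the exponent is produced — exactly as for \eqref{eq:VOA-1-2} in the proof of Theorem \ref{th:sl4UCPF} — by scaling $z_1$ by a further factor $q^{-1}$ and taking $\delta_1=1$, so that $z_1^{N_1+N_3-N_4-N_6}q^{-2\delta_1(N_4+N_6)}$ collapses to the required monomial $(\pm1)^{N_1+N_3+N_4+N_6}q^{-(N_1+N_3+N_4+N_6)}$. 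Explicitly, \eqref{eq:VOA-1-1-T} is the case $(z_1,z_2,\delta_1,\delta_2)=(-1,-1,0,0)$; \eqref{eq:VOA-1-2-T} is the case $(-q^{-1},1,1,0)$; and \eqref{eq:VOA-1-2-T-T} is the case $(-q^{-1},-1,1,0)$.

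It then remains to evaluate the right-hand side of \eqref{eq:VOA-z-1} in each case. For \eqref{eq:VOA-1-1-T} it is $\frac{1}{E(q^2)^2}\sum_{m,n}(-1)^{m+n}q^{m^2+n^2-mn}$, and Lemma \ref{le:double-sum-rep} with $q$ replaced by $q^2$ and $z_1=z_2=-1$ — the same evaluation used for the right side of \eqref{eq:chi-gamma-0-T} in proving \eqref{eq:A3Tunshf} — gives $\frac{E(q)^2E(q^3)^2}{E(q^2)^3E(q^6)}$, as claimed. For \eqref{eq:VOA-1-2-T} and \eqref{eq:VOA-1-2-T-T}, the factor $1+q^{2m}$ coming from $\delta_1=1$ is removed by the symmetrisation $m\mapsto -m$, $n\mapsto -n$, leaving $\frac{2}{E(q^2)^2}\sum_{m,n}(-1)^{m}q^{m^2+n^2-mn-m}$ and $\frac{2}{E(q^2)^2}\sum_{m,n}(-1)^{m+n}q^{m^2+n^2-mn-m}$ respectively. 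The first is precisely the sum evaluated, after the substitution behind \eqref{eq:A3Tshf}, as $-\frac{E(q)^3E(q^6)^2}{E(q^2)^2E(q^3)}$ via \eqref{eq:double-sum-rep} and \eqref{eq:13-2}, yielding \eqref{eq:VOA-1-2-T}; for the second, applying \eqref{eq:double-sum-rep} with $q\mapsto q^2$, $z_1=-q^{-1}$, $z_2=-1$ makes the second triple product on the right of \eqref{eq:double-sum-rep} vanish, since one of its factors becomes $(1;q^2)_\infty=0$, while the surviving product simplifies — after the routine splitting of $\prod_{n\ge1}(1-q^n)$ over residues modulo $6$ — to $\frac{E(q)^3E(q^6)^2}{E(q^2)^2E(q^3)}$, yielding \eqref{eq:VOA-1-2-T-T}. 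In particular the last two right-hand sides are negatives of one another, as the stray factor $(-1)^n$ already predicts.

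In the end this is an assembly job: the transform \eqref{eq:VOA-z-1}, Lemma \ref{le:double-sum-rep}, and the eta-function dissection identities of Appendix A have all been exercised already in the proofs of Theorem \ref{th:sl4UCPF} and of \eqref{eq:A3Tunshf}, \eqref{eq:A3Tshf}, \eqref{eq:A3Torb}. The only point requiring genuine care — hence the nearest thing to an obstacle — is bookkeeping: keeping track of which power of $q$ and which sign are carried simultaneously by each $z_i$, and performing the summation-variable shift that clears the linear term in the exponent before Lemma \ref{le:double-sum-rep} is invoked.
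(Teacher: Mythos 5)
Your proposal is correct and follows essentially the same route as the paper: the first and third identities are quoted from \eqref{eq:VOA-1-1} and \eqref{eq:VOA-1-2}, and the three signed identities are obtained by specialising $z_1,z_2,\delta_1,\delta_2$ in \eqref{eq:VOA-z-1} and evaluating the resulting bilateral double sum via Lemma \ref{le:double-sum-rep} together with \eqref{eq:13-2}. Your parameter choices ($z_1=-q^{-1}$ with $\delta_1=1$, etc.) differ only notationally from the substitutions the paper records, and your explicit verification that the second theta product in \eqref{eq:double-sum-rep} vanishes when $z_2=-1$ supplies a detail the paper leaves implicit.
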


\begin{proof}
    We only need to establish \eqref{eq:VOA-1-1-T}, \eqref{eq:VOA-1-2-T} and \eqref{eq:VOA-1-2-T-T}. For \eqref{eq:VOA-1-1-T}, we use \eqref{eq:VOA-z-1} with $\delta_1=\delta_2=0$ and $z_1=z_2=-1$ and then apply \eqref{eq:double-sum-rep} with $q$ replaced by $q^2$. For \eqref{eq:VOA-1-2-T}, we use \eqref{eq:VOA-z-1} with $\delta_1=1$, $\delta_2=0$ and $(q,z_1,z_2)\mapsto (q^2,-q^2,1)$, and then take advantage of \eqref{eq:double-sum-rep} followed by an application of \eqref{eq:13-2}. Finally, \eqref{eq:VOA-1-2-T-T} is the $\delta_1=1$, $\delta_2=0$ and $(q,z_1,z_2)\mapsto (q^2,-q^2,-1)$ case of \eqref{eq:VOA-z-1}.
\end{proof}

Finally, we notice that the sextuple summations in Theorem \ref{th:VOA-1} reduce to linear combinations of infinite products if either of $z_1$ and $z_2$ is $\pm 1$. Due to the symmetry, we only record the specialisation of $z_2\in\{\pm 1\}$.

\begin{theorem}\label{th:G4-2}
Let $\mathbf{G}_4$ and $\mathbf{N}$ be as in \eqref{eq:G4-def} and \eqref{eq:G4-N}, respectively. Then,
\begin{subequations}
\begin{align}
		&\sum_{N_1,N_2,N_3,N_4,N_5,N_6\ge 0} \frac{z^{N_1+N_3+N_4+N_6} q^{\mathbf{N}^{\mathsf{T}}\cdot \mathbf{G}_4\cdot \mathbf{N}-(N_1+N_2+N_6)}}{(q^2;q^2)_{N_1}(q^2;q^2)_{N_2}(q^2;q^2)_{N_3}(q^2;q^2)_{N_4}(q^2;q^2)_{N_5}(q^2;q^2)_{N_6}}\notag\\
		&\qquad=\frac{2(- z;q)_\infty}{(q;q^2)_\infty},\label{eq:G4-2-sp1}\\
		&\sum_{N_1,N_2,N_3,N_4,N_5,N_6\ge 0} \frac{(-1)^{N_2+N_3+N_5+N_6}z^{N_1+N_3+N_4+N_6} q^{\mathbf{N}^{\mathsf{T}}\cdot \mathbf{G}_4\cdot \mathbf{N}-(N_1+N_2+N_6)}}{(q^2;q^2)_{N_1}(q^2;q^2)_{N_2}(q^2;q^2)_{N_3}(q^2;q^2)_{N_4}(q^2;q^2)_{N_5}(q^2;q^2)_{N_6}}\notag\\
		&\qquad=0,\label{eq:G4-2-sp2}\\
		&\sum_{N_1,N_2,N_3,N_4,N_5,N_6\ge 0} \frac{(-1)^{N_1+N_2+N_6}z^{N_1+N_3+N_4+N_6} q^{\mathbf{N}^{\mathsf{T}}\cdot \mathbf{G}_4\cdot \mathbf{N}-(N_1+N_2+N_6)}}{(q^2;q^2)_{N_1}(q^2;q^2)_{N_2}(q^2;q^2)_{N_3}(q^2;q^2)_{N_4}(q^2;q^2)_{N_5}(q^2;q^2)_{N_6}}\notag\\
		&\qquad=\frac{(z;-q)_\infty}{(-q;q^2)_\infty}-\frac{(-z;-q)_\infty}{(-q;q^2)_\infty},\label{eq:G4-2-sp3}\\
		&\sum_{N_1,N_2,N_3,N_4,N_5,N_6\ge 0} \frac{(-1)^{N_1+N_3+N_5}z^{N_1+N_3+N_4+N_6} q^{\mathbf{N}^{\mathsf{T}}\cdot \mathbf{G}_4\cdot \mathbf{N}-(N_1+N_2+N_6)}}{(q^2;q^2)_{N_1}(q^2;q^2)_{N_2}(q^2;q^2)_{N_3}(q^2;q^2)_{N_4}(q^2;q^2)_{N_5}(q^2;q^2)_{N_6}}\notag\\
		&\qquad=\frac{(z;-q)_\infty}{(-q;q^2)_\infty}+\frac{(-z;-q)_\infty}{(-q;q^2)_\infty}.\label{eq:G4-2-sp4}
	\end{align}
\end{subequations}
\end{theorem}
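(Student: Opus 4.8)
The plan is to obtain all four identities as specializations of the two transforms in Theorem~\ref{th:VOA-1}, namely \eqref{eq:VOA-1-3} and \eqref{eq:VOA-1-3-T-1}, evaluated at $z_2=\pm 1$ while keeping $z_1=z$ generic. The key observation is that the right-hand sides of both transforms carry the factor $(z_2^2;q^2)_M$; when $z_2=\pm 1$ this becomes $(1;q^2)_M$, which vanishes for every $M\ge 1$ and equals $1$ for $M=0$. Consequently each of the four infinite $M$-series on the right collapses to its single $M=0$ term, which is elementary to evaluate, and the left-hand sides reproduce the stated sextuple sums (up to recognizing which combination of sign factors arises).

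First I would treat \eqref{eq:G4-2-sp1} and \eqref{eq:G4-2-sp2} via \eqref{eq:VOA-1-3}. Taking $z_1=z$, $z_2=1$ gives \eqref{eq:G4-2-sp1}: the first $M=0$ term contributes $\tfrac{(-z;q)_\infty}{(q;q^2)_\infty}$, and the second contributes the same prefactor times $\tfrac{z_1+z_2}{(-z_1z_2;q)_1}=\tfrac{z+1}{1+z}=1$, so the two add to $\tfrac{2(-z;q)_\infty}{(q;q^2)_\infty}$. Taking instead $z_1=z$, $z_2=-1$ turns $z_2^{N_2+N_3+N_5+N_6}$ into $(-1)^{N_2+N_3+N_5+N_6}$, reproducing the sign in \eqref{eq:G4-2-sp2}; now the $M=0$ terms evaluate to $\tfrac{(z;q)_\infty}{(q;q^2)_\infty}$ and, since $\tfrac{z_1+z_2}{(-z_1z_2;q)_1}=\tfrac{z-1}{1-z}=-1$, its negative, which cancel to give $0$.

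Next I would treat \eqref{eq:G4-2-sp3} and \eqref{eq:G4-2-sp4} via \eqref{eq:VOA-1-3-T-1}, which already carries the structural sign $(-1)^{N_1+N_2+N_6}$. Setting $z_1=z$, $z_2=1$ leaves that sign intact and, after the $M=0$ collapse, yields $\tfrac{(z;-q)_\infty}{(-q;q^2)_\infty}-\tfrac{(-z;-q)_\infty}{(-q;q^2)_\infty}$, which is \eqref{eq:G4-2-sp3}. Setting $z_1=z$, $z_2=-1$ multiplies the structural sign by $(-1)^{N_2+N_3+N_5+N_6}$; since $(-1)^{N_1+N_2+N_6}(-1)^{N_2+N_3+N_5+N_6}=(-1)^{N_1+N_3+N_5}$, this matches the sign pattern of \eqref{eq:G4-2-sp4}, and the $M=0$ evaluation gives $\tfrac{(-z;-q)_\infty}{(-q;q^2)_\infty}+\tfrac{(z;-q)_\infty}{(-q;q^2)_\infty}$, using $\tfrac{z_1+z_2}{(-z_1z_2;-q)_1}=\tfrac{z-1}{1-z}=-1$ together with the minus sign already in front of the second series of \eqref{eq:VOA-1-3-T-1}.

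Given Theorem~\ref{th:VOA-1}, there is essentially no hard step; the only thing demanding care is the bookkeeping of the various $(-1)$-factors — in particular identifying the sign in \eqref{eq:G4-2-sp4} as the product of the structural sign in \eqref{eq:VOA-1-3-T-1} with the $z_2=-1$ substitution — and tracking the sign of the one surviving term from each $\tfrac{z_1+z_2}{(\pm z_1z_2;\pm q)_1}$ ratio. If a self-contained argument were preferred, one could instead rerun the Euler-summation and Lemma~\ref{le:T-Exp} computation underlying Theorem~\ref{th:VOA-1} directly with $z_2=\pm 1$; but specializing the already-proved transforms is cleaner, and I would take that route.
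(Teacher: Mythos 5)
Your proposal is correct and is exactly the paper's proof: the authors likewise specialise \eqref{eq:VOA-1-3} and \eqref{eq:VOA-1-3-T-1} at $(z_1,z_2)\mapsto(z,\pm 1)$, with the collapse to a single $M=0$ term coming from the vanishing of $(z_2^2;q^2)_M=(1;q^2)_M$ for $M\ge 1$, just as you observe. Your sign bookkeeping, including the identification $(-1)^{N_1+N_2+N_6}(-1)^{N_2+N_3+N_5+N_6}=(-1)^{N_1+N_3+N_5}$ for \eqref{eq:G4-2-sp4}, checks out.
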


\begin{proof}
In \eqref{eq:VOA-1-3} and \eqref{eq:VOA-1-3-T-1}, we take $(z_1,z_2)\mapsto (z,1)$, then \eqref{eq:G4-2-sp1} and \eqref{eq:G4-2-sp3} follow, respectively. For \eqref{eq:G4-2-sp2} and \eqref{eq:G4-2-sp4}, the only difference is that we change $z_2$ to $-1$.
\end{proof}


\end{document}